\newtheorem{thm}{Theorem}
\newtheorem{coro}[thm]{Corollary}
\newtheorem{lemma}[thm]{Lemma}
\newtheorem{propo}[thm]{Proposition}
\theoremstyle{definition}
\newtheorem{defn}{Definition}
\definecolor{red1}{rgb}{1,0.9,0.9}
\definecolor{blue1}{rgb}{0.9,0.9,1}
\definecolor{green1}{rgb}{0.9,1,0.9}
\definecolor{yellow1}{rgb}{1,1,0.9}
\definecolor{yellow2}{rgb}{1,1,0.8}
\title{A Sard theorem for graph theory}
\author{Oliver Knill}
\date{August 23, 2015}
\address{ Department of Mathematics \\ Harvard University \\ Cambridge, MA, 02138 }
\subjclass{Primary: 05C15, 57M15 } 
\keywords{Graph theory, Sard theorem, Morse theory, discrete Lagrange, quantum calculus}
\begin{document}
\maketitle

\begin{abstract}
The zero locus of a function $f$ on a graph $G$ is defined as the graph for which
the  vertex set consists of all complete subgraphs of $G$, on which $f$ changes sign and where $x,y$ 
are connected if one is contained in the other. For $d$-graphs, finite simple
graphs for which every unit sphere is a $d$-sphere,
the zero locus of $(f-c)$ is a $(d-1)$-graph for all $c$ different from the range of $f$.
If this Sard lemma is inductively applied to an ordered list functions $f_1,\dots,f_k$ 
in which the functions are extended on the level surfaces, the set
of critical values $(c_1,\dots,c_k)$ for which $F-c=0$ is not a $(d-k)$-graph is a 
finite set. This discrete Sard result allows to 
construct explicit graphs triangulating a given algebraic set. We also look at a second
setup: for a function $F$ from the vertex set to $R^k$, we give conditions for 
which the simultaneous discrete algebraic set $\{ F=c \}$ defined as the set of
simplices of dimension $\in \{k, k+1,\dots,n\}$ on which all $f_i$ change sign, 
is a $(d-k)$-graph in the barycentric refinement of $G$. This
maximal rank condition is adapted from the continuum and the graph $\{ F=c \}$ is a $(n-k)$-graph.
While now, the critical values can have positive measure, we are closer to calculus:
for $k=2$ for example, extrema of functions $f$ under a constraint $\{g=c\}$ happen at points, 
where the gradients of $f$ and $g$ are parallel $\nabla f = \lambda \nabla g$, the 
Lagrange equations on the discrete network.  As for 
an application, we illustrate eigenfunctions of $d$-graphs and especially the
second eigenvector of $3$-spheres, which by Courant-Fiedler has exactly 
two nodal regions. The separating nodal surface of the second
eigenfunction $f_1$ always appears to be a 2-sphere in experiments. 
By Jordan-Schoenfliess, both nodal regions would then be 3-balls and the double nodal curve 
$\{f_1=0,f_2=0\}$ would be an un-knotted curve in the 3-sphere. Graph theory allows 
to approach such unexplored concepts experimentally, as the corresponding question are open even
classically for nodal surfaces of the ground state of the Laplacian of a 
Riemannian $3$-sphere $M$. 
\end{abstract}

\section{Introduction}

We explore vector-valued functions $F$ on the vertex set of a finite simple graph $G =(V,E)$. Most of the notions introduced here 
are defined for general finite simple graphs. But as we are interested in Lagrange extremization,
Morse and  Sard type results in graph theory as well as
questions in the spectral theory of the Laplacian on graphs related to Laplacians of Riemannian manifolds, 
we often assume $G$ to be a $d$-graph, which is a finite simple graph, for which all unit spheres are $(d-1)$-spheres in the 
sense of Evako \cite{KnillJordan}. In a first setup, more suited for Sard,
for all except finitely many choices of $c \in R^d$, the graph $\{ F=c \} = \{ f_1=c_1, \dots , f_k = c_k \}$ 
is a $(d-k)$-graph, in a second setup, closer to classical calculus, we need to satisfy locally a maximal
rank condition to assure that the graph representing the discrete algebraic set $\{F = c \}$ is a $(d-k)$-graph. \\

The first part of the story parallels classical calculus and deals with the concept of level surface. 
It starts with a pleasant surprise when looking at a single
function: there is a strong Sard regularity for a $d$-graph $G$: given a hyper-surface 
$G_f(x) = \{ f = c \}$ defined as the graph with vertex set consisting of all simplices on which $f-c$ changes sign, 
the graph $\{ f=c \}$ is a $(d-1)$-graph if $c$ is not a value taken by $f$. The topology of $G_f(c)$ changes only for 
parameter values $c$ contained in the finite set $f(V)$. This observation was obtained when studying coloring problems 
\cite{knillgraphcoloring,knillgraphcoloring2,KnillNitishinskaya}, 
as a locally injective function on the vertex set is the same than a vertex coloring.  Before, in \cite{indexexpectation,
indexformula}, we just looked at the edges, on which $f$ changes sign and then completed the graph artificially.
Now, we have this automatic. We hope to apply this to investigate nodal regions of eigenfunctions of the Laplacian of a graph,
where we believe the answers to be the same for compact $d$-dimensional Riemannian manifolds or finite $d$-graphs. 
With the context of level surfaces one has the opportunity to look at nodal surfaces of eigenfunctions, which are 
also known as Chladni surfaces bounding nodal regions. \\

With more than one function, the situation changes as the singular set typically becomes larger in the discrete.
This is where the story splits. In the commutative setup, where we look at the zero locus of all functions simultaneously, 
Sard fails, while in the setup, where an ordered set of functions is considered, Sard will be true:
as in the classical Sard theorem \cite{Morse1939,Sard1942}, the set of critical values has zero measure.
The difference already is apparent if we take two random functions on a discrete $3$-sphere for
example. Then simultaneous level set $\{ f=0, g=0 \; \}$ is a graph without triangles but it is
rarely a $1$-graph, a finite union of circular graphs. The reason is that the tangent space is
a finite set and the probability of having two parallel gradients at a vertex does not have zero probability. 
However, if we look at $g=0$ on the two dimensional surface $f=0$, we get a finite union of cyclic graphs. 
In general, we salvage regularity and Sard by defining the algebraic set $\{ F=c \}$ 
in a different way by recursively building hyper surfaces:
start with the hyper surface $\{ f_1=c_1 \}$, then extend $f_2$ to the new graph and look at $\{ f_2=c_2 \}$ 
inside $\{f_1 =c_1 \}$. Now a vector value $c \in R^k$ is always regular if $c$ is not in the image of $F$ applied to 
recursively defined graph. This is the discrete analog of the multi-dimensional Sard theorem in classical analysis. 
The order in which the functions are taken, matters in the discrete. But this is not a surprise as we refine in each step the graph
and therefore have to extend the functions to the barycentric refinements. Its only for sufficiently smooth functions like 
eigenfunctions of low energy eigenvalues of the graph that the answer can be expected to be independent of the ordering. \\

The graph theoretical approach is useful for making experiments: take a 3-graph $G=(V,E)$ for example
and take two real valued functions $f,g$ on the vertex set $V$. If $c$ is not in the image of $f$, we can look at the level surface $f=c$ and 
extend $g$ to a function there (vertices are now simplices and we just average the function value $g$ to extend
the function). If $d$ is not in the image of $g$, then the $1$-graph $Z=\{ g=d \}$ is a subgraph of $\{ f=c \}$.
It is a finite set of closed curves in $G_2$. In other words, each connected component is a knot in the 3-sphere.
Unlike in the continuum, we do not have to worry about cases, where the 
knot intersects or self intersects. The Sard theorem assures that this will never happen in the discrete. We only have to
assure that the value $c$ is not in $f(V)$ and $d$ is not in $g(V)$.  \\

\begin{figure}[ph]
\scalebox{0.50}{\includegraphics{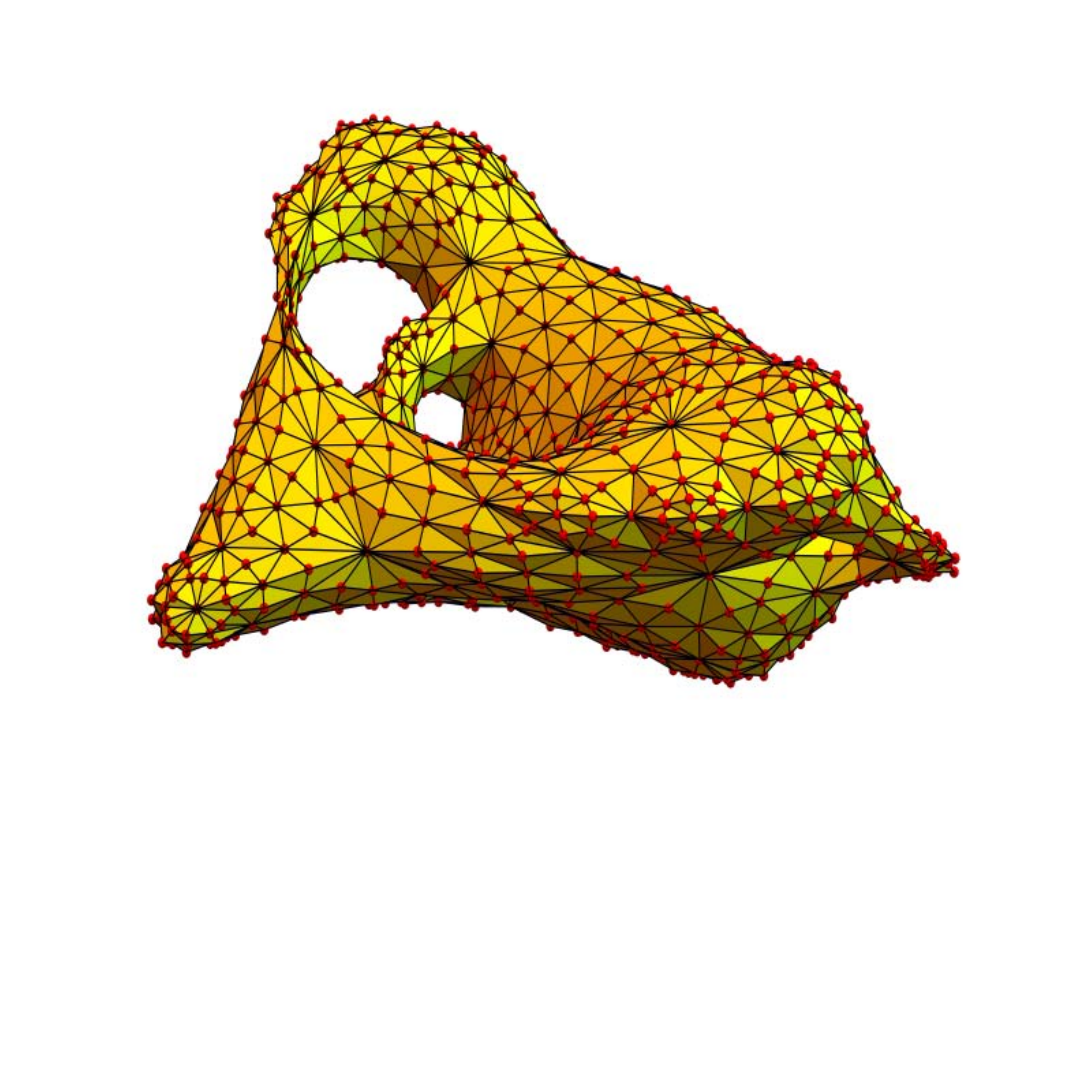}}
\caption{
We see an example of a level surface $G_f(c)=\{ f=c \}$ in a 3-sphere $G$.
The graph $G$ is a Barycentric refinement of a 3D octahedron with 8 vertices with
simplex cardinalities $\vec{v}=(80,464,768,384)$ satisfying $\chi(G)=\sum_i (-1)^i v_i=0$. The zero Euler characteristic
is a property shared by all 3-graphs like 3-manifolds. The graph $G$ admits an involution $T$ such that $G/T$ is a discrete projective 3-space. 
The algebraic level set $G_f(c)$ displayed here is a $2$-graph, a two-dimensional surface within the Barycentric refinement $G_1$ of $G$. 
The ambient sphere $G_1$ already has $1696$ vertices and $10912$ edges and $18432$ triangles, $9216$ tetrahedra.
The surface $\{ f = c \}$ is a discrete analogue of a $2$-dimensional projective variety. By the Sard lemma, 
the level sets $G_f(c) = \{ f=c \}$ are always nonsingular if $c \notin f(V)$. Furthermore  the 
surface $G_f(c)$ is $4$-colorable. }
\end{figure}

The second setup, which defines discrete algebraic sets $Z=\{ f_1=c_1,\dots ,f_{k}=c_{k} \}$ in the Barycentric refinement 
$G_1$ of $G$ requires a few definitions. We need conditions under which these sets are nonsingular in the sense that they again form 
a $(d-k)$-graph, graphs for which the unit sphere $S(x)$ is a sphere at every vertex. In general, the situation is 
the same as in the continuum, where varieties are not necessarily manifolds. 
The definition is straightforward: define $\{F=c \; \}$ as the set of simplices 
of dimension in $\{ k,\dots ,n \}$ in $G$ on which all functions $f_j$ change sign simultaneously. 
The graph $\{F = c\}$ is a subgraph of the Barycentric refinement $G_1 = G \times 1$ of $G$. 
Conditions for regularity could be formulated locally in terms of spheres in spheres. 
In some sense, a locally projective tangent bundle with discrete projective spaces associated to unit spheres replaces
the tangent bundle. It is still possible to define a vector space structure at each point and define a 
discrete gradient $\nabla f$ of a function. If $G$ is a $d$-graph, $x$ is a complete $K_{d+1}$ subgraph, then $\nabla f(x,x_0)$ 
is defined as the vector $\nabla f = \langle {\rm sign}(f(x_1)-f(x_0)), \dots,{\rm sign}(f(x_d)-f(x_0)) \rangle$. 
The local injectivity condition means that $\nabla f(x,x_0)$ is not zero for all $d$-simplices containing $x_0$.
This leads to the Sard Lemma~({\ref{sardlemma}}).
But now, when changing $c$, we also need to look at the values $c_i=f(x_i)$ taken by the function $f$ and investigate whether 
they are critical. What happens if $c$ passes a value $c_i = f(x_i)$? If $f<c_i$ is homotopic to $f \leq c_i$, then 
the set $\{ f = c \} \cap S(x)$ is a $(d-1)$-sphere which by
Jordan-Brower-Schoenflies \cite{KnillJordan} divides $S(x)$ into two complementary balls $S^-(x)$, $S^+(x)$.
The Poincar\'e-Hopf index $1-\chi(S^-(x)) = i_f(x)$ \cite{poincarehopf} is then zero. 
At a local minimum of $f$ for example, the graph $S^-(x)$ is empty so that the index is $1$. At a local maximum, $S^+(x)$ is empty and the 
index is either $1$ or $-1$ depending on whether the dimension of $G$ is even or odd. In two dimensions, where we look at discrete 
multi-variable calculus, maxima and minima have index $1$ and saddle points have index $-1$. \\

The set of hypersurfaces $M_j=\{ f =c_j \}$ form a contour map for which the individual leaves in general have different homotopy types. 
Topological transitions can happen only at values $f(V)$ of $f$ on the vertex set $V$ of the graph. 
We can relate the symmetric index $j_f(x)=[i_f(x)+i_{-f}(x)]/2$ at a vertex $x$ of $G$ with the $(d-2)$-dimensional 
graph $\{ f=c_k \}$ in the unit sphere $S(x)$. We have called this the graph $B_f(x)$ and showed that
the graph can be completed to become a $d$-graph. This completion can be done now more elegantly.
As pointed out in \cite{eveneuler}, for $d=4$ for example, we get for every 
locally injective function $f$ a $2$-dimensional surface $B_f$, the disjoint union of $B_f(x)$. 
We see that a function $f$ not only defines $(d-1)$-graphs $G_f(c)=\{f=c\}$ but also $(d-2)$-graphs 
$B_f(x)$ called ``central surfaces" for every vertex $x$.  \\

If we have more than one function as constraints, the singularity structure of $\{f_1=c_1, \dots, f_k =c_k \;\}$ 
is more complicated and resembles the classical situation, where singularities can occur. Also the Lagrange setup, where
we mazimize or minimize functions under constraints, is very similar to the classical situation.
The higher complexity entering with 2 or more functions is no surprise as also
for classical algebraic sets defined as the zero locus of finitely many polynomials, the case of several functions is harder to analyze. 
In calculus, when studying extrema of a function $f$ under constraints $g$, following Lagrange, one 
is interested in critical points of $f$ and $g$ as well as places, where the gradients are parallel. 
Lets assume that we have two functions $f,g$ on the vertex sets of a geometric graph. The intersection $\{ f=c, g=d \}$ 
with $S(x)$ is then a sphere of co-dimension $2$. If we have $2$ constraints and $G$ is $4$-dimensional for example, 
then $S(x)$ is a $3$-sphere and $S(x) \cap \{ F=0 \}$ is a knot inside $S(x)$. This can already be complicated. By triangulating
Seifert surfaces associated by a knot, one can see that any knot can occur as a co-dimension $2$ curve of a graph. 
Briskorn sphere examples allow to make this explicit in the case of torus knots. 
With more than one function, Poincar\'e-Hopf indices form a discrete $1$-form valued grid because changing any of the $c_j$ can change
the Euler characteristic.  This allows to express the Euler characteristic as a discrete line integral in the discrete target set of $F$.
There are now many Poincar\'e-Hopf theorems, for every deformation path, there is one.  \\

If we look at the set $F=c$, where all functions change sign,
singular values for $F=(f_1, \dots ,f_k)$ are vectors $c=(c_1,\dots,c_k)$ in $R^k$ for which the graph $\{ F=c \}$ is 
not a $(d-k)$-graph or values $c=F(x)$ taken on by $F$ on vertices $x$. 
Lets look at the very special example, where all $f_j$ are the same function.
Now, near the diagonal $c_i=c$, there is an entire neighborhood of parameter values, where regularity fails. 
We see that the Sard statement does not hold in this commutative setting. We therefore also look at the non-commutative 
setup, where we fix $f_1=c_1$ first, then look at the level surface $\{ f_2=c_2 \}$ on the surface $\{f_1=c_1 \}$ and proceed inductively. 
Sard is now more obvious, but the sets $F=0$ depend on the order, in which the functions $f_j$ have been chosen. 
This order dependence is no surprise in a quantum setting, if we look at $f_j$ as observables. It simply depends in 
which order we measure and fix the $f_j$.  \\

There are analogies to Morse theory in the continuum: 
in the case of one single function, we can single out a nice class of graphs and functions, which 
lead to a discrete Morse theory. The geometric graph $G$ as well as functions on vertices are the 
only ingredients. One can now assign a Morse index $m(x)$ at a critical point and have $i_f(x) = (-1)^{m(x)}$. 
The requirement is an adaptation of a reformulation of the definition of being Morse means in the continuum that for a small
sphere $S(x)$, the set $S(x) \cap \{ f(y)=f(x) \}$ is a product $S_{m-1} \times S_{d-1-m}$. For example
for $m=1,d=2$, we have a saddle point, where $S(x)$ intersects the level surface in $4$ points forming 
$S_0 \times S_0$. In the enhanced picture, where we look at the graph $G_1$, we have this regularity 
more likely. One can extend the function to the new graph and repeat until one get a Morse function. \\

We can use graphs described by finitely many equations in order to construct examples of graphs
to illustrate classical calculus like Stokes or Gauss theorem or surfaces to illustrate 
classical multivariable calculus. 
The notion of $d$-graphs has evolved from \cite{elemente11,indexexpectation,eveneuler,knillgraphcoloring} 
to \cite{knillgraphcoloring2}, where it reached its final form. While finishing up \cite{KnillJordan} a literature
search showed that spheres had been defined in a similar way already by Evako earlier on. \\

The enhanced Barycentric refinement graph $G_1 = G \times K_1$ is a regularizes graph which helps to study 
Jordan-Brouwer questions in graph theory \cite{KnillJordan} and introduce a product structure
on graphs which is compatible with cohomology \cite{KnillKuenneth}. It also illustrates
the Brouwer fixed point theorem \cite{brouwergraph}, as the fixed simplices on $G$ can be seen as fixed points on $G_1$. 
The simplex picture is also useful as the Dirac operator $D=d+d^*$ on a graph builds on it
\cite{DiracKnill,knillmckeansinger}. The graph spectra of successive Barycentric refinements converges
universally, only depending on the size of the largest complete subgraph \cite{KnillBarycentric}.\\

An other application of the present Sard analysis is a simplification of
\cite{indexformula} which assures that the curvature $K$ is identically zero for 
$d=(2m+1)$-dimensional geometric graphs: write the symmetric index $j_f(x)=i_f(x)+i_{-f}(x)$ in terms of
the Euler characteristic of the $(d-2)$-dimensional graph $G_f(x)$ obtained by taking the hypersurface $\{ y \; | \;  f(y)=f(x) \}$
in the unit sphere $S(x)$. For odd-dimensional geometric $d$-graphs,
this symmetric index is is $-\chi(B_f(x))/2=j_f(x)$, while for even-dimensional graphs, it is $1-\chi(B_f(x))/2=j_f(x)$. 
One of the corollaries given here is that if $f$ is locally injective, then
$B_f(x)$ is always a geometric $(d-2)$-graph if $G$ is a $d$-graph. 
In \cite{indexexpectation}, we called the uncompleted graphs polytopes and showed that one can complete them. 
Now, since the expectation of $j_f(x)$ is curvature, \cite{indexexpectation,colorcurvature},
this gives an immediate proof that odd-dimensional $d$-graphs have constant $0$ curvature. 
Zero curvature follows for odd-dimensional graphs. The regularity allows to interpret Euler curvature as 
an average of two-dimensional sectional curvatures. Euler characteristic written as the expectation
of these sectional curvature averages is close to Hilbert action. It shows that the quantized 
functional ``Euler characteristic" is not only geometrically relevant but that it has physical potential. 

\section{Level surfaces} 

The study of level surfaces $\{ f=c \}$ in a graph is not only part of discrete differential topology.
It belongs already to discretized multivariable calculus, where surfaces $\{ f(x,y,z) = 0 \}$ in space or 
curves $\{ f(x,y) = 0 \}$ in the plane are central objects. Some would call calculus on graphs ``quantum calculus".
We want to understand under which conditions a sequence of $k$ real valued functions 
$F=(f_1,\dots,f_k)$ on the vertex set of a graph leads to co-dimension $k$ graphs 
$\{ F=0 \}$. We start with the case $k=1$, where we have a level surface $\{ f = c \}$.

\begin{defn}
Given a finite simple graph $G=(V,E)$, 
define the {\bf level hyper surface} $\{ f=c \}$ as the graph $G'=(V',E')$ for which the 
vertex set $V'$ is the set of simplices $x$ in $G$, on which the function $f-c$ changes sign in the 
sense that there are vertices in $x$ for which $f-c < 0$ and vertices in $x$ for which $f-c>0$. 
A pair of simplices $(x,y) \in V' \times V'$ is in the edge set $E'$ of $G'$ if $x$ is a subgraph 
of $y$ or if $y$ is a subgraph of $x$. 
In the case $c=0$, the graph $\{ f = 0 \}$ is also called the {\bf zero locus} of $f$.
\end{defn}

{\bf Remarks}. \\
{\bf 1)} Instead of taking a real-valued function, we could take a function taking values
in an ordered field. We need an ordering as we need to tell, where a function ``changes sign".  \\
{\bf 2)} Most of the time we will assume $c$ is not a value taken by $f$. An alternative 
would be to include all simplices which contain a vertex on which $f=0$. 

\begin{figure}[ph]
\scalebox{0.2}{\includegraphics{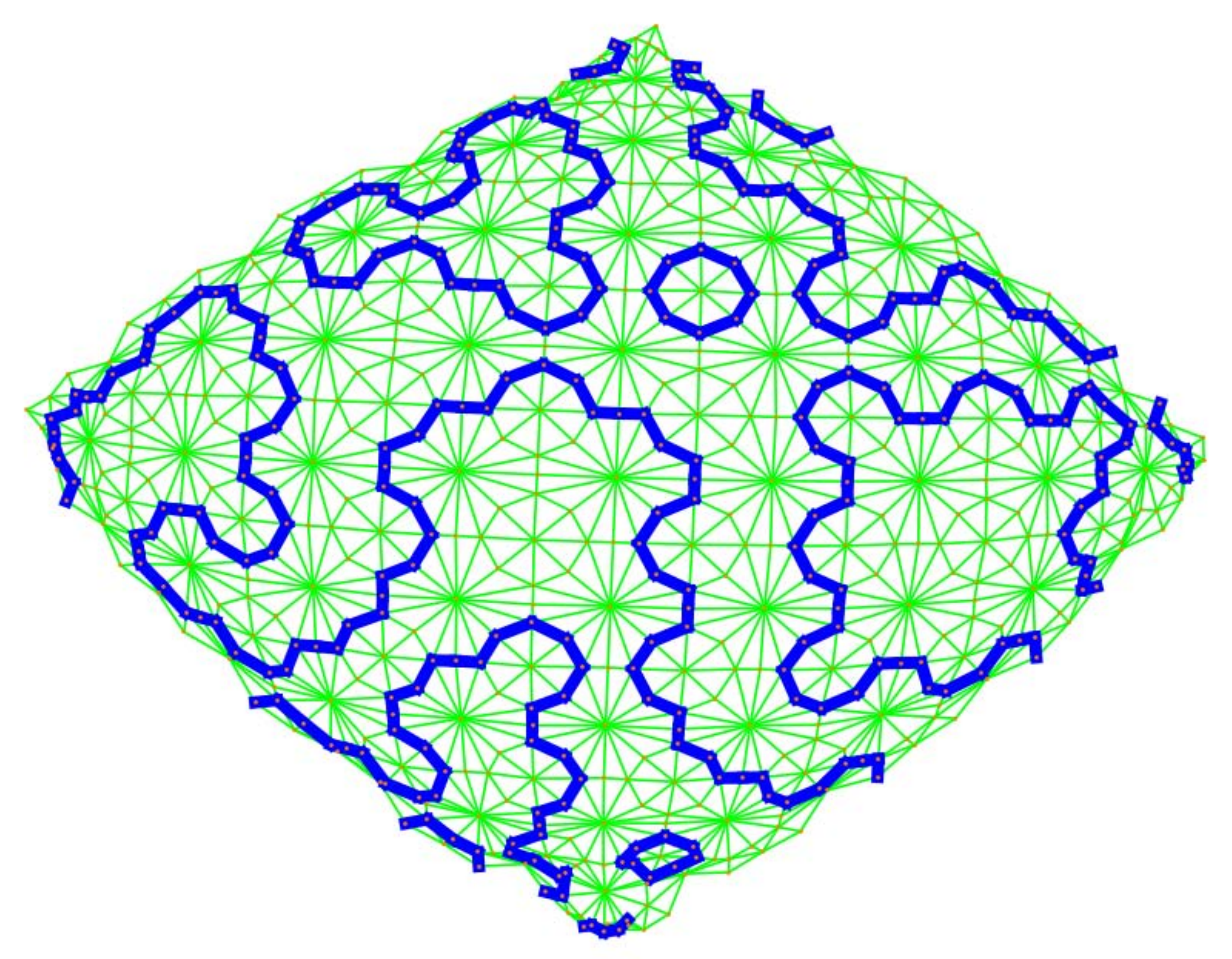}}
\scalebox{0.2}{\includegraphics{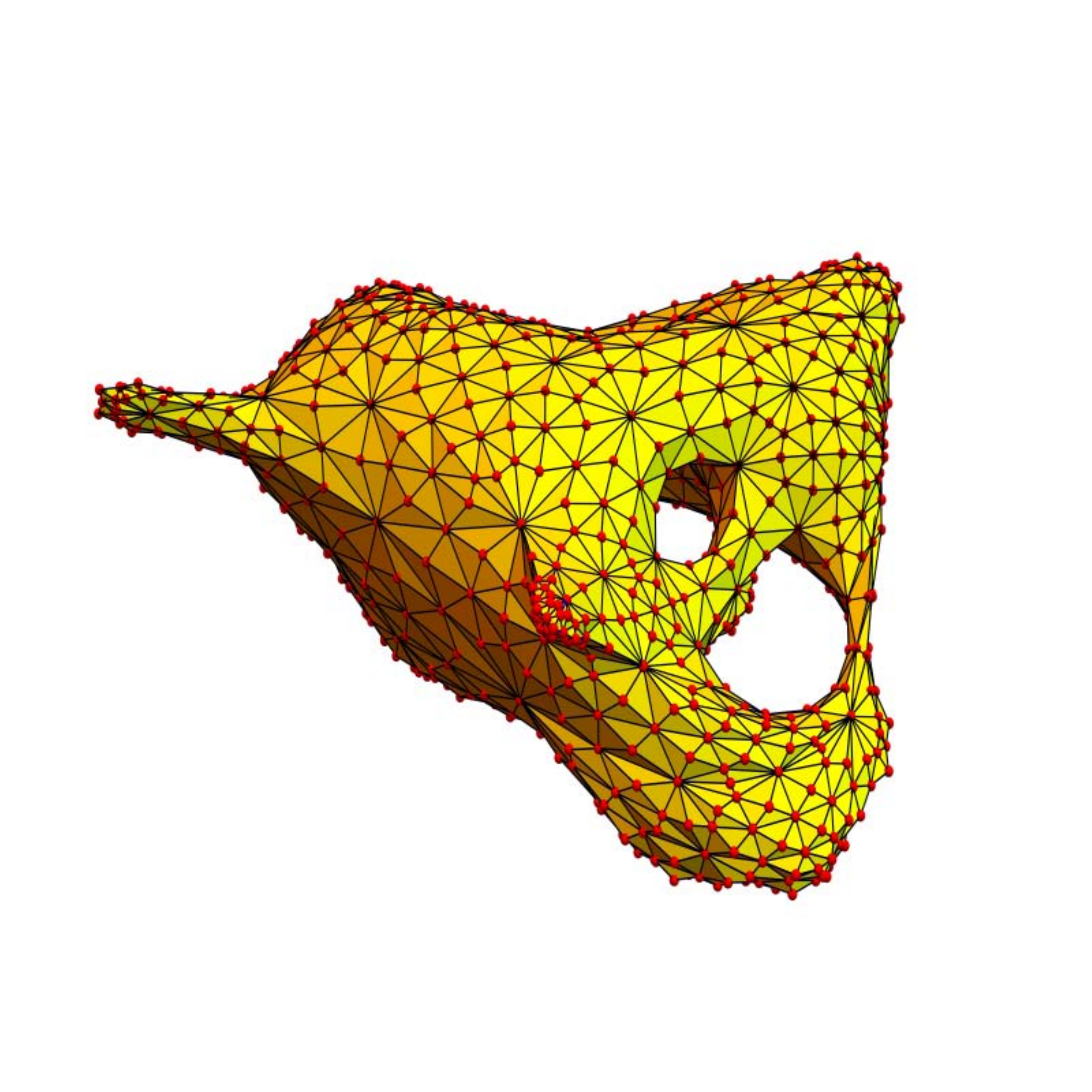}}
\caption{
A level curve in a planar graph. A level surface in a 3-sphere.
}
\end{figure}

\begin{comment}
Import["~/text3/graphgeometry/all.m"];
TopDimension=2; s=NewGraphProduct[LinearGraph[5],LinearGraph[5]]; 
RandomLevelSurfaceGraph[s]
\end{comment}

{\bf Examples.} \\
{\bf 1)} Let $G$ be a $2$-sphere like for example an icosahedron. Let $f$ be $1$ on exactly one vertex $x$
(a discrete Dirac delta function) and $0$ everywhere else. Now $\{ f=1/2 \}$ consists of all
edges and triangles containing $x$. They form a circular graph and $\{ f = 0 \} = C_{2 {\rm deg}(x)}$. \\
{\bf 2)} Let $G= C_n \times C_n \times C_n$ be a discrete $3$-torus and let $f$ be a function which is $1$
on a circular closed graph and $0$ else. Then $ \{ f =1/2 \}$ is a $2$-dimensional torus. 

\section{The Sard lemma} 

The following definitions are recursive and were first put forward by Evako. 
See \cite{KnillJordan} for our final version.  

\begin{defn}
A {\bf $d$-sphere} is a finite simple graph for which unit sphere $S(x)$ 
is a $(d-1)$-sphere and such that removing a single vertex from the graph 
renders the graph contractible. Inductively, a graph is {\bf contractible}, if there exists
a vertex such that both $S(x)$ and the graph generated with vertices without $x$
are contractible. 
\end{defn}

\begin{defn}
A {\bf $d$-graph} $G$ is a finite simple graph for which every unit
sphere $S(x)$ is a $(d-1)$-sphere. 
\end{defn}

{\bf Examples.} \\
{\bf 1)} A $1$-graph is a finite union of circular graphs, for which each connectivity 
component has $4$ or more vertices.  \\
{\bf 2)} The icosahedron and octahedron graphs are both $2$-graphs. In the first
case, the unit spheres are $C_5$, in the second case, the unit spheres are $C_4$. \\
{\bf 3)} In \cite{KnillEulerian} we have classified all {\bf Platonic d-graphs} using
Gauss Bonnet \cite{cherngaussbonnet}. Inductively, a $d$-graph $G$ is called Platonic, if there exists a 
$(d-1)$-graph $H$ which is Platonic such that all unit spheres of $G$ are isomorphic to $H$. 
In dimension $d=3$, there are only two Platonic graphs, the 16 and 600 cell. 
In dimensions $d \geq 4$, only the cross polytopes are platonic.  \\

The following Sard lemma shows that we do not have to check 
for the geometric condition if we look at level surfaces: it is guaranteed, as
long as we avoid function values in $f(V)$. Not even local injectivity is needed: 

\begin{lemma}[Sard lemma for real valued functions]
\label{sardlemma}
Given a function $f:V \to {\bf R}$ on the vertex set of a $d$-graph $G=(V,E)$. 
For every $c \notin f(V)$, the level surface $\{ f=c \}$ is either the empty graph 
or a $(d-1)$-graph.
\end{lemma}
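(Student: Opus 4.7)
The plan is to pick an arbitrary vertex $x$ of $\{f=c\}$ and verify directly from the definition that its unit sphere in $\{f=c\}$ is a $(d-2)$-sphere (the empty case is automatic). The vertex $x$ is a $k$-simplex of $G$, and because $c \notin f(V)$ its $k+1$ vertices split cleanly into $V^-(x) \sqcup V^+(x)$ with both parts nonempty. A neighbor $y$ of $x$ in $\{f=c\}$ is either a sign-changing proper subsimplex (collect these into a \emph{down-part} $D$) or a simplex of $G$ strictly containing $x$ (the \emph{up-part} $U$, which inherits the sign change from $x$ automatically). Every vertex of $D$ is a subset of $x$ and hence a subset of every vertex of $U$, so the unit sphere is the graph-theoretic join $D \star U$.

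The next step is to identify the two factors with spheres. The map $y \mapsto y \setminus x$ realizes $U$ as the barycentric refinement of the link of $x$ in $G$; since $G$ is a $d$-graph and $x$ is a $k$-simplex, this link is a $(d-k-1)$-sphere and hence so is $U$. For $D$, write a sign-changing proper subsimplex uniquely as $y = A \cup B$ with $A = y \cap V^-(x)$, $B = y \cap V^+(x)$, both nonempty and $(A,B) \neq (V^-(x), V^+(x))$. The order-reversing involution $(A,B) \mapsto (V^-(x)\setminus A,\, V^+(x)\setminus B)$ puts $D$ in graph-isomorphism with the barycentric refinement of the simplicial join $\partial \Delta_- \star \partial \Delta_+$, where $\Delta_\pm$ is the simplex on $V^\pm(x)$. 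Writing $a = |V^-(x)|$ and $b = |V^+(x)|$ with $a+b = k+1$, this is the join of an $(a-2)$-sphere and a $(b-2)$-sphere, so a $(k-2)$-sphere.

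Combining the two identifications with the standard fact that the graph join of a $p$-sphere and a $q$-sphere is a $(p+q+1)$-sphere (easy induction on $p+q$, using that the unit sphere of a vertex of $A$ inside $A \star B$ is $S_A(v) \star B$), the unit sphere $D \star U$ is a $\bigl((k-2)+(d-k-1)+1\bigr) = (d-2)$-sphere, and $\{f=c\}$ is a $(d-1)$-graph. The main obstacle is the $D$-step: one must recognize the duality $A \mapsto V^-(x)\setminus A$ that turns the asymmetric poset ``proper sign-changing subsimplices of $x$'' into a barycentric refinement of a join of boundary spheres of simplices, and then verify that the corner cases $k=1$ (where $D$ is the empty $(-1)$-sphere, consistent with $\partial \Delta_\pm$ being empty because $a=b=1$) and $k=d$ (where $U$ is empty because the link is the empty $(-1)$-sphere) feed correctly into the join dimension formula.
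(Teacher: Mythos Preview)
Your argument is correct and follows the same skeleton as the paper's proof: decompose the unit sphere of a $k$-simplex $x$ in $\{f=c\}$ as the join of a ``down-part'' (sign-changing proper faces of $x$) and an ``up-part'' (supersimplices of $x$), identify each factor as a sphere of the right dimension, and appeal to the fact that joins of spheres are spheres. The paper simply asserts that the unit sphere is such a join of spheres (with some loose dimension bookkeeping), whereas you actually justify the harder half by exhibiting the complementation map $(A,B)\mapsto(V^-(x)\setminus A,\,V^+(x)\setminus B)$ that identifies $D$ with the barycentric refinement of $\partial\Delta_-\star\partial\Delta_+$; this is the nontrivial step the paper leaves implicit, and your treatment of the boundary cases $k=1$ and $k=d$ is also more careful than the paper's.
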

\begin{proof}
We have to distinguish various cases, depending on the dimension of $x$.
If $x$ is an edge, where $f$ changes sign, then $f$ changes sign on each 
simplex containing $x$. The set of these simplices is 
a unit sphere in the Barycentric refinement $G_1$ of $G$ and therefore a $(d-1)$ sphere.
If $x$ is a triangle, then there are exactly two edges contained in $x$,
on which $f$ changes sign. The sphere $S(x)$ in $\{ f=c \}$ is a suspension 
of a $(d-2)$-sphere: this is the join of $S_0$ with $S_{d-2}$. 
In general, if $x$ is a complete subgraph $K_k$ then the unit sphere $S(x)$ 
is a join of a $(k-2)$-dimensional sphere and a $(d-k)$-dimensional sphere,
which is a $(d-2)$-dimensional sphere. As each unit sphere
$S(x)$ in $\{ f = c \}$ is a $(d-2)$-sphere, the level surface 
$\{ f=c \}$ is a $(d-1)$-graph. 
\end{proof}

{\bf Examples.}  \\
{\bf 1)} If $d=2$, and $f$ changes sign on a triangle $K_3$, then it changes
sign on exactly two of its edges. If $f$ changes sign on an edge, then it 
changes sign on exactly two of its adjacent triangles. We see that the level surface
$\{ f=c \}$ is a graph for which every vertex has exactly two neighbors. In other
words, each unit sphere is the $0$-sphere. \\
{\bf 2)} If $d=3$, and $f$ changes sign on a tetrahedron $x=K_4$, then there are
two possibilities. Either $f$ changes sign on three edges connected to a vertex
in which case we have $3$ edges and $3$ triangles in the unit sphere of $K_4$
with $4$ vertices.  A second possibility is that $f$ changes on $4$ edges and $2$ triangles
in which case the unit sphere consists of $6$ vertices. 
Now look at a triangle $x=K_3$. It is contained in exactly two tetrahedra and contains
two edges. The unit sphere is $C_4$. Finally, if $x=K_2$ is an edge, then all triangles
and tetrahedra attached to $x$ form a cyclic graph of degree $C_{2n}$ where $n$ is the 
number of tetrahedra hinging on $x$.  \\

The Sard lemma can be used for minimal colorings for which the number of colorings
is exactly known:

\begin{coro}
If $G$ is a $d$-graph and $c$ is not in the range of $f$, then 
the surface $H=\{f=c\}$ is a $(d-1)$-graph which is $d$-colorable. 
The chromatic polynomial $p(x)$ of these graphs satisfies $p(d)=d!$.
\end{coro}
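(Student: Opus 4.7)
The plan is to combine Lemma~\ref{sardlemma} with an explicit dimension-based coloring and then a rigidity argument that shows this coloring is essentially unique.

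By Lemma~\ref{sardlemma}, the level surface $H=\{f=c\}$ is a $(d-1)$-graph. Each vertex of $H$ is a simplex $x$ of $G$ on which $f-c$ changes sign, so $x$ contains at least one vertex with $f>c$ and one with $f<c$, forcing $\dim x \ge 1$. Since $G$ is a $d$-graph its simplices have dimension at most $d$, so the map $\kappa(x)=\dim(x)$ takes values in the $d$-element set $\{1,2,\dots,d\}$. Two adjacent vertices $x,y$ of $H$ satisfy $x\subsetneq y$ or $y\subsetneq x$ and therefore carry different dimensions, so $\kappa$ is a proper $d$-coloring. This yields the first half of the corollary.

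For the identity $p(d)=d!$, which I read as a statement about each connected component of $H$ (so that disconnected $H$ with $k$ components would give $(d!)^k$ total colorings), the plan is to show that every proper $d$-coloring of a component agrees with $\kappa$ up to a global permutation of the palette $\{1,\dots,d\}$. A maximal chain $x_1\subset x_2\subset\dots\subset x_d$ with $\dim x_i=i$ on simplices where $f$ changes sign forms a $d$-clique in $H$, and in any proper $d$-coloring its vertices must carry the $d$ distinct colors, which determines a bijection $\sigma\colon\{1,\dots,d\}\to\{1,\dots,d\}$ with $x_i$ colored $\sigma(i)$. I would then propagate $\sigma$ by elementary flips: replacing $x_i$ in a chain by another simplex $x_i'$ with $x_{i-1}\subset x_i'\subset x_{i+1}$ keeps the other $d-1$ colors fixed, which forces $x_i'$ to inherit the color $\sigma(i)$. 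Every vertex of $H$ lies in some maximal chain, so this argument pins down the coloring across the whole component.

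The main obstacle is the connectedness of the flip graph on maximal chains within each connected component of $H$, i.e.\ the fact that any two such chains can be linked by a sequence of single-index replacements. This should come out of the join decomposition that already appears in the proof of Lemma~\ref{sardlemma}: the unit sphere in $H$ around a dimension-$i$ vertex $x$ splits as the join of an $(i-2)$-sphere (from sign-changing subsimplices of $x$) with a $(d-i-1)$-sphere (from sign-changing supersimplices), and this product-like local structure provides exactly the freedom needed to exchange any single chain element while keeping the others fixed. Once flip connectedness is in place, each of the $d!$ permutations of the palette yields exactly one proper $d$-coloring of the component, and the desired evaluation $p(d)=d!$ follows.
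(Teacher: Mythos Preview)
Your dimension coloring $\kappa(x)=\dim x$ is exactly what the paper does, and the paper's proof stops there: it does not argue for $p(d)=d!$ at all. So on $d$-colorability you match the paper verbatim, and on the chromatic-polynomial claim you are already going further than the paper itself.

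Your rigidity outline for $p(d)=d!$ is correct in spirit, and your per-component reading is necessary---as you observe, a disconnected $H$ with $k$ components has $p(d)=(d!)^{k}$, so the paper's statement must be read componentwise. The gap you single out, flip connectedness of maximal chains, is real but can be filled: it is equivalent to connectedness of the dual graph on the $K_d$-facets of a component of $H$, where two facets are adjacent when they share a $K_{d-1}$. In a $(d-1)$-graph every $K_{d-1}$ lies in exactly two $K_d$'s (the intersection of the $d-1$ unit spheres is a $0$-sphere), and any two facets through a common vertex $v$ are dual-connected because they correspond to facets of the $(d-2)$-sphere $S(v)$, where one inducts on dimension. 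Together with vertex-connectedness of the component this yields dual connectedness, and then your propagation argument pins down the coloring up to a global permutation. So your plan is sound; it just needs that induction written out---something the paper never supplies either.
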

\begin{proof}
To every vertex of $G_1$, we can attach a ``dimension" which is the dimension
of the simplex in $G$ it came from. This dimension is the coloring. It remains
a coloring when looking at subgraphs. 
\end{proof}

{\bf Examples.} \\
{\bf 1)} For a level surface $f=0$ on a $d$-dimensional graph, we get a $(d-1)$ graph which is $(d+1)$-colorable. 
For example, for $d=2$, the graph can be colored with $3$ colors. This is minimal as any triangle
already needs $3$ colors. It implies the graph is Eulerian: the vertex degree is even everywhere. \\
{\bf 2)} If $\Omega$ is the number of colorings with minimal color of $G$, we can 
for every vertex $x$ look at the {\bf index} $i_f(x) = 1-\chi(S^-_f(x))$ where $S^-_f(x)$ is the
set of vertices $y$ on the sphere $S(x)$, where $f(y)<f(x)$.  Given a geometric graph $G$ of 
dimension $d$, then $G_1$ is $d+1$ colorable. The number of colorings is
$(d+1)!$. We can look at the list of indices which are possible on each point and call this the
index spectrum. The set of vertices $x$ where $\{ f<c \}$ changes the homotopy type are called {\bf critical points}
of $f$.  If the index is nonzero, then we have a critical point because the Euler characteristic is a homotopy invariant. 
But there are also critical points with zero index, as in the continuum. 
Finding extrema of $f$ can be done by comparing the function values of all 
vertices where $i_f(x)$ is not zero or more generally, where $S^-_f(x)$ is not
contractible. At a local minimum $S^-f(x)$ is empty.  

\section{The central surface}

Besides the surface $G_f(c) = \{ f=c \}$, there is for every vertex $x$ a central surface of co-dimension $2$ 
which is obtained by looking at level surfaces $B_f(x)$ obtained by looking at $\{ f = c \}$ inside
the unit sphere $S(x)$. This object was introduced in \cite{indexformula,eveneuler} for $4$-graphs $G$, 
where the central surface is a $2$-dimensional graph, a disjoint union of $2$-dimensional subgraphs 
$B_f(x)$ of the $3$-dimensional unit spheres $S(x)$. 

\begin{defn}
A real-valued function $f$ on the vertex set of a graph $G=(V,e)$ is called {\bf locally injective},
if $f(x) \neq f(y)$ for $(x,y) \in E$. An other word for a locally injective function is a
{\bf coloring}. 
\end{defn}

\defn{Given a $d$-graph $G$, a locally injective function $f$ and a vertex $x_0$, define 
the {\bf central surface} $B_f(x_0)$ as the 
level surface $\{ f = f(x_0) \}$ in $S(x_0)$. It is a $(d-2)$ graph. The graph 
consists of all simplices in $S(x_0)$ for which $f$ takes values smaller or 
larger than $f(x_0)$.\\ }

Each of these surfaces are subgraphs of their unit sphere $S(x_0)$. We have one surface for
each vertex $x_0$. On each sphere $S(x,y) = S(x) \cap S(y)$, we can look at the intersection of 
$B_f(x)$ and $B_f(y)$. It consists of all simplices in $S(x,y)$ where both $f(z)-f(y)$
and $f(z)-f(x)$ change sign. Sometimes they can be joined together along a circle. 
For example, given a $3$-graph $G$, then the union $B_f$ of all $B_f(x)$ consists
of all edges and triangles so that the max and min 
on each larger tetrahedron are attained in the edge or triangle. 
The following definition was first done in \cite{poincarehopf}:

\begin{defn}
Given a finite simple graph $G$ and a locally injective real-valued function $f$ on the vertex
set $V$, the {\bf Poincar\'e-Hopf index} is defined as
$i_f(x) = 1-\chi(S^-_f(x))$, where $S^-_f(x)$ is generated by $\{ y \in S(x) \; | \; f(y)<f(x) \}$. 
The {\bf symmetric index} is defined as $j_f(x) = (i_f(x) + i_{-f}(x))/2$.
\end{defn}

The Poincar\'e-Hopf theorem \cite{poincarehopf} 
tells that $\sum_{x \in V} i_f(x)=\chi(G)$. Since this also holds for the function
$-f$, we have 
$$ \sum_{x \in V} j_f(x) = \chi(G) \; . $$
The following remark made in \cite{indexformula} expresses $j_f(x)$ as the 
Euler characteristic of a central surface, provided the graph is geometric: 

\begin{propo}
Given a $d$-graph $G$ and a locally injective function $f$. If $B_f(x)$ 
is the central surface, then  for odd $d$, we have 
$$ j_f(x) = -\chi(B_f(x))/2 \; , $$
for even $d$, we have 
$$ j_f(x) = 1-\chi(B_f(x))/2 \; . $$
\end{propo}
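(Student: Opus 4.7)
The plan is to relate $\chi(B_f(x))$ to $\chi(S^-_f(x))$ and $\chi(S^+_f(x))$ via an inclusion-exclusion on the barycentric refinement of $S(x)$, and then to combine with the known value $\chi(S(x)) = 1 + (-1)^{d-1}$ for a $(d-1)$-sphere.

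First I would observe that, since $f$ is locally injective, the shifted function $f - f(x)$ is nowhere zero on the vertex set of $S(x)$. By Lemma~\ref{sardlemma} applied to the $(d-1)$-sphere $S(x)$, the central surface $B_f(x) = \{f = f(x)\}$ inside $S(x)$ is a $(d-2)$-graph (possibly empty), and the vertex set of $S(x)$ is partitioned into $S^-_f(x)$ and $S^+_f(x)$.

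The main step is a Mayer--Vietoris style decomposition of the Whitney complex of the barycentric refinement $S(x)_1$. Its vertex set is the set of simplices of $S(x)$, and these partition into three classes: $A$ = simplices with all vertices in $S^-_f(x)$, $B$ = simplices with all vertices in $S^+_f(x)$, and $C$ = simplices on which $f - f(x)$ changes sign, which is exactly the vertex set of $B_f(x)$. The key observation is that no element of $A$ is comparable (under inclusion) to any element of $B$, since a subsimplex of an all-negative simplex is itself all-negative and conversely. Let $U_1$ (respectively $U_2$) denote the subcomplex of the Whitney complex of $S(x)_1$ consisting of chains entirely inside $A \cup C$ (respectively $B \cup C$). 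Because chains cannot mix $A$ and $B$, $U_1 \cup U_2$ is the entire complex; because chains in $C$ alone use only vertices from $C$, $U_1 \cap U_2$ is the Whitney complex of $B_f(x)$. Moreover, $U_1$ is the barycentric refinement of the closed star (inside the Whitney complex of $S(x)$) of the full subcomplex $S^-_f(x)$, and analogously for $U_2$. Since the closed star of a full subcomplex deformation retracts onto that subcomplex, $\chi(U_1) = \chi(S^-_f(x))$ and $\chi(U_2) = \chi(S^+_f(x))$.

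Combining via inclusion-exclusion, together with $\chi(S(x)_1) = \chi(S(x)) = 1 + (-1)^{d-1}$, yields
\[
1 + (-1)^{d-1} \;=\; \chi(S^-_f(x)) + \chi(S^+_f(x)) - \chi(B_f(x)).
\]
Substituting $\chi(S^-_f(x)) = 1 - i_f(x)$ and $\chi(S^+_f(x)) = 1 - i_{-f}(x)$, and using $2 j_f(x) = i_f(x) + i_{-f}(x)$, one gets
\[
2 j_f(x) \;=\; 1 - (-1)^{d-1} - \chi(B_f(x)),
\]
which evaluates to $-\chi(B_f(x))/2$ when $d$ is odd and to $1 - \chi(B_f(x))/2$ when $d$ is even, as claimed.

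The main obstacle is to justify rigorously, within the graph/Whitney-complex framework used here, that the closed star of the full subcomplex $S^-_f(x)$ inside $S(x)$ has the same Euler characteristic as $S^-_f(x)$ itself. In the topological/simplicial category this is a standard deformation retraction, but in the discrete setting it is cleanest to argue directly: for each simplex $\sigma$ in the closed star that is not contained in $S^-_f(x)$, one can collapse $\sigma$ onto its intersection with $S^-_f(x)$, giving a sequence of elementary contractions preserving $\chi$. Once this is in place, the remaining steps are a routine inclusion-exclusion and a parity case split.
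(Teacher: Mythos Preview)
Your argument is correct and, in fact, more complete than what appears in the paper. The paper's ``proof'' only records that $B_f(x)$ is a $(d-2)$-graph by applying the Sard lemma to $g(y)=f(y)-f(x)$ on the $(d-1)$-sphere $S(x)$; the actual index identity $j_f(x)=1-\chi(S(x))/2-\chi(B_f(x))/2$ is not derived there but imported from the earlier reference \cite{indexformula}. You supply that missing derivation.

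Your route is a clean Mayer--Vietoris/inclusion--exclusion on the order complex of the face poset of $S(x)$, split according to the sign pattern of $f-f(x)$. The two nontrivial points---that chains cannot contain both an all-negative and an all-positive simplex, and that the full subcomplex on $A\cup C$ retracts onto the barycentric refinement of $S^-_f(x)$ via the monotone poset map $\sigma\mapsto\sigma\cap S^-_f(x)$---are both handled correctly. The retraction you describe in the last paragraph is exactly this closure-operator argument, so the ``main obstacle'' you flag is not really an obstacle; the monotone retraction $r(\sigma)=\sigma\cap S^-_f(x)$ with $r(\sigma)\subset\sigma$ already gives the homotopy equivalence (hence equality of Euler characteristics) you need, without further case-by-case collapsing.

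In short: the paper appeals to an external reference for the formula and uses Sard only to certify that $B_f(x)$ is geometric; your proof is self-contained, gives the formula directly, and incidentally re-derives the Sard consequence along the way.
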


\begin{proof}
$B_f(x)$ is a $(d-2)$-graph in $S(x)$, which by assumption is a $(d-1)$-graph. 
Since $f$ is locally injective, the function $g(y)=f(y)-f(x)$ does not take
the value $0$ on $S(x)$. By the Sard lemma, the graph $B_f(x)$ is a $(d-2)$-graph.
\end{proof}

\begin{coro}
For a $d$-graph $G$ with odd dimension $d$, the curvature 
$$  K(x) = 1- \frac{S_0(x)}{2} + \frac{S_1(x)}{3}- \frac{S_2(x)}{4} + \dots $$ 
(where $S_k(x)$ are the number of $K_{k+1}$ subgraphs of the unit sphere $S(x)$)
has the property that it is constant zero for every vertex $x$. 
\end{coro}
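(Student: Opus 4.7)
The plan is to combine the preceding proposition with an index-expectation identity and an induction on the odd dimension $d$. Recall from the earlier work cited in the introduction (\cite{indexexpectation,colorcurvature}) that the Euler curvature $K(x)$ is the expectation $E[i_f(x)]$ of the Poincar\'e-Hopf index taken over a natural uniform distribution on locally injective functions $f$. Since this distribution is invariant under $f \mapsto -f$, we may replace $i_f(x)$ by the symmetric index and write
\[
K(x) \;=\; E[j_f(x)], \qquad j_f(x) = \tfrac{1}{2}(i_f(x)+i_{-f}(x)).
\]

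Next I would invoke the proposition just proved. For odd $d$ and any locally injective $f$, we have $j_f(x) = -\chi(B_f(x))/2$, where $B_f(x)$ is the central surface, that is, the level set $\{f=f(x)\}$ taken inside the unit sphere $S(x)$. Because $S(x)$ is a $(d-1)$-graph and local injectivity ensures $f-f(x)$ avoids the value $0$ on $S(x)$, the Sard lemma (Lemma~\ref{sardlemma}) guarantees that $B_f(x)$ is a $(d-2)$-graph. Crucially, when $d$ is odd, $d-2$ is again odd, which sets up an induction.

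The induction runs on odd $d \ge 1$. Base case $d=1$: a $1$-graph is a finite disjoint union of cycles, every vertex has degree $2$, so $K(x) = 1 - S_0(x)/2 = 1 - 1 = 0$. Inductive step: assume the statement holds in all odd dimensions less than $d$, where $d\ge 3$ is odd. Then for every locally injective $f$ and every vertex $x$, the central surface $B_f(x)$ is a $(d-2)$-graph of odd dimension, so by the inductive hypothesis its curvature vanishes identically, and Gauss--Bonnet (\cite{cherngaussbonnet}) gives $\chi(B_f(x)) = 0$. Hence $j_f(x)=0$ for every such $f$, and therefore $K(x) = E[j_f(x)] = 0$.

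The main obstacle, if any, is conceptual rather than computational: one must ensure that the index-expectation identity $K(x)=E[i_f(x)]$ and its $\pm f$ symmetrization are legitimate in the $d$-graph setting, and that the Sard lemma applies cleanly to the sphere $S(x)$ so that $B_f(x)$ is genuinely a $(d-2)$-graph. Once these two ingredients are in place, the induction through Gauss--Bonnet closes the loop automatically, and no direct combinatorial manipulation of the sums $S_k(x)$ is needed.
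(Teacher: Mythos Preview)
Your argument is correct and follows the same core strategy as the paper: write $K(x)=E[j_f(x)]$, use the preceding proposition to express $j_f(x)=-\chi(B_f(x))/2$ for odd $d$, and conclude from $\chi(B_f(x))=0$.

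The only difference is in how the vanishing of $\chi(B_f(x))$ is justified. The paper simply invokes the known fact that an odd-dimensional $(d-2)$-graph has Euler characteristic zero and stops there. You instead make this step self-contained by running an induction on odd $d$ and closing the loop with Gauss--Bonnet: once $K\equiv 0$ is established in dimension $d-2$, Gauss--Bonnet gives $\chi=0$ for all $(d-2)$-graphs, which then feeds the next step. This is a nice bootstrapping that simultaneously proves $K\equiv 0$ and $\chi=0$ for all odd-dimensional $d$-graphs without importing the latter as an external fact; the paper's version is shorter but relies on that fact being already available.
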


\begin{proof}
The expectation $j_f(x)$ is curvature \cite{indexexpectation,colorcurvature}.
Since $j_f(x)$ is identically zero as the Euler characteristic of an odd dimensional 
$(d-2)$-graph, also curvature is identically zero. 
\end{proof} 

Note that in the continuum, the Euler curvature is not even defined for odd dimensional 
graphs as the definition involves a Pfaffian \cite{Cycon}. Having the value $0$ in the 
discrete is only natural. \\

\begin{defn}
A function $f$ on the vertex set is called a {\bf Morse function} if it is locally injective and if at 
every critical point, there is a positive integer $m$, such that $B_f(x) = \{ f(y)=f(x) \}$ within $S(x)$ is a 
product $S_{m-1} \times S_{d-1-m}$ or the empty graph if $m=0$ or $m=d$. 
The integer $m$ is called {\bf Morse index} of the critical point $x$.
\end{defn}

Depending on whether $m$ is odd or even, we have $\chi(B_f(x)) = 4$ or $0$ so
that the index $1-\chi(B_f(x))/2=-1$ if $m$ is odd and $1$ if $m$ is even.
When adding a critical point, this corresponds to add a $m$-dimensional handle.
It changes the Euler characteristic by $(-1)^m$ and changes the $m$'th cohomology by $1$.

\section{Lagrange} 

In this section we try to follow some of the standard calculus setup when extremizing
functions with or without constraints. But it is done in a discrete setting, where space is 
a graph. As school calculus mostly deals 
with functions of two variables, we illustrate things primarily for $2$-dimensional graphs, 
even so everything can be done in any dimensions.  \\

There are three topics related to critical points in two dimensions: A) extremizations without constraints, 
B) equilibrium points of vector fields and C) extremization problems with constraints which are called Lagrange problems.
In the case A), can look at extrema of a function $f$ on the vertex set of a $2$-graph, 
in the case B) we look at equilibrium points of a pair $F=(f,g)$ of functions on the vertex set of a $2$-graph,
and finally in the case C), we look at extrema of a function $f$ on the vertex set under the constraint 
$g=c$ on a $2$-graph. \\

Lets first look at the ``second derivative test" on graphs. Recall that
a vertex $x$ in a graph is a critical point of a function $f$, if $\{ f<f(x) \}$ and $\{f \leq f(x) \}$ are not
homotopic. This is equivalent to the statement that 
$S^-_f(x) = \{ y  \in S(x)\; | \; f(y)<f(x) \}$ is a graph which is not contractible. 
The analogue of the discriminant $D$ is the Poincar\'e-Hopf index $i_f(x) = 1-\chi(S^-_f(x))$. 
Here is the analogue of the second derivative test:

\begin{propo}[Second derivative test]
Let $G$ be a $d$-graph and assume $f$ is locally injective and $x$ is a critical point. There
are three possibilities: \\
a) If $S^-_f(x)$ is a $(d-1)$-sphere, then $x$ is a local maximum.  \\
b) If $i_f(x)$ is positive and $S^-_f(x)$ is empty then $x$ is a local minimum.  \\
c) If $d=2$ and if $i_f(x)$ is negative, then $x$ is a type of saddle point. 
\end{propo}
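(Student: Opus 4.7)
The plan is to handle each of the three cases separately by decomposing the $(d-1)$-sphere $S(x)$ (which is a sphere because $G$ is a $d$-graph) into the two complementary subgraphs $S^-_f(x)$ and $S^+_f(x)$; by local injectivity these partition $V(S(x))$. Each case then becomes an assertion about which part of $S(x)$ gets filled out.

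For (a) the substantive claim is that a $(d-1)$-sphere cannot sit inside another $(d-1)$-sphere as a proper subgraph, so $S^-_f(x) = S(x)$ and every neighbor of $x$ lies strictly below $f(x)$, giving a local maximum. I would prove the claim by induction on $d$. For $d=2$ a proper subgraph of a cycle $C_n$ is a disjoint union of arcs, hence not a cycle. For larger $d$, pick any $y \in V(S^-_f(x))$ and observe that its unit sphere inside the subgraph $S^-_f(x)$ is a $(d-2)$-sphere sitting inside the $(d-2)$-sphere $S_{S(x)}(y)$, so by the inductive hypothesis they coincide. This means every neighbor of $y$ in $S(x)$ already belongs to $S^-_f(x)$. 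Because $S(x)$ is connected, propagating along edges yields $V(S^-_f(x)) = V(S(x))$.

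For (b) the hypothesis that $S^-_f(x)$ is empty is literally the definition of a local minimum: no neighbor of $x$ lies strictly below. The index value $i_f(x) = 1 - \chi(\emptyset) = 1 > 0$ is then automatic, so the positivity assertion is merely confirming consistency, not driving the conclusion. For (c), when $d=2$ the unit sphere $S(x)$ is a cyclic graph $C_n$ with $n \geq 4$, and any subgraph of $C_n$ is a disjoint union of $k$ arcs (or isolated vertices) whose Euler characteristic equals $k$. Hence $i_f(x) = 1 - k$, and $i_f(x) < 0$ forces $k \geq 2$. This means the low-value neighbors split into at least two arcs around $S(x)$, separated by at least two high-value arcs, so the sign of $f - f(x)$ flips at least four times as one walks around $S(x)$. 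This is the discrete picture of a saddle.

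The main obstacle is the no-proper-subsphere claim underlying (a): the inductive step requires that the unit sphere of $y$ computed inside the subgraph $S^-_f(x)$ really equals the intersection of $S_{S(x)}(y)$ with $V(S^-_f(x))$, and that this intersection is itself a $(d-2)$-sphere. Verifying this cleanly may require repeating the join-of-spheres bookkeeping from the proof of the Sard lemma, or alternatively invoking a graph-theoretic Jordan--Brouwer statement about embedded spheres. If that route is awkward, a fallback is to run an Euler-characteristic argument using $V(S(x)) = V(S^-_f(x)) \sqcup V(S^+_f(x))$ together with the fact, guaranteed by the Sard lemma applied to $f$ restricted to $S(x)$, that the interface between the two subgraphs is empty when $S^-_f(x)$ already carries the full sphere structure.
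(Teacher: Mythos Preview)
Your argument is correct and in fact more thorough than the paper's. The paper's proof concentrates almost entirely on $d=2$, arguing exactly as you do for (c) that a proper induced subgraph of $C_n$ is a disjoint union of $k$ arcs with Euler characteristic $k$, so $i_f(x)=1-k$; parts (a) and (b) in general dimension are left implicit, with only the side remark that in higher dimensions $S^-_f(x)$ can have Euler characteristic $1$ without being contractible and can have negative Euler characteristic. Your inductive ``no proper subsphere'' lemma for (a) is therefore a genuine addition rather than a restatement. The worry in your final paragraph is unnecessary: because $S^-_f(x)$ is by construction the \emph{induced} subgraph of $S(x)$ on $\{y:f(y)<f(x)\}$, the unit sphere of $y$ computed inside $S^-_f(x)$ is automatically $S_{S(x)}(y)$ restricted to $V(S^-_f(x))$, and the standing hypothesis that $S^-_f(x)$ is a $(d-1)$-sphere makes this a $(d-2)$-sphere by definition---no Sard bookkeeping or Jordan--Brouwer is needed. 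The connectedness of $S(x)$ used in the propagation step holds for $(d-1)$-spheres with $d\geq 2$, since removing a vertex leaves a contractible and hence connected graph.
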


\begin{proof}
For a $2$-dimensional graph, the index is nonzero if and only if $x$ is a critical point because
a subgraph of a circular graph has Euler characteristic $1$
if and only if it is a contractible graph. It has Euler characteristic $0$ if and only if it is either
empty or the full circular graph. In all other cases of subgraphs of a circular graph, 
the Euler characteristic counts the number of connectivity components.  \\
In higher dimensions, there are cases of graphs $S_f^-(x)$ having Euler characteristic $1$ but
not being contractible. In higher dimensions, $S_f^-(x)$ can have negative Euler characteristic
so that the index $i_f(x)$ can become larger than $1$. 
\end{proof}

{\bf Example.} \\
{\bf 1)} For $d=2$, the standard saddle point is $i_f(x)=-1$. The function $f$ changes sign on 4 points.
A discrete ``Monkey saddle" has index $i_f(x)=-2$. 
It is obtained for example at a vertex $x$ for which the unit ball is a wheel graph with $C_6$ boundary such that
$f(y)$ is alternating smaller or bigger than $f(x)$. \\
{\bf 2)} If $d$ is odd and $f$ has a local maximum, $i_f(x)=-1$. This is analogue to the continuum, where 
$D$ is the determinant of the Hessian. \\

For simplicity, we restrict to a simple $2$-dimensional situation, where
we have two functions $f,g$ on the vertex set $V$ of a $2$-graph $G=(V,E)$. 
We can think of $F=(f,g)$ it as a vector field and see the equilibrium points are the intersection
of null-clines as in the continuum. Classically, the critical 
points of $f$ under the constraint $g=c$ 
are the places, where these null-clines are tangent or are degenerate in that one of the
gradients is zero. 

\defn{
Let $F=(f,g)$ be two functions on the vertex set of a $2$-graph. 
The set of equilibrium points $\{ (x,y) \; | \; F(x,y)=(0,0) \; \}$ is a
zero-dimensional graph given by the set of triangles, where $f$ 
and $g$ both change sign.}

In more generality we have defined the set $G_F=\{ f_1=0, \dots ,f_k=0 \}$ as the graph whose vertex set
consists of the set of simplices of dimension in $\{ k,\dots ,d\}$ on which all functions $f_j$ change sign
and where two simplices are connected, if one is contained in the other. It is possible for 
example that for a $d$-graph, the set $G_F$ consists of {\bf all} 
$d$-dimensional simplices in $G$. This is still a $0$-graph, a graph with no edges because 
no $d$-simplex is contained in any other $d$-simplex. 

\defn{
Let $G$ be a $2$-graph and let $f$ be a locally injective function on the vertex set of $G$.
Define the {\bf continuous gradient} of $f$ in a triangle $t=(xyz)$ with origin $x$ as 
$df=\langle f(y)-f(x),f(z)-f(x)) \rangle$. This 
assigns to each triangle in $G$ a vector with two real components. 
By taking signs, the vector $\nabla f$ becomes an element in the finite vector space
$Z_2^2$. This is the {\bf discrete gradient} on the triangle $t$ rooted 
at the vertex $x$. \\
}

{\bf Example.} \\
{\bf 1)} If $G$ is a $2$-graph and $x$ belongs to a triangle $t$ containing two edges
which both contain $x$, then $\nabla f(x,t) = \langle 1,1 \rangle$. Two functions have
a parallel gradient in $t$, if and only the sign changes in $t$ happen
on the same edges of the triangle. \\

{\bf Remarks}: \\
{\bf 1)} A function $f$ on the vertex set $V$ is a $0$-form. The exterior derivative
$df$ is a function is a 1-form, a function  on the edges of the graph. The exterior derivative 
depends on a choice of orientations on complete subgraphs. For $1$ forms in particular, where edges
have been ordered at first,
the situation at a vertex $x$ defines then $df( (x,y)) = f(y)-f(x)$. When restricting to a $d$-simplex, we get
$d$ real numbers, we as in the continuous forms the {\bf gradient}. \\
{\bf 2)} The ordering of the coefficients of the gradient vector 
depends on the orientation of the triangle. 
This is similar to the classical case, where a triangle in a triangulation of a surface
defines a normal vector at $x$, once a vertex $x$ of the triangle and an orientation of 
the triangle is given. \\
{\bf 3)} The analogue of rank $d(f,g)=2$ means that simultaneous sign changes happen on one edge of
a triangle only, so that the two sign change is in the same direction on that edge. 
Geometrically this implies that the spheres $f=0,g=0$ in $S(x)$ intersect transversely. 
Two discrete gradients in $Z_2^d$ are parallel if and only they are the same because
the only nonzero scalar is $1$. \\

The classical Lagrange analysis shows that the critical points of $F$, the place where the Jacobean 
$dF$ has rank $0$ or $1$ are candidates for maxima of $f$ under the constraint $g=c$.
In the same way as in the continuum, we can write down Lagrange equations for any number 
of functions on a $d$-graph. The most familiar case for two functions: 

\begin{propo}
Let $G$ be a $d$-graph. 
If the discrete gradients $\nabla f$, $\nabla g$ are nowhere parallel, then 
$F = \{ f = 0, g = 0 \}$ is a $(d-2)$-graph. 
\end{propo}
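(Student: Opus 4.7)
The plan is to verify the $(d-2)$-graph condition of $F = \{f=0, g=0\}$ by checking unit spheres at each vertex, namely each simplex $x = K_k$ in $G$ with $k \geq 3$ on which both $f$ and $g$ change sign; we must show $S_F(x)$ is a $(d-3)$-sphere. I would use the join decomposition of the barycentric refinement: $S_{G_1}(x)$ is the join of the barycentric boundary $\partial x$, a $(k-2)$-sphere, with the link of $x$ in $G$, a $(d-k)$-sphere (since $G$ is a $d$-graph). The restriction to $F$ then decomposes $S_F(x)$ as the join of a super piece (proper super-simplices of $x$ in $G$ on which both $f, g$ change sign) and a sub piece (proper sub-simplices of $x$ of dimension at least $2$ on which both $f, g$ change sign). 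The super piece is easy: because $f, g$ already change sign on $x$, they change sign on every super-simplex, so the super piece is the entire link, a $(d-k)$-sphere.

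The sub piece is the heart of the proof and coincides with $F$ applied to the smaller $(k-2)$-graph $\partial x$ with the restricted pair $(f, g)$. Inside $\partial x$, Lemma \ref{sardlemma} applied separately to $f$ and $g$ yields individual level subgraphs $\Sigma_f, \Sigma_g$ which are $(k-3)$-graphs, and in fact $(k-3)$-spheres by a direct combinatorial check (a nowhere-zero sign function on a simplex slices its boundary like a hyperplane). I would proceed by induction on $k$: the base case $k=3$ is vacuous, with empty sub piece equal to the $(-1)$-sphere; the inductive step requires that the non-parallel gradient hypothesis descends from $d$-simplices of $G$ to the relevant simplices of $\partial x$, which holds because gradient components on a sub-simplex are coordinate restrictions of gradients on ambient $d$-simplices extending it, so two parallel restrictions would force parallelism somewhere on the full $d$-simplex.

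The main obstacle is that the inductive hypothesis delivers the sub piece only as a $(k-4)$-graph, whereas the join identity $S^{(k-4)+(d-k)+1} = S^{d-3}$ requires a genuine $(k-4)$-sphere. I would argue the additional sphere structure directly, treating the non-parallel condition as the discrete analog of classical transversality. It restricts the admissible pairs of sign patterns $(A_f, A_g)$ on $V(x)$ so that the common sign-change simplices of dimension $\geq 2$ always form a cyclically connected combinatorial hypersurface; in low $k$ this is a short enumeration (for $k=4$ the non-parallel condition forces the two sign splits $A_f, A_g$ to differ, producing exactly two common triangles, i.e.\ $S^0$), and for general $k$ it is the combinatorial statement that two non-proportional linear sign functions cut the boundary of a simplex in transverse codimension-one subspheres whose intersection is a codimension-two subsphere. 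Once the sub piece is confirmed to be a $(k-4)$-sphere, joining with the $(d-k)$-sphere super piece yields $S^{d-3}$ as required, so $F$ is a $(d-2)$-graph.
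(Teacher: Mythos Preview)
The paper does not give a dedicated proof of this proposition; it is stated without proof and subsumed by the later Theorem (``Regularity in the commutative setup''), whose proof proceeds by induction on the ambient dimension $d$: for a vertex $x$ of $G$, one passes to the $(d-1)$-sphere $S(x)$, uses the ${\rm curl}({\rm grad})=0$ relation to show the gradients remain linearly independent on the $(d-1)$-simplex $X\setminus x$, and invokes the inductive hypothesis there.

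Your route is genuinely different. You work vertex-by-vertex in $F$ itself, i.e.\ simplex-by-simplex in $G$, and use the join decomposition $S_{G_1}(x)\cong \partial x \star {\rm link}(x)$ exactly as in the proof of Lemma~\ref{sardlemma}, then induct on the simplex dimension $k$ rather than on $d$. This is closer in spirit to the Sard lemma argument and makes the structure of the unit spheres in $F$ more transparent: the ``super piece'' is immediately the full $(d-k)$-link, and only the ``sub piece'' inside the combinatorial $(k-2)$-sphere $\partial x$ carries content. What you gain is a cleaner localisation of the problem to the purely combinatorial situation of two sign functions on a single simplex; what the paper's induction on $d$ gains is that it never needs to upgrade ``graph'' to ``sphere'' in the inductive step, since it always lands on a unit sphere of $G$ and applies the hypothesis there wholesale.

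You correctly identify the one real gap in your scheme: the inductive hypothesis only yields a $(k-4)$-\emph{graph} for the sub piece, while the join arithmetic needs a $(k-4)$-\emph{sphere}. Your sketch (two non-proportional sign functions slice $\partial K_k$ in transverse codimension-one subspheres, hence their intersection is a codimension-two subsphere) is the right intuition and is essentially a finite case check for small $k$ followed by the observation that $\partial K_k$ is highly symmetric; but as written it is an assertion rather than an argument. If you want to close it within your framework, the cleanest fix is to strengthen the inductive statement to ``on any $(d')$-sphere which is the boundary of a simplex, $\{f=0,g=0\}$ is a $(d'-2)$-sphere'', which is exactly what the combinatorics of two independent sign splits on a simplex delivers. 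The paper's proof, by contrast, is quite terse and does not spell out how the $(d-1-k)$-graph conclusion on $S(x)$ assembles into the required $(d-k-1)$-\emph{sphere} condition on unit spheres of $F$ either, so your proposal is in fact more explicit about where the difficulty lies.
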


It should be possible to estimate the measure of the set of global critical values
if $F=(f_1,\dots,f_k)$ are functions on the vertex set of a finite simple graph $G$. 
Instead of doing so, we will look later at a Sard setup for which the critical 
values have zero measure. \\

Lets look at the example of two functions $f,g$ on a $3$-graph.
The set $\{ F=0 \}$ is the set of simplices, where both $f,g$ change sign.
The condition $dF$ having maximal rank means that the surface $f=0$ on $S(x)$
and the sphere $g=0$ in $S(x)$ intersect in a union of $1$-spheres.
Here is the Lagrange setup with two functions in three dimensions: 

\begin{propo} If $F=(f,g)$ are two functions on the vertex set of a $3$-graph
and if $dF$ has maximal rank at every $x$ and $(c,d)$ is not in $F(V)$,
then $F=c$ is either empty of a $1$-graph, a finite collection of circles.
\label{case2-3}
\end{propo}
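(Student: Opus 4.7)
My plan is to verify the defining property of a $1$-graph directly: at every vertex $\sigma$ of $\{F=c\}$, the unit sphere $S(\sigma)$ computed inside $\{F=c\}$ must be a $0$-sphere, i.e., two disjoint vertices. By definition, vertices of $\{F=c\}$ are simplices of $G$ of dimension $2$ or $3$ on which both $f-c$ and $g-d$ change sign, so I split into the two cases $\dim\sigma=2$ and $\dim\sigma=3$.

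For the triangle case, the unit sphere $S(\sigma)$ in $G_1$ consists of the proper subsimplices of $\sigma$ (edges and vertices, of dimension less than $2$ and thus excluded from $\{F=c\}$) together with the tetrahedra strictly containing $\sigma$. In a $3$-graph the link of a triangle is a $0$-sphere, so exactly two such tetrahedra $T_1,T_2$ exist; this is already implicit in the proof of Lemma~\ref{sardlemma} applied to $f$ alone, which furthermore shows that $\{T_1,T_2\}$ is a $0$-sphere in $\{f=c\}$. Because $f$ and $g$ both change sign on $\sigma$, they automatically change sign on each $T_i\supset\sigma$, so $S(\sigma)\cap\{F=c\}=\{T_1,T_2\}$, a $0$-sphere.

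For the tetrahedron case, if $T\in\{F=c\}$ is a $3$-simplex then $S(T)$ in $G_1$ contains only proper subsimplices, and only the four triangular faces of $T$ can lie in $\{F=c\}$. Any two distinct faces are disconnected in $\{F=c\}$, since neither contains the other, so it suffices to show that exactly two of the four faces carry simultaneous sign changes of $f$ and $g$. Here I invoke the maximum-rank hypothesis: rooted at each $x_0\in T$, the discrete gradients $\nabla f(T,x_0),\nabla g(T,x_0)\in Z_2^3$ are linearly independent. Introducing the sign-quadrant labels $(s_i,t_i):=(\mathrm{sign}(f(x_i)-c),\mathrm{sign}(g(x_i)-d))\in\{+,-\}^2$ at each vertex $x_i$ of $T$, I would run a finite case analysis organized by the class sizes $|S^\pm|,|T^\pm|$: every distribution yielding three or four triangular faces on which both $f$ and $g$ change sign forces $\nabla f$ and $\nabla g$ to coincide at the rooting at some extremum of $f$ or $g$ on $T$, which by the paper's convention ("the only nonzero $Z_2$-scalar is $1$") means the gradients are parallel and the hypothesis is contradicted; the remaining distributions give exactly two such faces.

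The principal obstacle I expect is this last combinatorial matching: aligning the algebraic condition that $\nabla f$ and $\nabla g$ are nowhere parallel with the geometric count of exactly two bi-changing sub-triangles. The delicate sub-case is the $2$-$2$ split of either $f$ or $g$ on $T$, where no vertex of $T$ is an extremum of the function on $T$, so the convenient "extremum rooting" argument is not directly available; there the non-parallel hypothesis must be combined with the sign pattern itself, by comparing gradients at two different rootings in $T$, to rule out the configuration with four bi-changing faces. Once this combinatorial matching is settled, the triangle and tetrahedron cases combine to give that $\{F=c\}$ is a $1$-graph, i.e., a finite disjoint union of circular graphs.
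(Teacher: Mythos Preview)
Your approach is essentially the same as the paper's: split by simplex dimension, dispatch the triangle case via the two tetrahedra linking it in a $3$-graph, and for a tetrahedron use the maximal-rank hypothesis to force exactly two bi-changing triangular faces. The only cosmetic difference is that the paper handles the lower bound explicitly---if a single face carries both sign changes then a common edge forces a second such face---while you fold this into the sign-pattern enumeration and instead flag the $2$--$2$ split as the delicate sub-case.
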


\begin{proof}
The set $\{ F=c \}$ is a graph whose vertices are the triangles ($K_3$ subgraphs of $G$)
or tetrahedra  ($K_4$ subgraphs of $G$), where both $f$ and $g$ change sign. 
Let $x$ be a tetrahedron in $\{F =c \}$. 
The maximal rank condition prevents parallel gradients like 
$\nabla f=\langle 1,1,1 \rangle = \nabla g = \langle 1,1,1 \rangle$
so that it is impossible to have 3 triangles in $\{ F=c \}$ inside $x$.
Assume a single triangle is present where both $f,g$ change sign, then there
is a common edge, where both $f,g$ change sign and a second triangle must
also be in $\{ F = c \}$. We see that there are exactly two triangles $y,z$ present
on which both $f,g$ change sign. This means the vertex $x$ has exactly two neighbors
$y,z$. Each of the triangles $y,z$ has two neighbors, the tetrahedra attached to them.
We see that $\{ F =c \}$ is a graph with the property that every unit sphere is the
zero sphere $S_0$. Therefore, it is a finite collection of circular graphs. 
\end{proof}

The Lagrange problem extremizing $f$ under the constraint $g=c$ is the  following:

\begin{propo}[Lagrange]
Given two functions $f,g$ on a $2$-graph. 
Extremizing $f$ under the constraint $g=c$ happens on triangles, where
$df$ and $dg$ are parallel (which includes the case when $df=0$ or $dg=0$). 
In other words, extrema happen on Lagrange critical points. 
\end{propo}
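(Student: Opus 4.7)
The plan is to combine the Sard Lemma (\ref{sardlemma}) with a direct local computation on a single triangle. For $c \notin g(V)$, Sard guarantees that $H := \{g = c\}$ is a $1$-graph, hence a disjoint union of cycles whose vertices are the edges and triangles of $G$ on which $g-c$ changes sign. Following the extension convention used earlier in the paper, extend $f$ to the vertices of $H$ by averaging, $\tilde f(\sigma) = |\sigma|^{-1}\sum_{v \in \sigma} f(v)$, and interpret ``extremizing $f$ under $g=c$'' as locating critical points of $\tilde f|_H$. Because $H$ is a $1$-graph, such a critical point at a vertex $v$ is simply a strict local maximum or minimum of $\tilde f$ among the two cycle-neighbors of $v$.

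I would then establish the Lagrange condition by its contrapositive at triangle-vertices. Let $t \in H$ be a triangle and root it as $t = (x_0, x_1, x_2)$ with $g(x_0) < c < g(x_1), g(x_2)$; then $\nabla g$ has sign pattern $(+,+)$ at $t$, and the two cycle-neighbors of $t$ in $H$ are the edges $e_1 = (x_0, x_1)$ and $e_2 = (x_0, x_2)$. The elementary identity
\[
  \tilde f(t) - \tilde f(e_i) = \tfrac{1}{6}\bigl( 2 f(x_{3-i}) - f(x_0) - f(x_i) \bigr), \qquad i = 1, 2,
\]
reduces everything to sign analysis. If $\nabla f$ is non-parallel to $(+,+)$, i.e.\ $\nabla f \in \{(+,-),(-,+)\}$, then $f(x_0)$ lies strictly between $f(x_1)$ and $f(x_2)$, and the two displayed differences acquire opposite signs; $\tilde f(t)$ is sandwiched strictly between $\tilde f(e_1)$ and $\tilde f(e_2)$, so $t$ is not critical. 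If instead $\nabla f \in \{(+,+),(-,-)\}$, parallel to $\nabla g$, then $f(x_0)$ is the unique min or unique max of $\{f(x_0),f(x_1),f(x_2)\}$ and the two differences share a common sign, making $t$ a genuine candidate extremum. The degenerate parenthetical cases $df=0$ or $dg=0$ get absorbed into the parallel bucket by convention and produce no strict extrema, so the statement holds trivially for them.

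The hard part will be handling critical points of $\tilde f|_H$ that sit at \emph{edge}-vertices of $H$ rather than at triangle-vertices, since the discrete gradient has been defined only on triangles and the parallel condition has no direct local meaning at an edge of $G$. My plan is to argue one of two ways: either show that an edge extremum of $\tilde f|_H$ forces at least one of the two adjacent triangles in $H$ to itself satisfy the parallel condition (using the same sign-analysis on the two triples $(x_0,x_1,y_i)$), so that the ``Lagrange critical locus'' of triangles already witnesses the extremum; or, failing that, qualify the proposition to the generic situation in which $f$ and $g$ are in sufficiently general position that no edge of $H$ can be a strict local extremum of $\tilde f$. Making one of these reductions precise, together with the book-keeping for the mixed-degenerate gradients, is the main remaining technical step.
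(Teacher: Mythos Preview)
The paper itself gives no proof of this proposition; it is stated and then the text moves directly to the maximal rank theorem. So there is nothing to compare against, and your write-up has to stand on its own.

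Your contrapositive argument at a triangle vertex is correct and is the right core of the statement: if $\nabla f$ and $\nabla g$ are not parallel on $t$ (rooted at the $g$-minority vertex), then $\tilde f(t)$ is strictly between its two cycle-neighbours, so $t$ is not an extremum. One small slip: in the parallel case you assert ``the two differences share a common sign''. That is false. With $\nabla f=(+,+)$, say $f(x_0)=0,\ f(x_1)=10,\ f(x_2)=1$, one gets $2f(x_2)-f(x_0)-f(x_1)=-8$ and $2f(x_1)-f(x_0)-f(x_2)=19$, opposite signs. Parallel gradients make $t$ a \emph{candidate}, nothing more; fortunately the proposition only needs the direction you did prove.

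The edge-vertex issue, however, is a genuine gap, and neither of your two proposed escapes works. Here is a concrete obstruction. Take an edge $e=(x_0,x_1)$ with $g(x_0)<c<g(x_1)$ and let $t_i=(x_0,x_1,y_i)$, $i=1,2$, be the two triangles containing $e$; assume $g(y_1),g(y_2)>c$, so the $g$-minority in each $t_i$ is $x_0$. Choose $f$ with $f(y_1),f(y_2)<f(x_0)<f(x_1)$. Then $\tilde f(e)=(f(x_0)+f(x_1))/2$ strictly exceeds $\tilde f(t_i)=(f(x_0)+f(x_1)+f(y_i))/3$ for $i=1,2$, so $e$ is a strict local maximum of $\tilde f$ on $H$. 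Yet on each $t_i$, rooted at $x_0$, one has $\nabla g=(+,+)$ while $\nabla f=(+,-)$, so neither adjacent triangle is Lagrange-critical. This kills your first plan. Since the configuration is open in $(f,g)$, it also kills the ``generic position'' plan.

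The upshot is that under your averaging interpretation of ``$f$ restricted to $\{g=c\}$'', the proposition as written is not literally true: strict local extrema of $\tilde f$ on $H$ can sit at edge-vertices with no Lagrange-critical triangle nearby. The paper is being informal here; the statement is closer in spirit to the maximal rank theorem that follows, where ``parallel gradients'' is exactly the obstruction to $\{f=c_1,\,g=c_2\}$ being a $0$-graph. If you want a provable version, you should either restrict the conclusion to triangle-vertices of $H$ (which your argument already handles cleanly), or reinterpret ``extremum'' via the commutative level-set picture rather than via the averaged extension.
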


The following maximal rank condition is the same as in the continuum. It tells
that the graph formed by the simplices having dimension in $\{ k, \dots,  d \}$ 
on which all functions $f_k$ change sign simultaneously is a $(n-k)$-graph
if the rank of the set of gradients in $Z_2^k$ is maximal: 

\begin{thm}[Regularity in the commutative setup]
Assume $G$ is a $d$-graph and $F$ is a $R^k$-valued function on the vertex set. 
If $\nabla F$ has maximal rank on every $d$-simplex, then $\{ F=c \}$ is a 
$(d-k)$-graph. 
\end{thm}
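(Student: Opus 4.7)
The plan is to verify directly that every unit sphere in $\{F=c\}$ is a $(d-k-1)$-sphere, which is the defining condition for $\{F=c\}$ to be a $(d-k)$-graph. Fix a vertex $x$ of $\{F=c\}$, corresponding to a simplex of $G$ of some dimension $j$ with $k \le j \le d$ on which every $f_i$ changes sign. The unit sphere $U(x)$ of $x$ in $\{F=c\}$ splits naturally into two disjoint pieces: $U^-(x)$, the proper subsimplices of $x$ of dimension at least $k$ on which all $f_i$ still change sign, and $U^+(x)$, the proper supersimplices of $x$ in $G$, for which the sign-change condition is automatic because they contain $x$. Since every element of $U^-(x)$ is a face of $x$ and $x$ is a face of every element of $U^+(x)$, the graph $U(x)$ is the graph join of $U^-(x)$ and $U^+(x)$. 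The factor $U^+(x)$ is the link of the $j$-simplex $x$ in the ambient $d$-graph $G$, so it is a $(d-j-1)$-sphere; using the standard fact that the graph join of a $p$-sphere and a $q$-sphere is a $(p+q+1)$-sphere, the problem reduces to showing that $U^-(x)$ is a $(j-k-1)$-sphere.

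I would prove the latter by induction on $k$. The base case $k=1$ is exactly Lemma~\ref{sardlemma}: $U^-(x)$ is then the graph join of $\partial\Delta^{V_1^+}$ and $\partial\Delta^{V_1^-}$, the boundary spheres of the two halves of the bipartition $V(x)=V_1^+\sqcup V_1^-$ induced by $f_1$, which gives a $(j-2)$-sphere as required. In the inductive step, the maximal rank hypothesis on $\nabla F$ is used as a combinatorial transversality statement: the characteristic functions $\chi_i:V(x)\to Z_2$ of the $k$ level bipartitions, together with the constant function $\mathbf{1}$, are linearly independent. Peeling off $f_1$ and restricting to either half $V_1^\pm$, the remaining $k-1$ functions $f_2,\dots,f_k$ inherit an analogous maximal rank condition. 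Applying the inductive hypothesis separately on each of the two halves and then joining the two resulting spheres across the $f_1$-bipartition realises $U^-(x)$ as an iterated graph join of spheres whose total dimension is exactly $j-k-1$.

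The main obstacle is making the inductive step fully rigorous. Specifically, one must verify that the maximal rank condition for the $k$ functions on every $d$-simplex of $G$ actually propagates to the correct rank condition for the $k-1$ restricted functions on each half $V_1^\pm$, and that the combinatorial join identification of $U^-(x)$ in terms of the two halves is clean, with no missed or double-counted faces and with the dimension constraint ``$\dim y \ge k$'' matching up correctly with the analogous ``$\dim y \ge k-1$'' constraint for the restricted system. Once this inheritance of transversality is in hand, the iterated join-of-spheres construction produces the required $(d-k-1)$-sphere at every vertex of $\{F=c\}$, and $\{F=c\}$ is a $(d-k)$-graph.
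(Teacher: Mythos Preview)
Your route differs from the paper's: the paper inducts on the ambient dimension $d$, restricting all $k$ functions to the unit sphere $S(v)$ of a vertex $v\in G$, checking that the gradients remain linearly independent on the $(d-1)$-simplex $X\setminus v$, and invoking the theorem on the $(d-1)$-graph $S(v)$. You instead fix a vertex $x$ of $\{F=c\}$ (a $j$-simplex of $G$), decompose its unit sphere as the join $U^-(x)*U^+(x)$ of sub- and super-simplices, identify $U^+(x)$ with the link of $x$ (a $(d-j-1)$-sphere), and attack $U^-(x)$ by induction on $k$. The join decomposition and the link identification are correct and are sharper structural statements than anything the paper spells out.

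The gap you flag is, however, more than bookkeeping. Your inductive step peels off $f_1$, splits $V(x)=V_1^+\sqcup V_1^-$, and tries to realise $U^-(x)$ as a join across this split after restricting $f_2,\dots,f_k$ to the halves. But for $i\ge 2$ the condition ``$f_i$ changes sign on a face $y$'' depends on how $y$ meets the $f_i$-bipartition of \emph{all} of $V(x)$, and the $f_i$-bipartitions need not refine the $f_1$-bipartition; this condition therefore does not factor into separate conditions on $y\cap V_1^+$ and $y\cap V_1^-$, and no join of the asserted kind exists. Even the $k=1$ base case is not literally the graph join $\partial\Delta^{V_1^+}*\partial\Delta^{V_1^-}$: rather, $U^-(x)$ is the order complex of the proper part of the \emph{product} poset $(2^{V_1^+}\setminus\{\emptyset\})\times(2^{V_1^-}\setminus\{\emptyset\})$, which is indeed a $(j-2)$-sphere but by a different mechanism. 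A more workable observation is that $U^-(x)$ coincides with $\{F=c\}$ computed inside the $(j-1)$-sphere $\partial x$; pursuing this points back toward an induction on the ambient dimension, closer to the paper's scheme, rather than on $k$.
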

\begin{proof}
For a fixed vertex $x$ and $d$-simplex $X$, we have a tangent space $Z_2^d$
for which every function $f_j$ contributes a vector $\nabla f_i$ telling on
which edges emanating from $x$ inside $X$, the sign of $f_i$ changes. By 
assumption, the $k$ vectors $\nabla f_i$ are linearly independent.  \\
Use induction with respect to $d$: if $k=d$, there is nothing to show because
by definition, the set $\{ F=c \}$ is the set of $d$-simplices, where all functions
change sign and this is a $0$-graph as there are no edges.
We claim that in the unit sphere $S(x)$, the functions $f_1,\dots ,f_k$ induce
$k$ linearly independent gradients $\nabla f_j$ on the $(d-1)$-simplex $Y=X \setminus x$. 
Indeed, on each triangle, the sum of the $df_j$ values is zero (as ${\rm curl}({\rm grad}(f))=0$).
A nontrivial relation between the gradients on $Y$ would induced a nontrivial 
additive relation between the gradients on $X$. 
Now, by induction, the $k$ functions on $S(x)$ define a $(d-1-k)$-graph. 
\end{proof}

{\bf Remark.} One could also try induction with respect to $k$ and consider
$\{ f_1 = c_1 \}$ which is a $(d-1)$-graph, by the Sard lemma. 
The problem is that one has to extend $f_2,\dots,f_d$ in such a way on the simplices
so that one has still maximal rank condition. The problem is that $\{ f_1 = c_1 \}$
has now a different vertex set than $G$ and that linking things is difficult. \\

{\bf Examples:} \\

{\bf 1)} The case $d=3$, $k=2$ was discussed in Proposition~\ref{case2-3}. \\
{\bf 2)} In the case $d=4$, $k=2$, we want the two functions both to be locally
injective and the gradients of the two functions $f,g$ not to be parallel. 
The set $F=c$ consists of all tetrahedra $K_4$ and hypertetrahedra $K_5$ on which 
both $f$ and $g$ change sign. The gradients restricted to the tangent space on $S(x)$
are not parallel and we can apply the analysis of the previous case to each unit sphere
which shows that in $S(x)$ the set $F=c$ is a collection of circular graphs. 
This shows that the unit spheres of $F=c$ have the property that each unit sphere there
is a circular graph. Therefore $F=c$ is a geometric 2-graph, a surface if $\nabla f,\nabla g$
are nowhere parallel. \\
{\bf 3)} In the case $d=4$, $k=3$, it is the first time that the maximal rank condition is
not just a parallel condition. Given a vertex $x$ and a tetrahedron $t=(x,y_1,y_2,y_3)$. 
The discrete gradient is $\langle f(y_1)-f(x), f(y_2)-f(x), f(y_3)-f(x) \rangle$. An example
of a violation of the maximal rank condition for three functions $f,g,h$ would be 
$\nabla f = \langle 1,0,1 \rangle$, $\nabla g = \langle 1,1,0 \rangle$, $\nabla h = \langle 0,1,1 \rangle$.
They are pairwise not parallel but $\nabla f + \nabla g + \nabla h = 0$. \\
{\bf 4)} The non-degeneracy condition is not always needed: 
for $d$ functions $F=(f_1,\dots,f_d)$ on a $d$-graph, we look at the simplices
on which all functions change sign. The condition $dF \neq 0$ implies
that the set of $d$-dimensional simplices on which all functions $f_j$ 
change sign are isolated. 

\section{Sard theorem} 

Since the set of critical values can hove positive measure if we look at the simultaneous
solution, we change the setup and look at hypersurfaces in hypersurfaces. This will lead 
to a Sard theorem as in the continuum. 
We will have to pay a prize: the order with which we chose the hypersurfaces within hypersurfaces
now will matter. 

\begin{defn}
A function $f$ on the vertex set is called {\bf strongly injective} if
all function values $f(x_i)$ are rationally independent. 
It is {\bf strongly locally injective} if in each complete subgraph, the values 
are rationally independent. A list of functions $f_1,\dots, f_k$ is called 
{\bf strongly injective} if the union of all function values $f_j(x_i)$ 
are rationally independent. 
\end{defn}

Strongly injective functions are generic from the measure and Baire point of view: 
given a finite simple graph $G$ with $n$ vertices.
Look at the probability space $\Omega$ of all functions from the vertex set to $[-1,1]$, 
where the probability measure is the product measure on $[-1,1]^n$.

\begin{lemma}
For any $k$, with probability one, a random sample $f_1,\dots,f_k$ in $\Omega^k$ is strongly 
locally injective.
\end{lemma}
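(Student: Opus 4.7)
The plan is to view $\Omega^k = [-1,1]^{nk}$ equipped with normalized Lebesgue measure and show that the bad event, namely that some complete subgraph witnesses a nontrivial rational relation, is a countable union of measure-zero affine hyperplanes.

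First I would enumerate the relevant conditions. Fix a complete subgraph $K \subset G$ with vertex set $\{x_{i_1}, \dots, x_{i_m}\}$; there are only finitely many such $K$ since $G$ is finite. A rational dependence among the values $\{f_j(x_{i_\ell}) : 1 \le j \le k,\; 1 \le \ell \le m\}$ amounts to the existence of rationals $q_{j,\ell}$, not all zero, with
\[
\sum_{j=1}^{k}\sum_{\ell=1}^{m} q_{j,\ell}\, f_j(x_{i_\ell}) = 0.
\]
For each fixed nonzero rational tuple $(q_{j,\ell})$, this defines a proper affine hyperplane $H_{K,q}$ in $\Omega^k \cong [-1,1]^{nk}$, hence has Lebesgue measure zero. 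Since the set of nonzero rational tuples is countable and the set of complete subgraphs $K$ of $G$ is finite, the bad set
\[
B = \bigcup_{K} \bigcup_{q \neq 0} H_{K,q}
\]
is a countable union of null sets, so $\mu(B) = 0$. The complement, which is precisely the set of strongly locally injective samples, therefore has full measure.

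I would close by remarking (for completeness, since the lemma only asserts the measure statement) that the same argument shows the strongly locally injective samples form a dense $G_\delta$, because each $H_{K,q}$ is closed and nowhere dense. The only step that requires care is making sure that ``rational independence'' is parsed correctly: one must demand independence over $\mathbb{Q}$ of the full finite set of values appearing in $K$ across all $k$ functions simultaneously, not merely of each $f_j$ separately; but this is exactly how the definition is stated, and the countable enumeration of nonzero rational coefficient tuples handles all such relations at once. No further obstacle arises, as there are no continuity or compatibility conditions to verify between distinct subgraphs.
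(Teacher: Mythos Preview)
Your proof is correct and follows essentially the same route as the paper: both identify the bad event as a countable union of proper hyperplanes in $[-1,1]^{nk}$ (one for each nonzero integer/rational coefficient tuple, ranging over the finitely many complete subgraphs) and conclude by countable subadditivity of Lebesgue measure. Your write-up is in fact somewhat more careful than the paper's, spelling out the enumeration over complete subgraphs and rational tuples explicitly and noting the Baire-category consequence as a bonus.
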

\begin{proof}
There are $v=v_0+v_1+\cdots+v_d$ complete subgraphs in $G$. They define the $\sum_i i \cdot v_i$
numbers $c_{ij} = f_i(x_j)$. There is a countably many rational independence conditions $\sum_{ij} a_{ij} c_{ij} =0$
to be avoided, where $a_{ij}$ are integers. The complement of a countable union of such hyperplane sets of zero measure in $[-1,1]^n$
and consequently has zero measure.
\end{proof}

\begin{defn}
Given an ordered list of functions $f_1,f_2,\dots, f_k$ from the vertex list $V$ of a finite simple
graph $G=(V,E)$ and $c_1, \dots, c_k$ be $k$ values. Let $\overline{f}_1=f_1$. Denote by $\overline{f}_2$ the function 
$f_2$ extended to $\{ \overline{f}_1 = c_1 \}$ and by $\overline{f}_3$ the function $f_3$ extended to 
$\{ \overline{f}_1 = c_1, \overline{f}_2 = c_2 \}$ etc, always assuming that $\overline{f}_{j+1}$ defined on
$\overline{f}_1 = c_1, \dots, \overline{f}_j = c_j \}$ does not take the value $c_{j+1}$. We call
the sequence $c_1,\dots, c_k$ {\bf compatible} with $f_1,\dots,f_k$ if a sequence $\overline{f}_j$ can be 
defined so that none of them are not constant. 
\end{defn}

\begin{thm}[Discrete Sard for an ordered set]
Given $k$ strongly injective functions $f_{1},\dots,f_k$. For all except a finite set of 
vectors $(c_1,\dots,c_k)$ the sequence $c_1,\dots,c_k$  is compatible and 
the set $\{ f_1 = c_1, \dots, f_k = c_k \}$ is a geometric $(d-k)$-graph. 
\end{thm}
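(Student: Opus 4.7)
The plan is to prove the theorem by induction on $k$, iterating the single-function Sard Lemma~\ref{sardlemma} at each stage. The whole point of replacing the simultaneous set-up by an ordered one is that after each round we are back in the hypothesis of the $k=1$ Sard lemma, just on a refined graph of one lower dimension; so what one really has to do is track how the exceptional values accumulate.

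For the base case $k=1$, the Sard lemma gives directly that $\{f_1 = c_1\}$ is a $(d-1)$-graph whenever $c_1 \notin f_1(V)$, and compatibility is automatic because $f_1$ is non-constant on $V$; the excluded set is the finite set $f_1(V)$. For the inductive step, assume the conclusion for $k-1$. Then outside a finite exceptional set of tuples $(c_1,\dots,c_{k-1})$, the iterated level graph
\[
H \;=\; \{\overline{f}_1 = c_1, \dots, \overline{f}_{k-1} = c_{k-1}\}
\]
is a geometric $(d-k+1)$-graph, and its vertices are simplices of the previous barycentric refinement. Extend $f_k$ to a real-valued function $\overline{f}_k$ on $V(H)$ by averaging $f_k$ over the vertices of $G$ that sit inside the corresponding iterated simplex. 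Since $V$ is finite, $\overline{f}_k(V(H))$ is a finite subset of $\mathbb{R}$. Applying Lemma~\ref{sardlemma} to the $(d-k+1)$-graph $H$ and the function $\overline{f}_k$, for every $c_k \notin \overline{f}_k(V(H))$ the level graph $\{\overline{f}_k = c_k\}$ inside $H$ is a $(d-k)$-graph. Unwinding the recursive definition, this graph is precisely $\{f_1 = c_1, \dots, f_k = c_k\}$, and the sequence $c_1,\dots,c_k$ is compatible because each $\overline{f}_j$ avoided the image value $c_j$ by construction.

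The main obstacle is keeping the exceptional set under control, since at stage $k$ the set of bad $c_k$ depends on the earlier choices $c_1,\dots,c_{k-1}$ through the vertex set of $H$ and the averages of $f_k$ over the corresponding simplices. This is where the strong injectivity hypothesis does the work: rational independence of the full list of vertex values $f_i(x_j)$ forces all such averages to be distinct whenever the underlying simplices differ, and prevents the $c_j$ being fixed from accidentally coinciding with values attained by later extensions $\overline{f}_j$. At each stage only finitely many values of $c_j$ must be excluded, and the exceptional locus in $\mathbb{R}^k$ is contained in a finite union of affine hyperplanes of positive codimension — the discrete analogue of the measure-zero conclusion of the classical Sard theorem.

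A subtlety worth flagging during the write-up is that the induction is genuinely non-commutative: the extension $\overline{f}_k$ lives on a barycentric refinement of $H$, so the very definition of ``being a critical value for $c_k$'' depends on the order in which the $f_j$ were processed. This order-dependence is not a defect of the proof but an honest feature of the statement, already noted in the introduction and compatible with the quantum-calculus reading of the $f_j$ as sequentially measured observables.
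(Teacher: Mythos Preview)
Your proof is correct and follows exactly the paper's approach: the paper's own argument is the two sentences ``This follows inductively from the construction. In each step, only a finite set of $c$ values are excluded,'' and you have supplied the details of that induction by invoking Lemma~\ref{sardlemma} at each stage on the successively refined $(d-j)$-graphs. Your remark that the exceptional locus is a finite union of affine hyperplanes (since the bad $c_j$-values depend on the combinatorial type of $H$, which is locally constant in $(c_1,\dots,c_{j-1})$) is in fact more precise than the paper's phrasing ``finite set of vectors,'' and is the right measure-zero statement in the spirit of classical Sard.
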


\begin{proof}
This follows inductively from the construction. In each step, only a finite set of
$c$ values are excluded. 
\end{proof}

The assumption is stronger than what we need. It sometimes even works in 
the extreme case of two identical functions: Let $f$ be the function
on the octahedron given the values $f_1(1)=13,f_1(2)=15,f_1(3)=17,f_1(4)=19$ on the equator and 
the value $f_1(5)=1$ on the north pole and the value $f_1(6)=31$ on the south pole. 
Lets take $c_1=2$.  Now, $\{ f_1 = c_1 \}$ is the cyclic graph with vertices 
$\{ (51)$, $(512)$, $(52)$, $(523)$, $(53)$, $(534)$, $(54)$, $(541) \}$. The function
$\overline{f}_2$ takes there the values 
$\overline{f}_2(51)=(f_2(5)+f_2(1))/2=(1+13)/2=7$,
$\overline{f}_2(52)=(f_2(5)+f_2(2))/2=(1+15)/2=8$,
$\overline{f}_2(53)=(f_2(5)+f_2(3))/2=(1+17)/2=9$,
$\overline{f}_2(54)=(f_2(5)+f_2(4))/2=(1+19)/2=10$,
$\overline{f}_2(512)=(f_2(5)+f_2(1) f_2(2))/2=(1+13+15)/3=29/3$,
$\overline{f}_2(523)=(f_2(5)+f_2(2) f_2(3))/2=(1+15+17)/3=35/3$,
$\overline{f}_2(534)=(f_2(5)+f_2(3) f_2(4))/2=(1+17+19)/3=37/3$,
$\overline{f}_2(523)=(f_2(5)+f_2(2) f_2(3))/2=(1+19+13)/3=11$. Now for example, for $c_2=8.5$
the set $\{ f_2=c_2 \}$ is a $0$-graph.  \\

As an example, lets look at the {\bf double nodal surface} $f_3 =0$ in $f_2=0$,
where $f_3$ is the third eigenvector. By Sard, we know:

\begin{coro}
If $G$ is a $d$-graph and the eigenfunctions $f_2,f_3$ are strongly injective not
having the value $0$, then the double nodal surface is a $(d-2)$-surface in $G_2$.
\end{coro}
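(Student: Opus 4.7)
The plan is to apply the Discrete Sard theorem for an ordered set to the pair $(f_1, f_2) := (f_2, f_3)$ with target value $(c_1, c_2) = (0,0)$, verifying that $(0,0)$ is compatible with this ordered list. The work therefore reduces to checking compatibility; once that is done, the theorem delivers a $(d-2)$-graph in $G_2$.

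First, I would apply the Sard Lemma (Lemma~\ref{sardlemma}) to the $d$-graph $G$ with the function $f_2$ and value $c_1 = 0$. The hypothesis $0 \notin f_2(V)$ is given, so $\{f_2 = 0\}$ is a $(d-1)$-graph living in $G_1$. Its vertices are precisely the simplices $x = \{v_0,\dots,v_j\}$ of $G$ on which $f_2$ changes sign. Next I would define the extension $\overline{f}_3$ on this new vertex set by averaging, $\overline{f}_3(x) = \frac{1}{|x|} \sum_{v \in x} f_3(v)$, which is the natural extension used throughout the paper.

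The key step, and the only place where the hypothesis really bites, is to verify that $\overline{f}_3$ is nowhere zero on $\{f_2 = 0\}$. Suppose for contradiction that $\overline{f}_3(x) = 0$ for some simplex $x = \{v_0,\dots,v_j\}$. Then $\sum_{i=0}^{j} f_3(v_i) = 0$ is a nontrivial integer linear relation (all coefficients equal to $1$) among the values $\{f_3(v_i)\}$. By the strong injectivity assumption on the ordered pair $(f_2,f_3)$, the entire collection $\{f_2(v), f_3(v) : v \in V\}$ is rationally independent, so such a relation is impossible. Hence $\overline{f}_3$ avoids $0$, which is exactly the compatibility requirement for the sequence $(0,0)$.

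Now I would apply the Sard Lemma a second time, this time to the $(d-1)$-graph $\{f_2 = 0\}$ with the function $\overline{f}_3$ and the value $0 \notin \overline{f}_3(V(\{f_2=0\}))$. The conclusion is that $\{\overline{f}_3 = 0\}$ inside $\{f_2 = 0\}$ is a $(d-2)$-graph, and by construction this is exactly the double nodal surface $\{f_2 = 0, f_3 = 0\}$. Its ambient graph is the barycentric refinement of $\{f_2 = 0\}$, which sits naturally inside $G_2 = (G_1)_1$. The main obstacle is the rational-independence bookkeeping in the middle step; once one realizes that the averaging weights are constantly $1$, strong injectivity rules out the forbidden value in a single line, and the rest is just iterated application of the Sard Lemma.
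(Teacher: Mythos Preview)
Your proof is correct and follows the same approach as the paper, which simply says ``By Sard, we know'' and invokes the Discrete Sard theorem for ordered sets. In fact you are more careful than the paper: the theorem as stated only guarantees a $(d-2)$-graph for all vectors $c$ \emph{outside a finite exceptional set}, so one genuinely needs to verify that $(0,0)$ is compatible, and your rational-independence argument supplies exactly this missing step. One minor remark: the relation $\sum_i f_3(v_i)=0$ involves only the values of $f_3$, so strong injectivity of $f_3$ alone already suffices; invoking the joint rational independence of all $f_2$- and $f_3$-values is harmless but more than you need.
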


{\bf Example}: \\
1) For the octahedron, the smallest $2$-sphere, the spectrum is of the Laplacian is
$$  \{ 0,4,4,4,6,6 \}  $$
The ground state space is the eigenspace to the eigenvalue $4$ and three dimensional
and spanned by $(-1,0,0,0,0,1)$, $(0,-1,0,0,1,0)$, $(0,0,-1,1,0,0) \}$.
The eigenspace to the eigenvalue $6$ is spanned by $(1,0,-1,-1,0,1), (0,1,-1,-1,1,0)$.
In both eigenspaces, there are injective functions
$f_2=(-1,-2,-3,3,2,1),f_3=(1,2,-3,-3,2,1)$ in the eigenspace. 
The graph $\{ f_2=0\}$ is the cyclic graph $C_{12}$ while
the graph $\{ f_3 =0 \}$ is $C_8 \cup C_8$. The graph $\{ f_3=0\}$ within $\{f_2=0\}$
is a two point graph. The graph $\{ f_2=0 \}$ within $\{ f_3=0 \}$ is not defined
as $f_2$ extended to the simplex set in a linear way produces a lot of function values
$0$. 

%  s=OctahedronGraph; Eigensystem[LLaplacian[s]];
%  L.{-1,-2,-3,3,2,1}-4 {-1,-2,-3,3,2,1}; L.{1,2,-3,-3,2,1}-6{1,2,-3,-3,2,1}

It leads to a generalization of a result we have shown for $2$-graphs:

\begin{coro}
Any compact $d$-manifold $M$ has a finite triangulation which is $(d+1)$-colorable.
\end{coro}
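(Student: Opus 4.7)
The plan is to apply the ``coloring by simplex dimension'' principle used in the proof of the $d$-colorability corollary: every vertex of the Barycentric refinement $G_1$ of a $d$-graph $G$ carries a natural $(d+1)$-coloring via the dimension of the originating simplex. Thus once one produces a triangulation of $M$ as a $d$-graph in the sense of this paper, its refinement is automatically $(d+1)$-colorable, and no level-surface or Sard argument is even needed.

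\textbf{Step 1.} I begin with a classical triangulation $K$ of $M$, meaning a simplicial complex with $|K|\cong M$ whose vertex links are PL $(d-1)$-spheres. For smooth or PL manifolds, such a triangulation exists by the work of Cairns and Whitehead. Then the $1$-skeleton $T$ of $K$ is a finite simple graph whose unit spheres are recursively $(d-1)$-spheres in Evako's sense; equivalently $T$ is a $d$-graph whose clique complex recovers $K$.

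\textbf{Step 2.} I pass to the Barycentric refinement $T_1 = T\times K_1$, whose vertices are the complete subgraphs (simplices) of $T$ and whose underlying simplicial complex is the classical Barycentric subdivision of $K$. In particular $T_1$ is again a finite triangulation of $M$. I define $c\colon V(T_1)\to\{0,1,\dots,d\}$ by $c(\sigma)=\dim(\sigma)$. Two vertices $\sigma,\tau$ of $T_1$ are adjacent precisely when $\sigma\subsetneq\tau$ or $\tau\subsetneq\sigma$, so $\dim(\sigma)\neq\dim(\tau)$, and $c$ is a proper vertex coloring with $d+1$ colors.

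\textbf{Main obstacle.} The delicate point is Step 1: producing a $d$-graph triangulation of $M$ at all. For smooth or PL manifolds this is classical, and for topological manifolds of dimension $\le 3$ every triangulation is PL, so the proof still goes through. In dimensions $\ge 4$, however, topological manifolds need not be triangulable (Freedman, Manolescu), so the corollary must be read under an implicit PL or smooth hypothesis on $M$. Beyond Step 1 the argument is just a rereading of the earlier coloring corollary, with ``dimension of the originating simplex'' doing all the combinatorial work.
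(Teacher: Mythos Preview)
Your approach is genuinely different from the paper's and, in fact, more elementary. The paper invokes the Nash--Tognoli theorem to realize $M$ as a real algebraic set $\{F=c\}\subset\mathbb{R}^n$, then applies the discrete Sard construction of this paper to produce a $d$-graph triangulation as an iterated level set; the $(d+1)$-coloring then comes from the earlier corollary on level surfaces. Your route---take a PL triangulation and pass to the barycentric subdivision, colored by simplex dimension---avoids both Nash--Tognoli and the Sard machinery entirely. What the paper's route buys is an explicit construction of the triangulating graph directly from the polynomial equations defining $M$, which is the point being advertised; your route buys simplicity and shows the statement is really classical.

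One genuine wrinkle in your Step~1: the assertion that the $1$-skeleton $T$ of an arbitrary PL triangulation $K$ is a $d$-graph whose clique complex recovers $K$ is not true in general. The clique complex of $T$ equals $K$ only when $K$ is a flag complex, and the combinatorial Evako sphere condition on unit spheres can also fail (e.g.\ a $C_3$ link is not a $1$-sphere in this paper's sense). Since Step~2 identifies the graph-theoretic refinement $T_1$ with the classical barycentric subdivision of $K$, you implicitly need the flag condition. The fix is immediate: replace $K$ by its barycentric subdivision first (always flag), or simply drop the $d$-graph language altogether and work with the classical barycentric subdivision of $K$, whose dimension function already gives the $(d+1)$-coloring. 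Your identification of the PL/smooth hypothesis as the real obstacle is correct and applies equally to the paper's proof, since Nash--Tognoli is a theorem about smooth manifolds.
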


\begin{proof}
By Nash-Tognoli, any compact manifold can be written as a variety $F=c$ in some $R^d$.
Now just rewrite this in the discrete as the zero locus of $F=c$.
\end{proof}

\begin{coro}
The curvature at a vertex $x$ of such a graph triangulation can be written as the expectation of
$d!$ Poincar\'e-Hopf indices.
\end{coro}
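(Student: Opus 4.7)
The plan is to combine the $(d+1)$-colorability from the preceding corollary with the color-curvature (index-expectation) formula of \cite{indexexpectation, colorcurvature}, which expresses the Euler curvature $K(x)$ as the expected value of the Poincaré-Hopf index $i_f(x)$ when $f$ runs over the canonical family of locally injective functions attached to a proper coloring of $G$.

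First, I invoke the preceding corollary to fix a triangulation $G$ of $M$ as a $d$-graph equipped with a proper coloring $c \colon V(G) \to \{1, \ldots, d+1\}$. Since $c$ is proper, every vertex of the unit sphere $S(x)$ carries one of the $d$ colors in the set $C_x = \{1, \ldots, d+1\} \setminus \{c(x)\}$. Each of the $d!$ linear orderings $\sigma$ of $C_x$, together with the fixed insertion of $c(x)$ at its predetermined position in the overall ordering of $\{1,\ldots,d+1\}$, induces a locally injective function $f_\sigma$ on the closed unit ball $B(x)$, and hence a Poincaré-Hopf index $i_{f_\sigma}(x) = 1 - \chi(S^-_{f_\sigma}(x))$ that depends only on $\sigma$, since $i_f(x)$ references only the restriction of $f$ to $B(x)$.

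The color-curvature identity then yields
\begin{equation*}
K(x) \;=\; \frac{1}{d!} \sum_{\sigma} i_{f_\sigma}(x),
\end{equation*}
realizing $K(x)$ as the expectation of the $d!$ Poincaré-Hopf indices, one per ordering of the $d$ colors in $C_x$.

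The main obstacle is to justify that the general index-expectation formula -- which averages over all locally injective functions on $G$ -- reduces to this clean average of $d!$ local terms. This should hinge on two points: the locality of $i_f(x)$, which only sees $B(x)$; and the observation that the colors on $S(x)$ are exactly $C_x$, so that the action of the symmetric group $S_{d+1}$ on color values factors through the $S_d$ acting on $C_x$, while the value at $x$ contributes only a global shift that does not affect the sign-sensitive definition of $S^-_f(x)$.
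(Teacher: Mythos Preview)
The paper gives no proof of this corollary, so there is nothing to compare against line by line. Your overall plan---take the $(d+1)$-coloring supplied by the preceding corollary and feed it into the index-expectation/color-curvature identity of \cite{indexexpectation,colorcurvature}---is precisely the mechanism the author is pointing to, and that part is fine.

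The gap is your reduction from $(d+1)!$ permutations to $d!$ ``local orderings''. For a permutation $\pi\in S_{d+1}$ the index $i_{\pi\circ c}(x)=1-\chi(S^-_{\pi\circ c}(x))$ depends on $\pi$ only through the \emph{subset} $A_\pi=\{a\in C_x:\pi(a)<\pi(c(x))\}$ of colors lying below $c(x)$, not through a linear ordering of $C_x$. Thus the $S_{d+1}$-action does not factor through $S_d$ acting on $C_x$; it factors through the power set $2^{C_x}$, and the fibres over this map are not uniform (a subset of size $j$ has $j!\,(d-j)!$ preimages). In particular there is no ``predetermined position'' for $c(x)$ that makes your $d!$ average agree with the genuine $(d+1)!$ average: if you pin $c(x)$ to the top slot, every $f_\sigma$ has $S^-_{f_\sigma}(x)=S(x)$ and the average degenerates to the single value $1-\chi(S(x))$, not $K(x)$. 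The sentence ``the value at $x$ contributes only a global shift'' is exactly where the argument breaks: the position of $c(x)$ in the global order is what determines $S^-_f(x)$, so it cannot be frozen.

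What the color-curvature formula actually gives for a uniquely $(d+1)$-colorable $d$-graph is $K(x)$ as the average of $(d+1)!$ Poincar\'e--Hopf indices (one per permutation of the colors), mirroring the $p(d)=d!$ count in the earlier corollary with the dimension shifted by one. Either the ``$d!$'' in the statement is an off-by-one slip of this kind, or an additional argument tied to the specific Sard/Nash--Tognoli construction is needed; your proposal supplies neither. I would prove the $(d+1)!$ version cleanly and note the discrepancy.
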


If $\{f=0\}$ be the zero locus of $f:G \to R$. It would be nice to
find a smaller homeomorphic graph which represents this set.

\section{Nodal sets}

To illustrate a possible application, 
lets look at the problem of nodal sets of the Laplacian $L$ of a graph. Understanding
the {\bf Chladni patterns} of the Laplacian on a manifold with or without boundary 
is a classical problem in analysis. The nodal region theorem of Courant in the discrete
also follows from the min-max principle
\cite{Fiedler1975,VerdiereGraphSpectra,Spielman2009}. For a general graph $G$, and an eigenfunction $f$,
one looks at the number of connectivity components of $Z_f^+ = \{ \pm f \geq 0 \}$.
Let $v_k$ be the $k$'th eigenvector. Since both for compact Riemannian manifolds as well as for finite graphs, 
the zero eigenvalues are not interesting as harmonic functions are locally constant, we look primarily at the
second or third eigenvalue.  The Fiedler nodal theorem assures then that the graph generated by $\vec{v}_k > 0$ 
has maximally $k-1$ components. Especially, the second eigenvector, the ``ground state", always has 
exactly two nodal components. If an eigenfunction $f$ of the Laplacian is locally injective and has no roots, 
we can look at its nodal surfaces $f=0$ separating the nodal regions. 
I $f$ does not take the value $0$, then $\{f = 0 \}$ is defined even if $f$ is not locally injective.
As in the continuum, one can ask how big the set $\{ v \in V \; | \; f(v)=0 \}$ can become. 

If $G$ is a $2$-sphere, then the two nodal surface is a simple closed curve in $G$. 
How common is the situation that the ground state does not take the value $0$ and is 
locally injective? The situation that $0$ is in the range of the ground state appears
to be rare. For random 2-spheres (of the order of 500 vertices 
generated by random edge refinements from platonic and Archimedean solids)
we get a typical ground state energy in the order of $0.08$ and the third eigenvalue in
the order $0.2$. \\

\cite{BLS} note the following {\bf eigenfunction principle}: 
any eigenfunction to an eigenvalue $0<\lambda<n$
takes the value $0$ on every vertex of degree $n-1$, if $n$ is the number of vertices:
Proof: let $f$ be the eigenfunction to an eigenvalue $\lambda$. 
The function $f$ is perpendicular to the harmonic constant function so that 
$\sum_{x \in V} f(x)=0$. From $Lf(v) = (n-1) f(v) - \sum_{x \neq v} f(x) = \lambda f(v)$, we get
$n f(v) = \lambda f(v)$ which by assumption implies $f(v)= 0$.  \\

{\bf Example.} \\
For a wheel graph with $n$ vertices, there is one eigenvalue $n$ with
eigenvector $(1-n,1,1, \dots, 1)$, an eigenvalue $0$ with eigenvector $(1,1, \dots, 1)$.
All eigenfunctions to eigenvalues between take the zero value somewhere. For example, 
in the case $n=7$, the eigenvalues are $\{ 0,2,2,4,4,5,7 \}$ with 
eigenvectors $[0,-1,-1,0,1,1,0]$, $[0,1,0,-1,-1,0,1]$, 
$[0,-1,1,0,-1,1,0]$, $[0, -1, 0, 1,-1,0,1]$,
$[0,-1,1,-1,1,-1,1]$ and $[-6,1,1,1,1,1,1]$.  \\

Let $G$ be a $d$-graph.
Let $f$ be the ground state, the eigenvector to the first nonzero eigenvalue $\lambda$, the spectral gap. Let
$d$ denote the exterior derivative. There will be no confusion with $d$ also denoting the dimension of $G$. 
We assume that the eigenvalue $\lambda$ is simple, that $f$ has no roots and that $df$ has no roots.
This assures that $Z=0$ is a geometric $(d-1)$-graph. As $L=d d^*$ on $0$ forms, we get from $d^* d f = \lambda f$
that $d d^* d f = \lambda df$ so that $df$ is an eigenfunction to the one form Laplacian $L_1=d d^* + d^* d$. 
The set $\{ df=0 \}$ of all triangles and simplices where both $f$ and $df$ changes sign is the same than $f=0$.  \\

What is the topology of the hypersurface $Z_2 = \{ \vec{v}_2=0 \}$? For a $2$-sphere $G$, 
we know that there are two components so that the principal nodal curve $Z_2$ has to be a circle. 
Is the nodal curve to the second eigenvalue a $d$-sphere, if $G$ is a $d$-graph?
We believe that the answer is yes and robust. If $f$ should have roots, we can 
add a small random function with $|g(x)| \leq \epsilon$. We expect that for sufficiently small
$\epsilon>0$ and almost all $g$, the surface $Z = \{ f+g =0 \; \}$ has the same topology.

Does the topology of the nodal manifold to the second eigenvalue depend 
on the topology of $G$ only?  \\

We think the answer is yes as $Z$ has to be a connected surface and that going from a genus $k$
to a genus $k+1$ surface can not happen so easily. More generally, we expect:
if $G$ and $H$ are homotopic graphs of the same dimension, then the second nodal manifolds 
of $G$ and $H$ are homotopic. \\

\begin{figure}[ph]
\scalebox{0.2}{\includegraphics{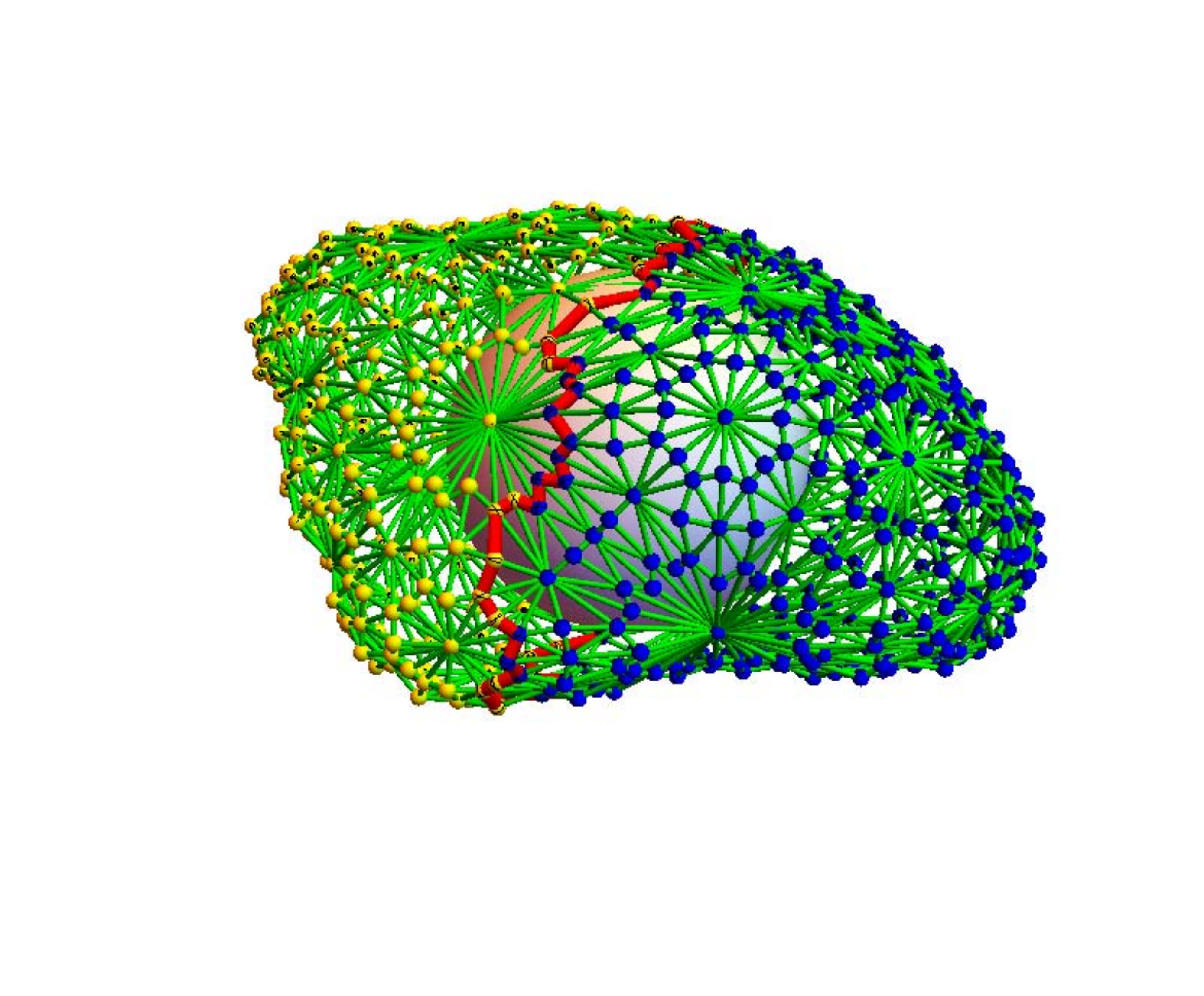}}
\scalebox{0.2}{\includegraphics{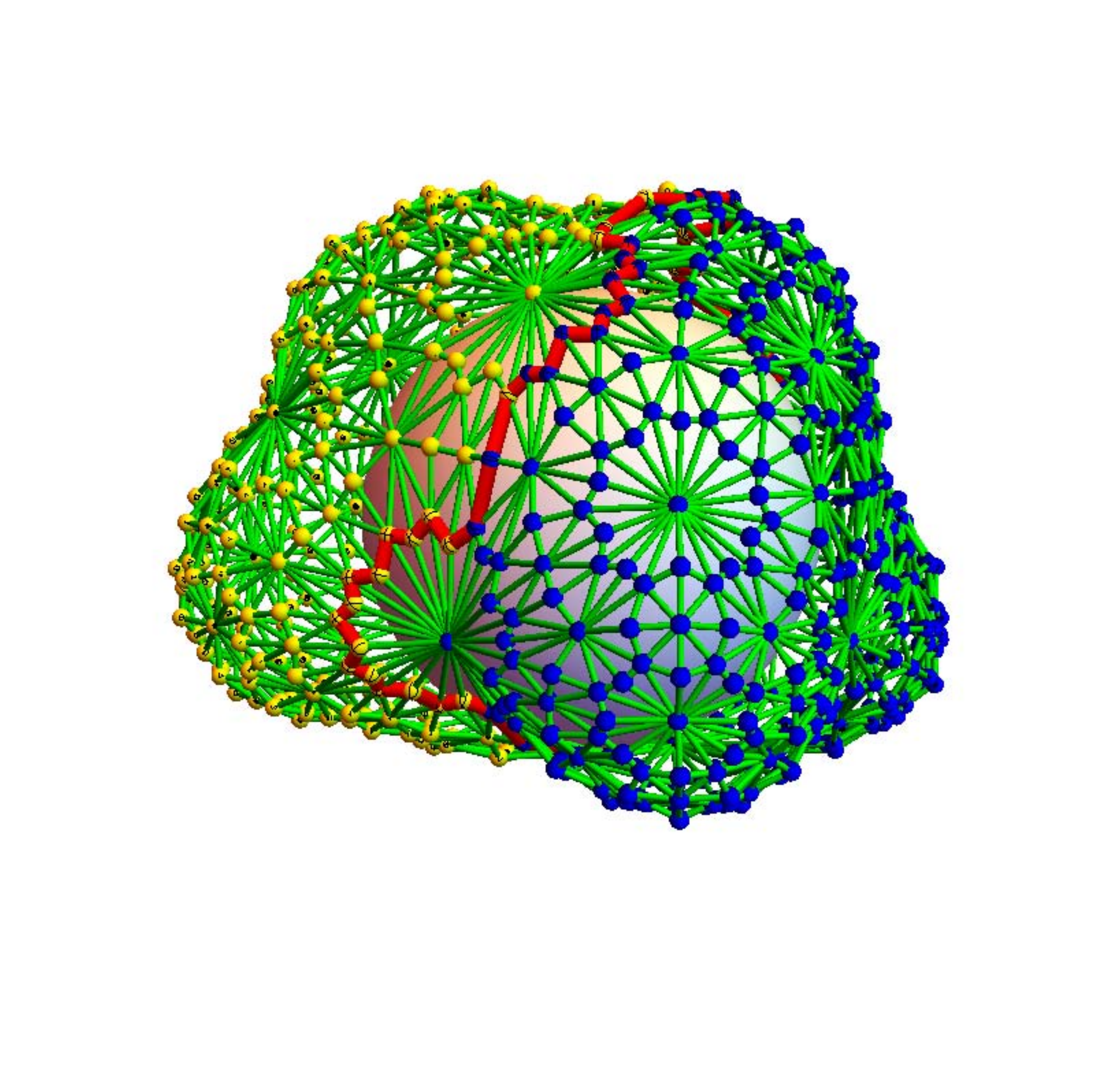}}
\scalebox{0.2}{\includegraphics{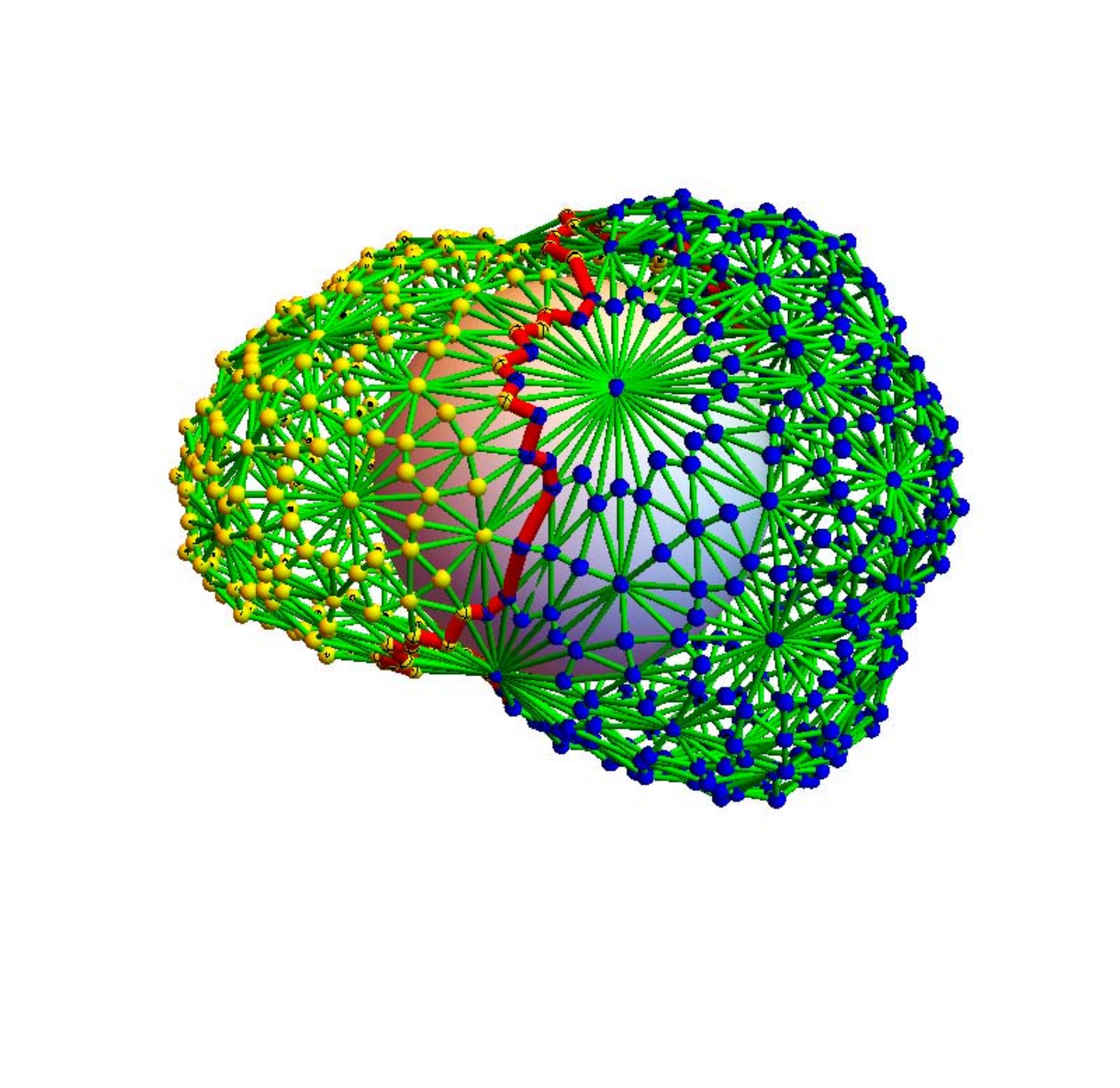}}
\scalebox{0.2}{\includegraphics{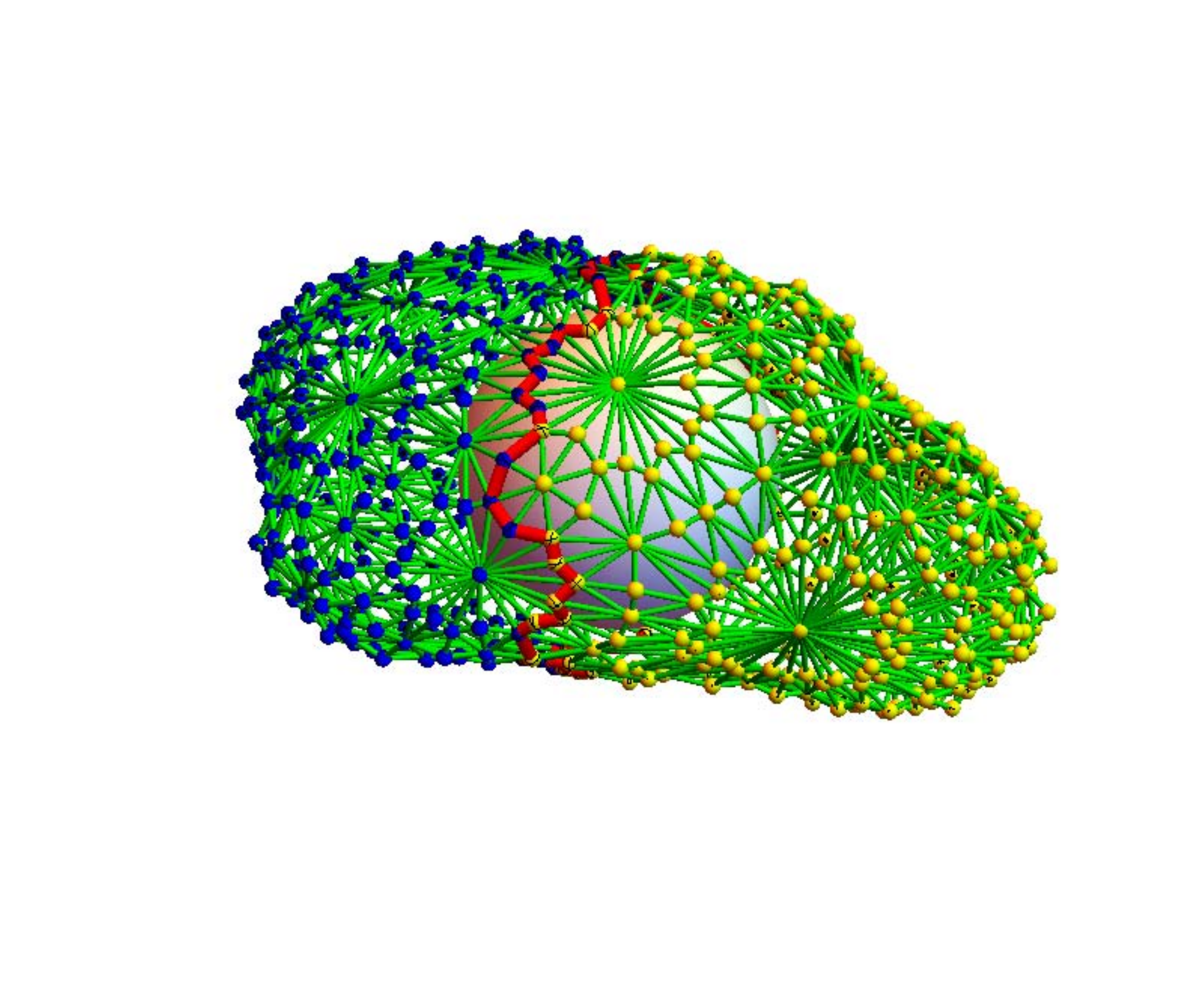}}
\scalebox{0.2}{\includegraphics{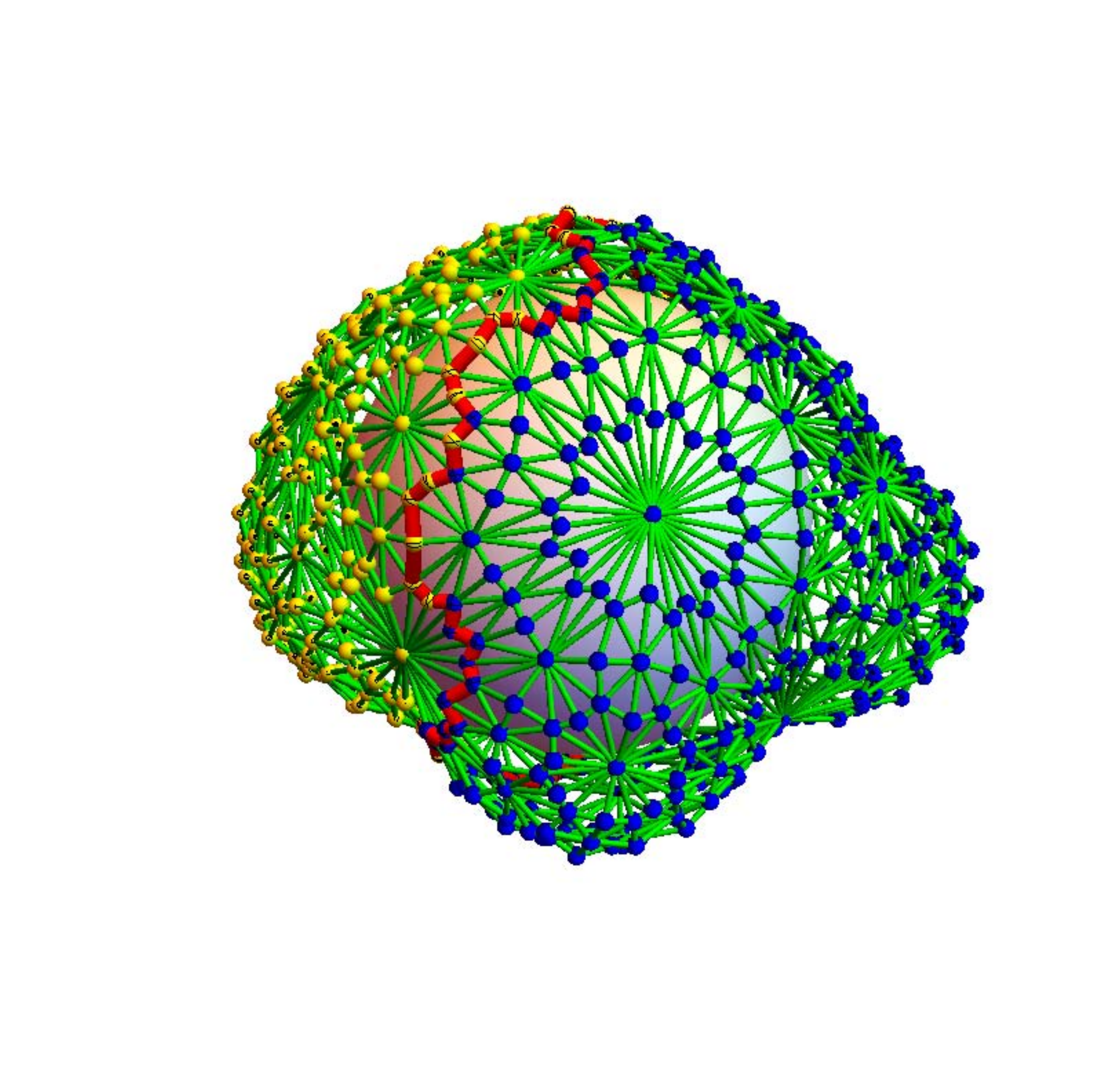}}
\scalebox{0.2}{\includegraphics{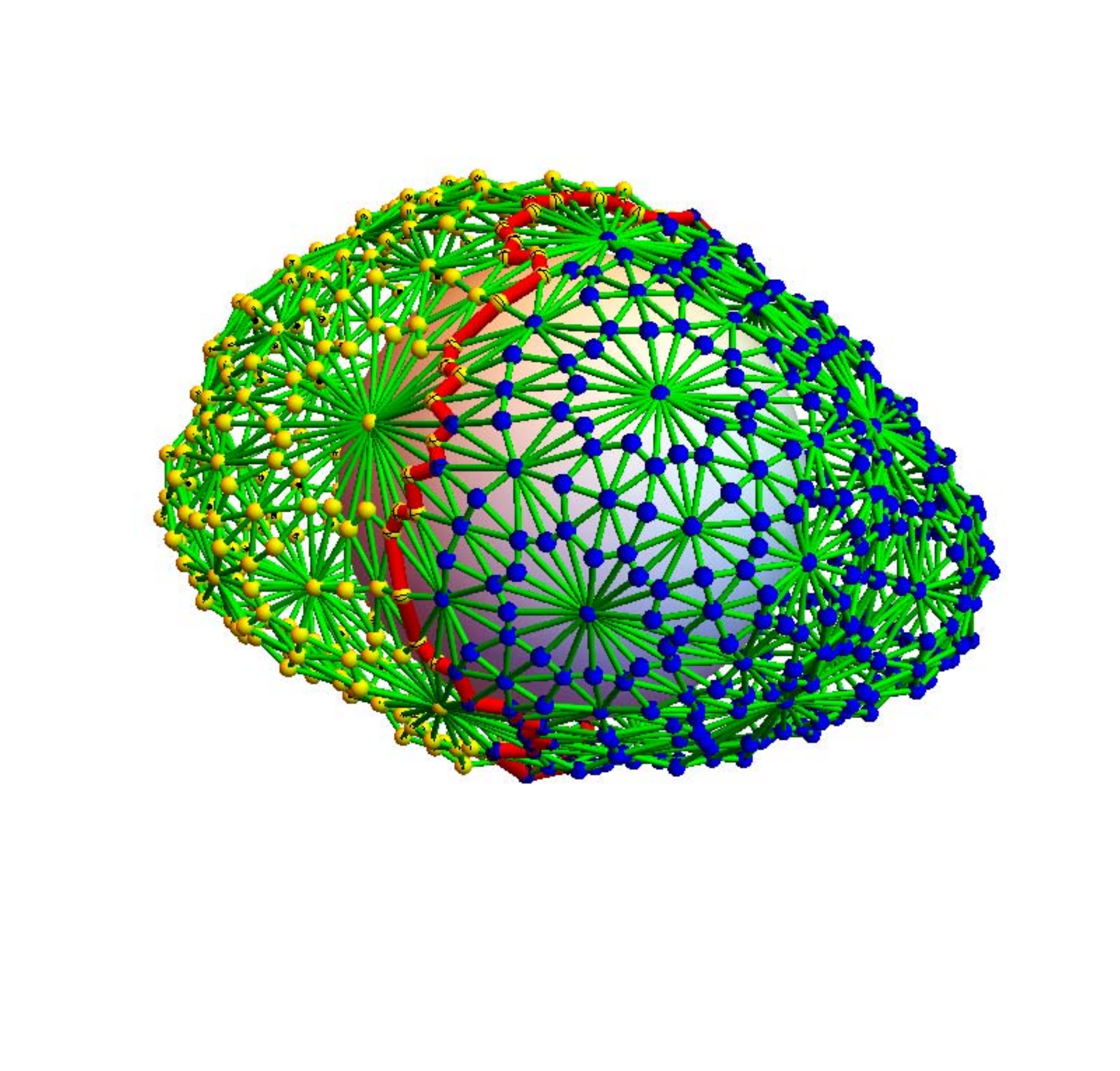}}
\caption{
Level surfaces of the second eigenfunction $f_2$ of the Laplacian on $G_2$, where
$G$ is a 2-sphere. The dividing surface is then always a Jordan curve. 
}
\end{figure}

\begin{figure}[ph]
\scalebox{0.2}{\includegraphics{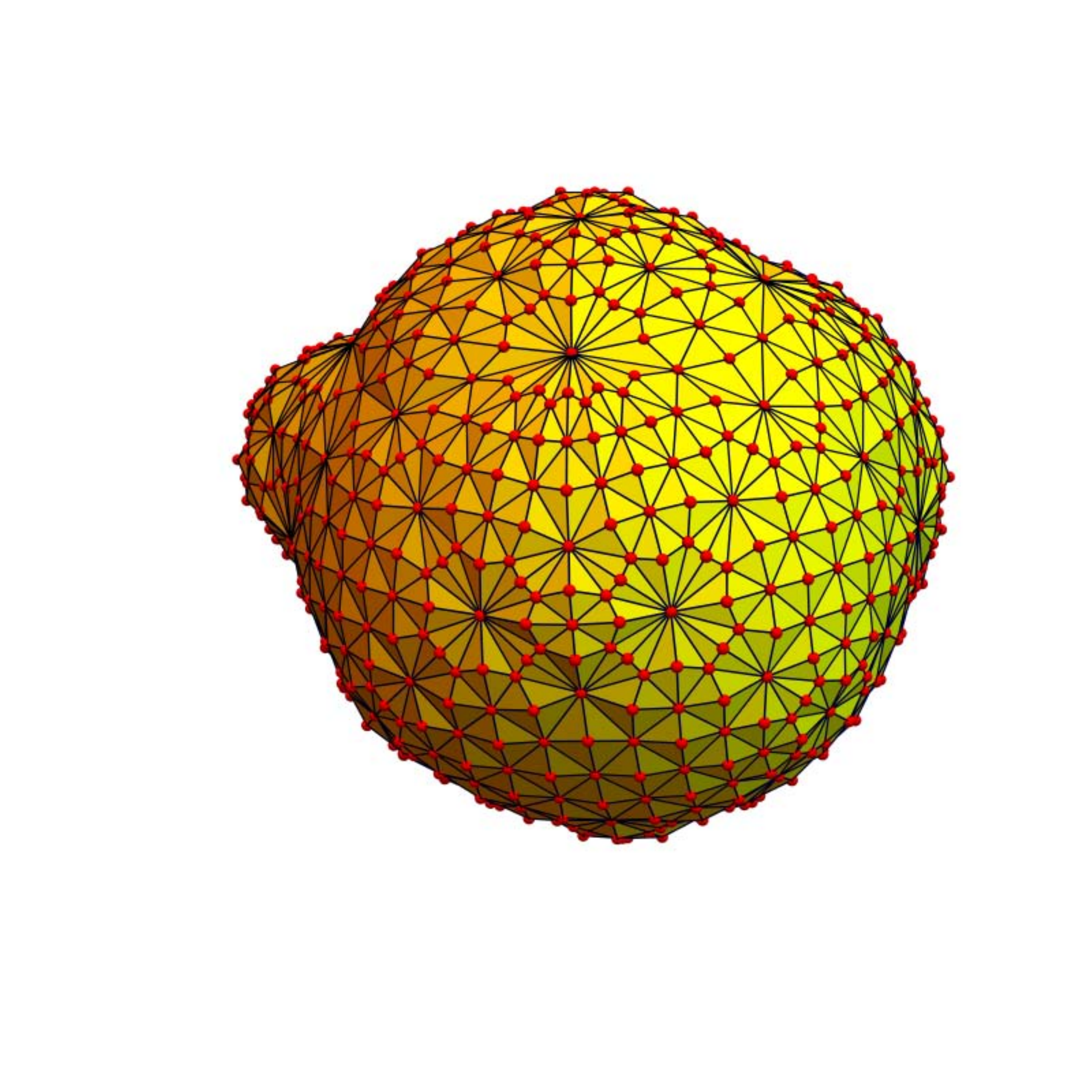}}
\scalebox{0.2}{\includegraphics{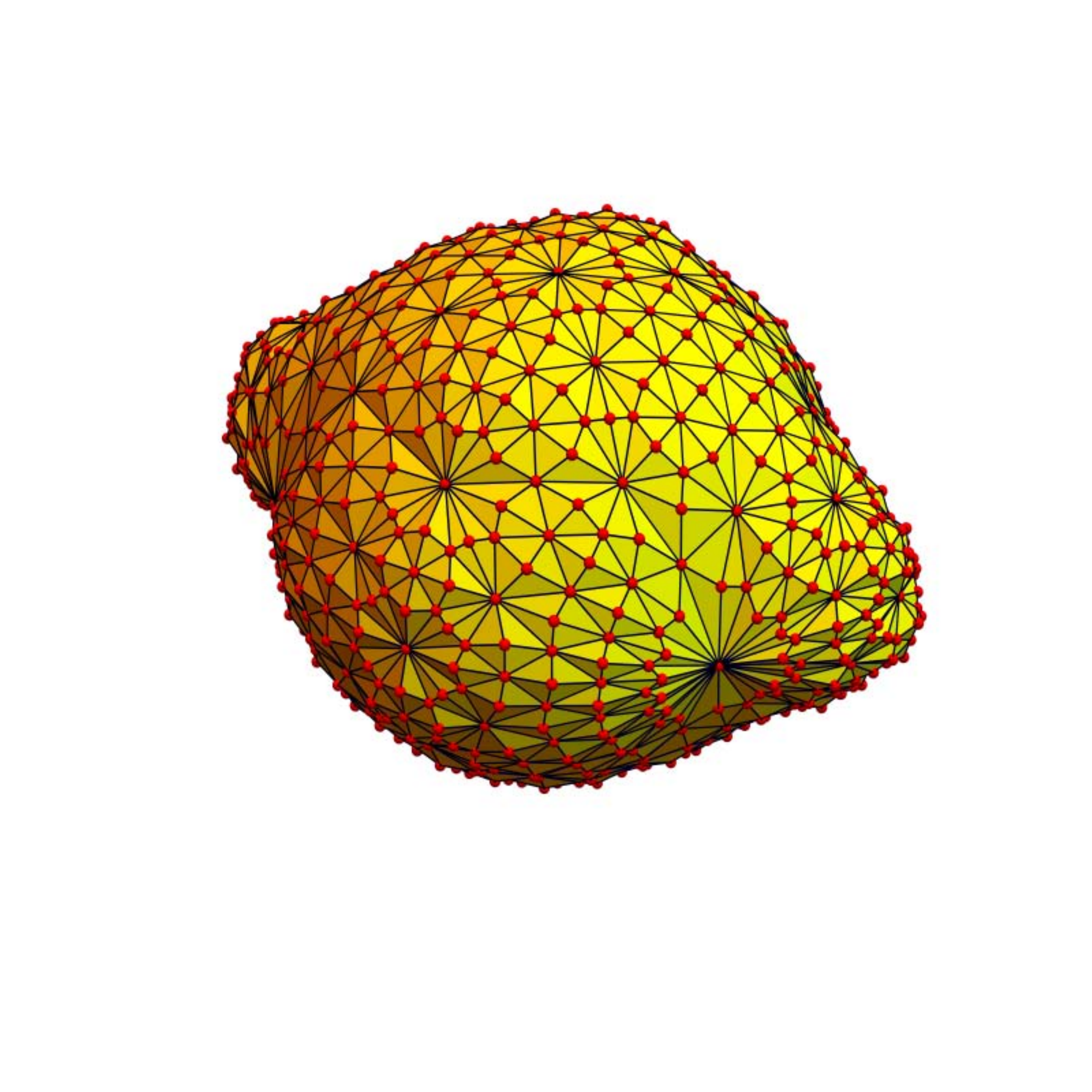}}
\scalebox{0.2}{\includegraphics{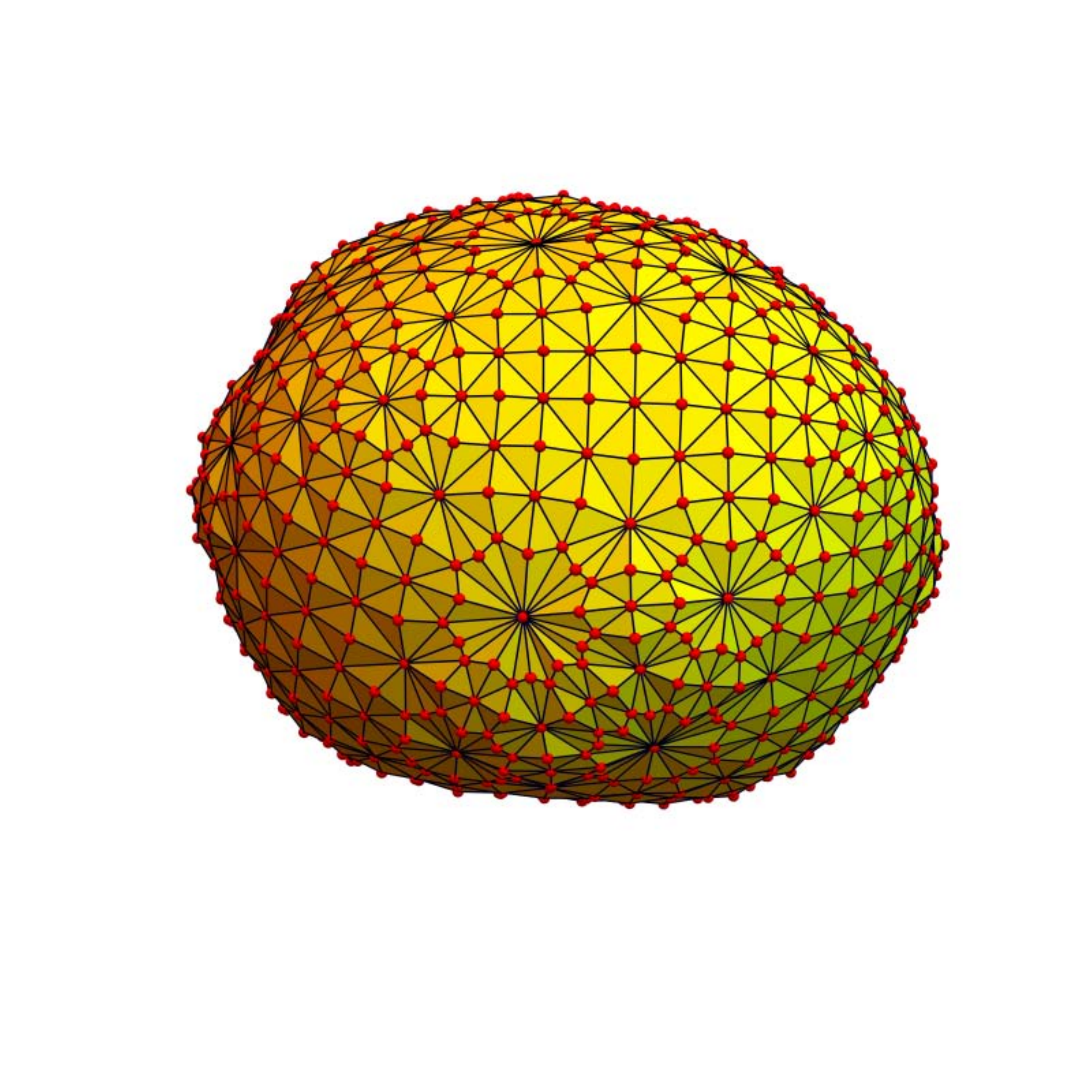}}
\scalebox{0.2}{\includegraphics{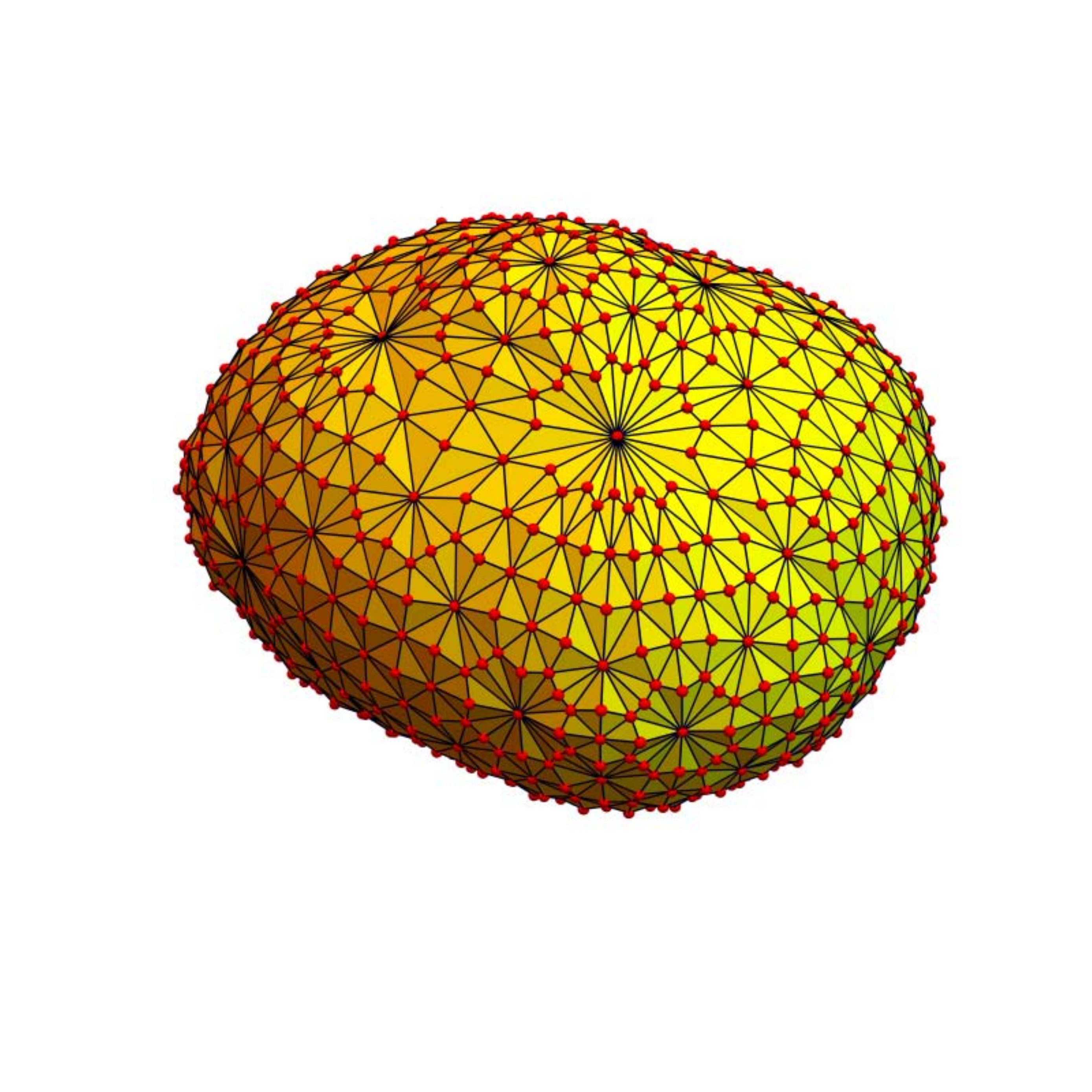}}
\scalebox{0.2}{\includegraphics{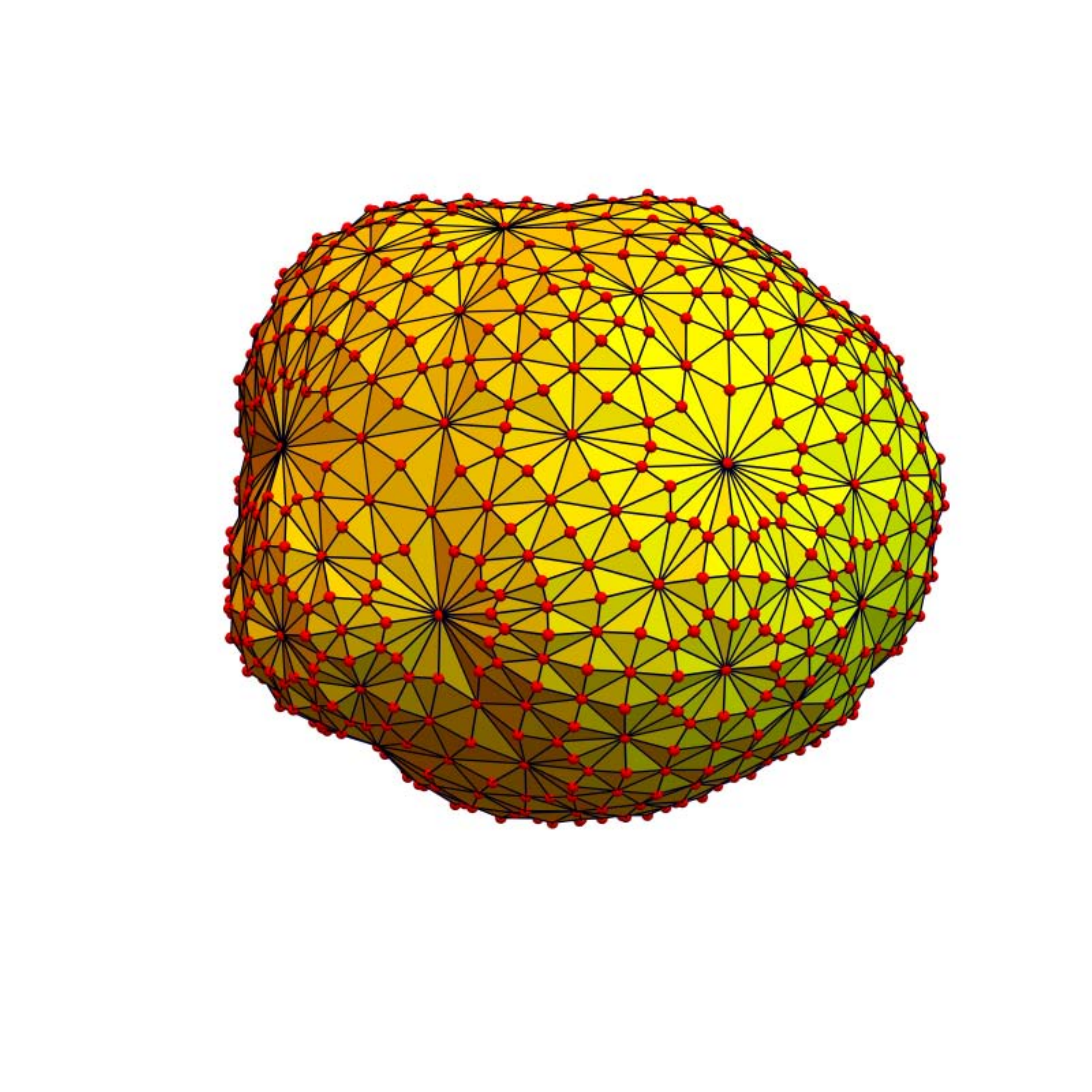}}
\scalebox{0.2}{\includegraphics{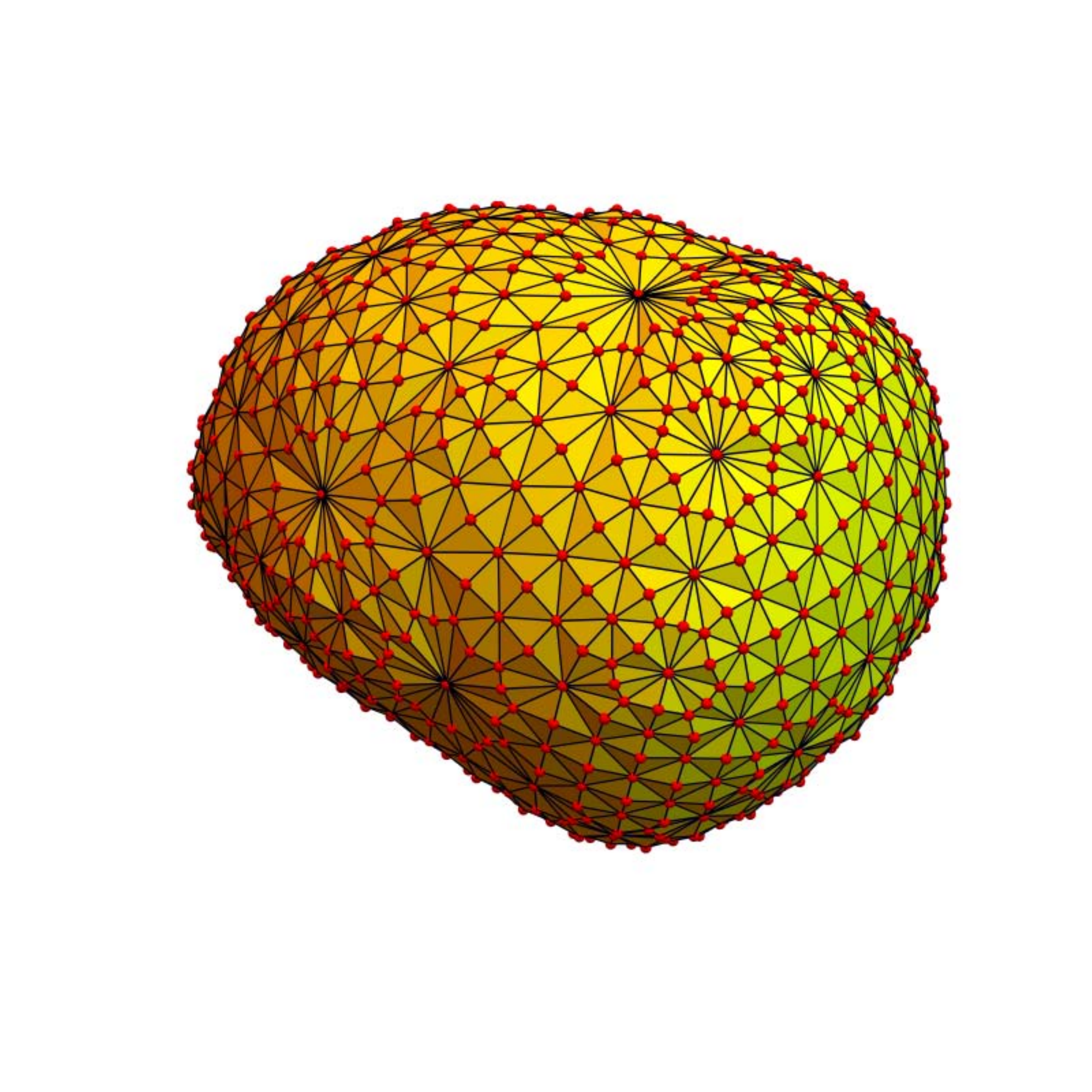}}
\caption{
Level surfaces of the second eigenfunction of the Laplacian,
where $G$ are 3-spheres. The dividing surface is then a $2$-graph.
We believe that $\{ f_2=0 \}$  is always a 2-sphere, if $G$ is a
$3$-sphere and $f_2$ is the eigenvector to the smallest nonzero eigenvalue.
A more general question is whether two whether the nodal region $\{ f_2>0 \}$ 
is always simply connected if $G$ is a $d$-sphere. 
}
\end{figure}

\begin{figure}[ph]
\scalebox{0.2}{\includegraphics{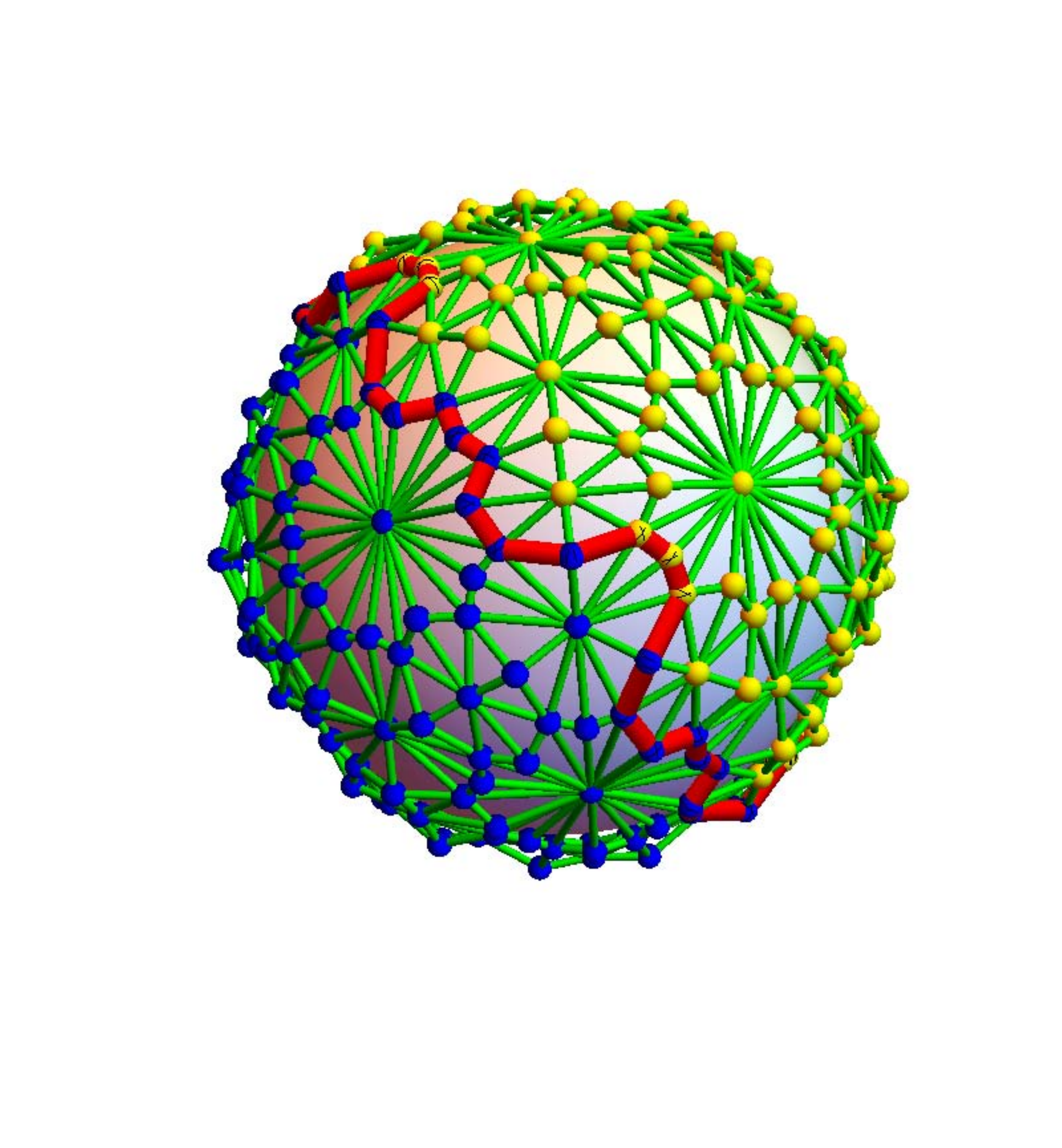}}
\scalebox{0.2}{\includegraphics{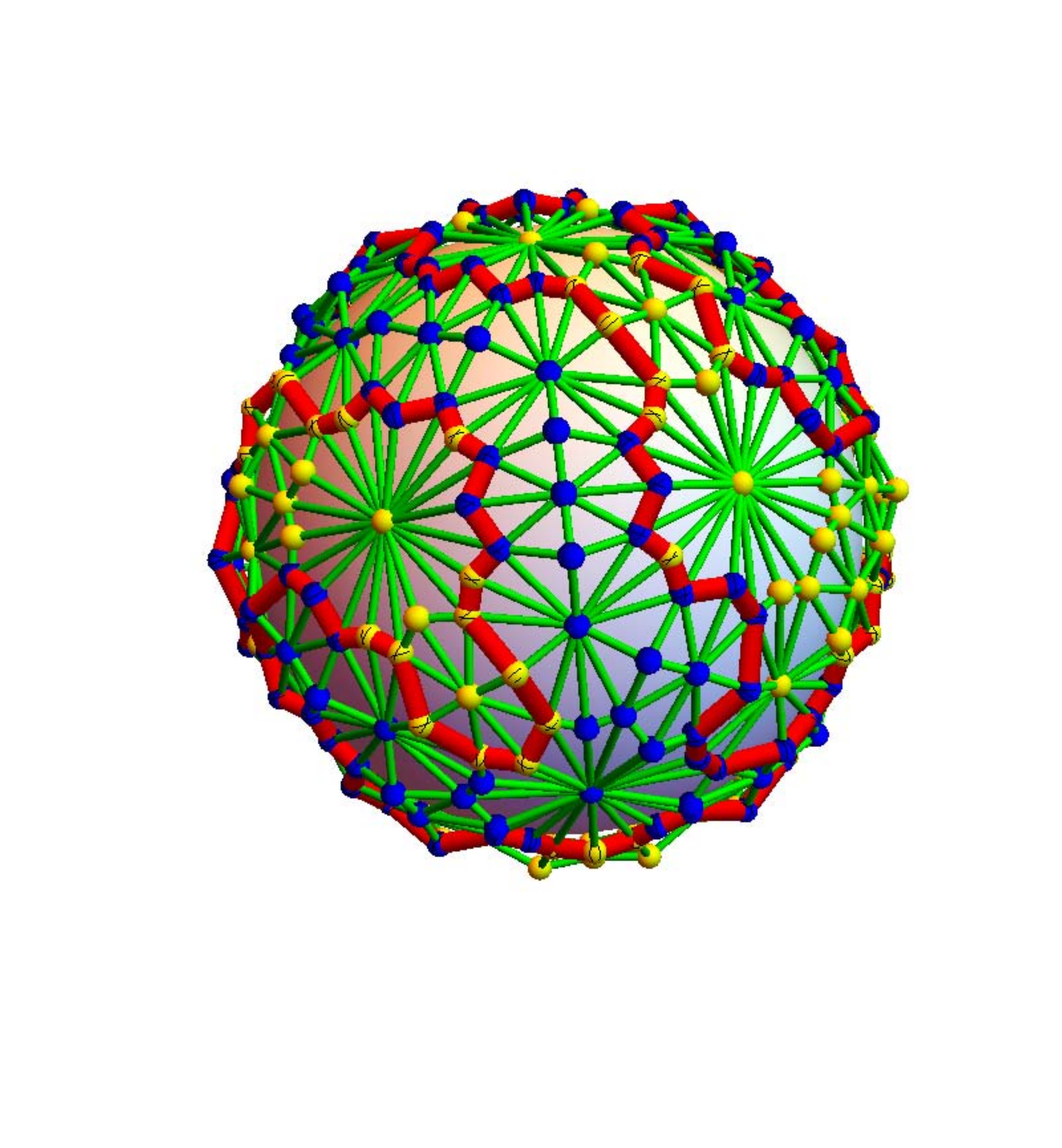}}
\scalebox{0.2}{\includegraphics{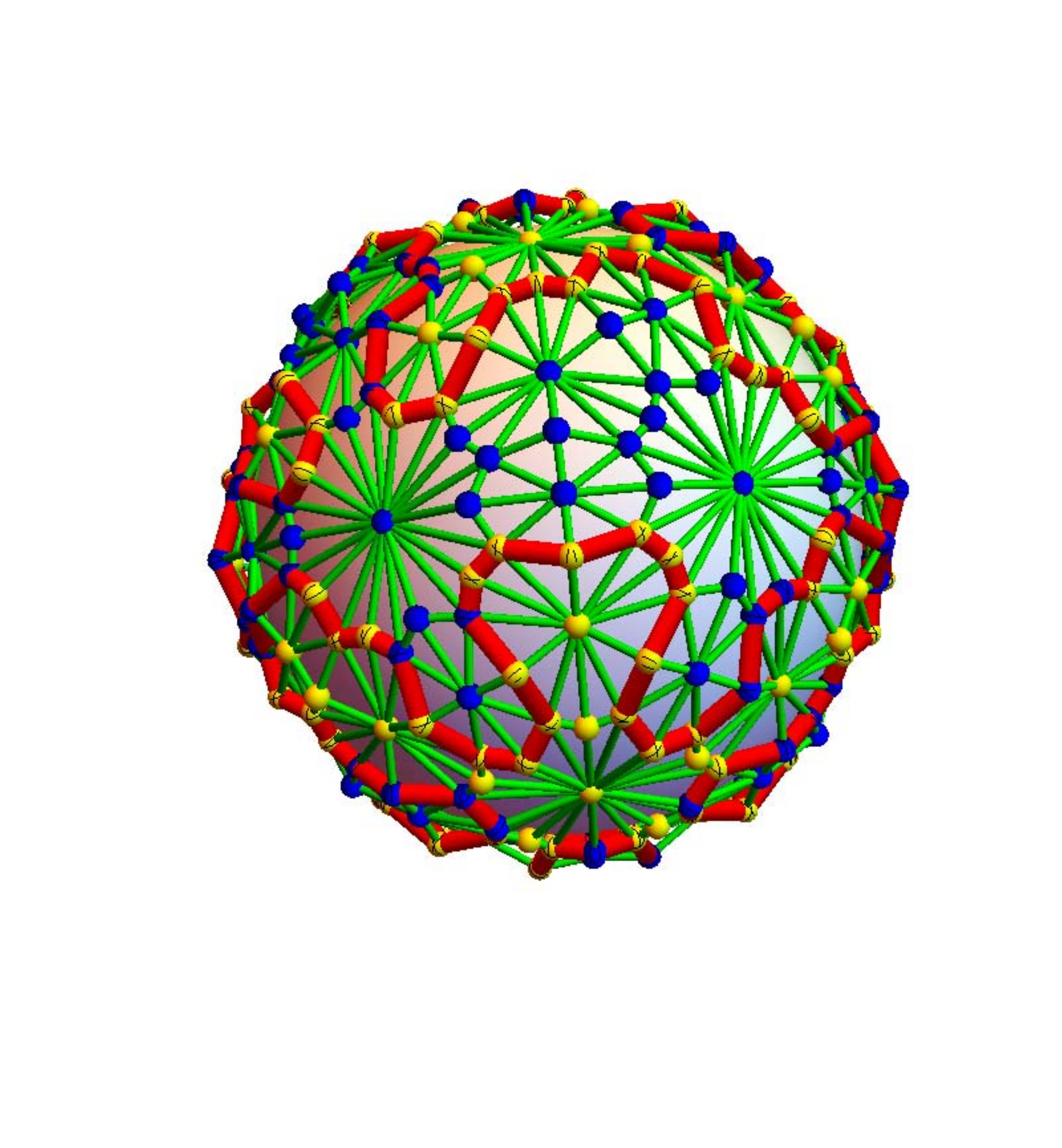}}
\scalebox{0.2}{\includegraphics{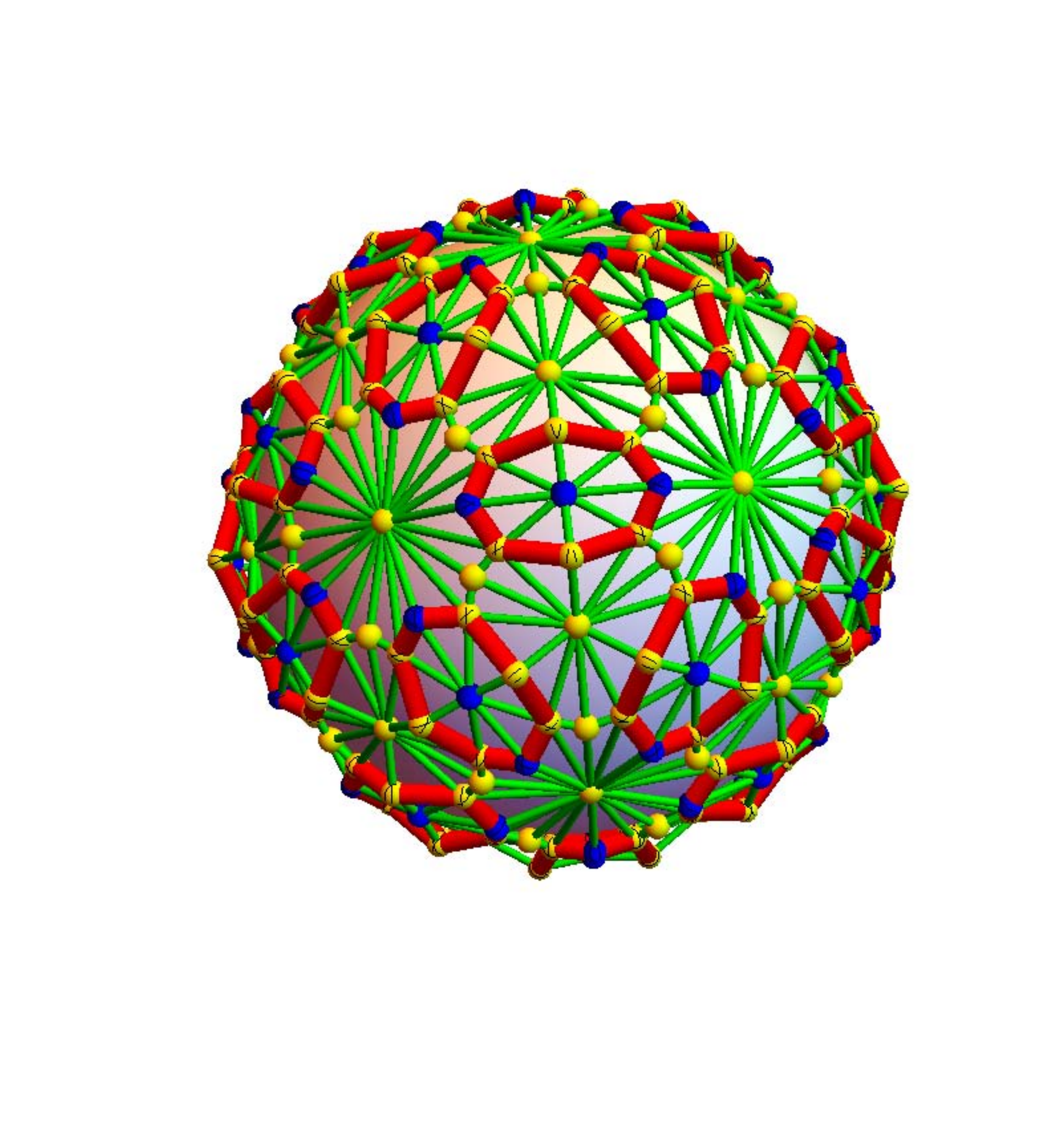}}
\scalebox{0.2}{\includegraphics{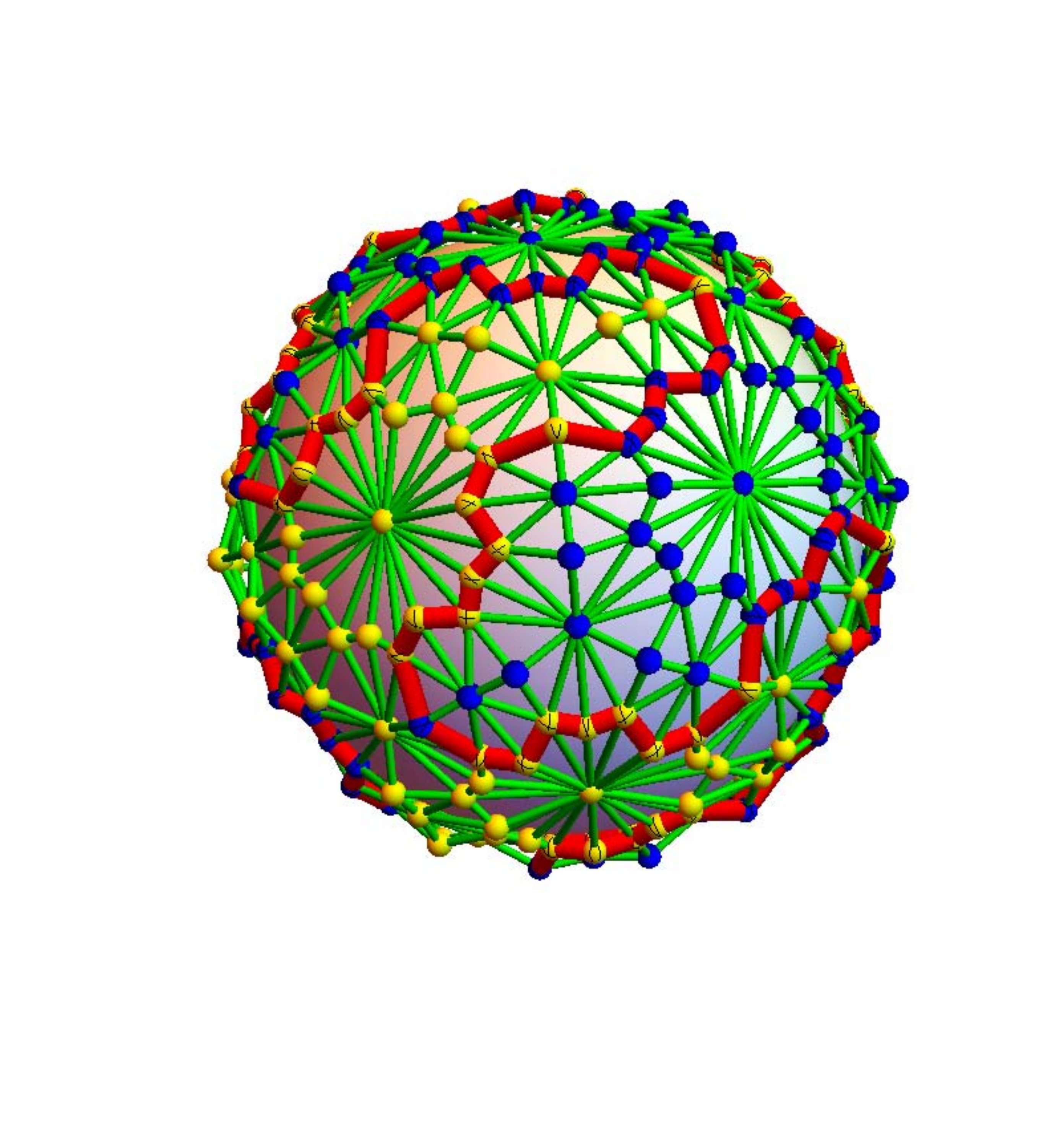}}
\scalebox{0.2}{\includegraphics{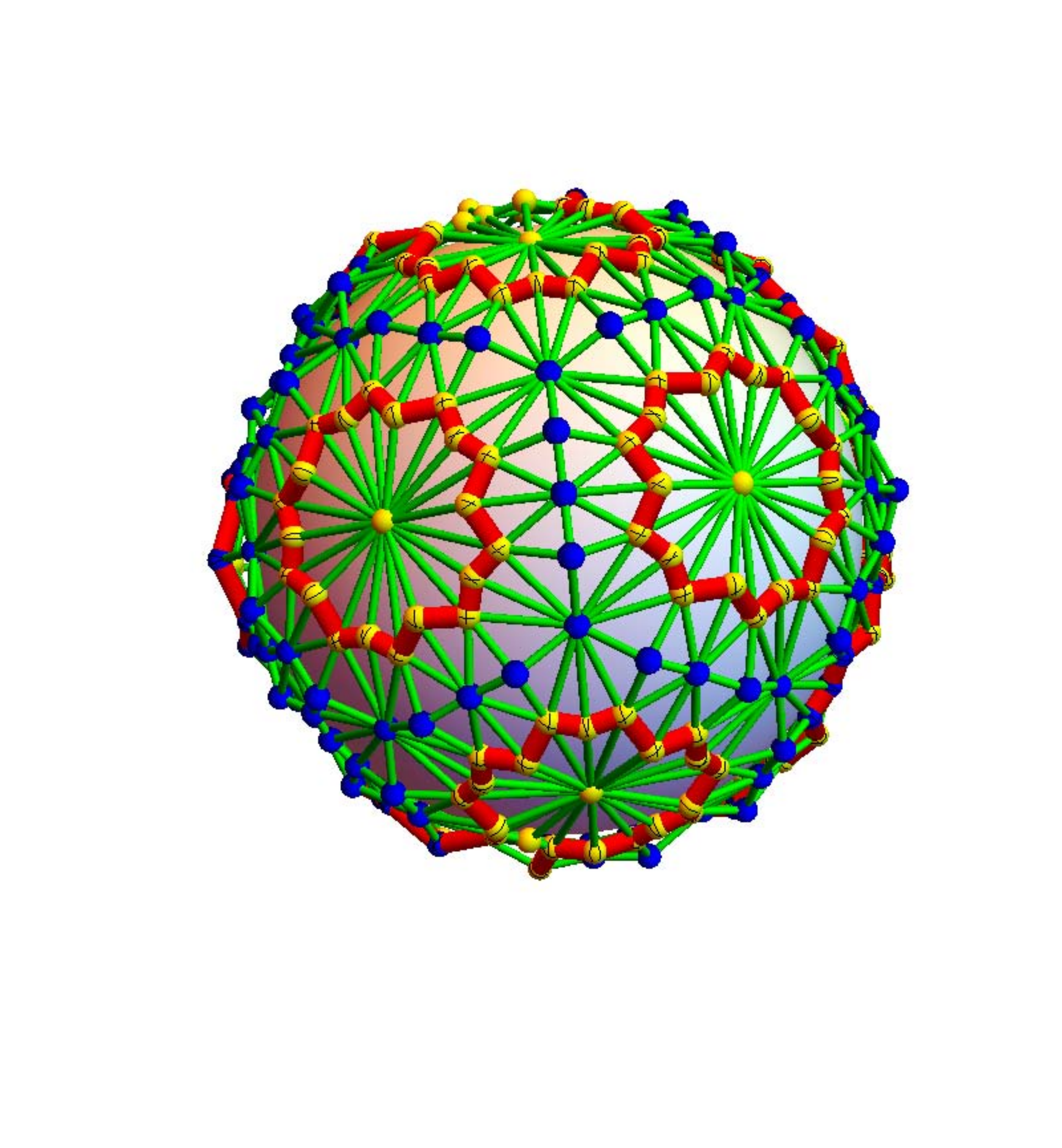}}
\caption{
Level surfaces of eigenfunctions of the Laplacian on a graph $G_1$,
where $G$ is the icosahedron. The Courant-Fiedler nodal theorem tells
that the number of positive nodal regions of $\lambda_n$ is $\leq n$.
It is confirmed in the pictures seen. The first picture shows the second
eigenvector. The first one is constant and has only one region.
}
\end{figure}

\begin{figure}[ph]
\scalebox{0.2}{\includegraphics{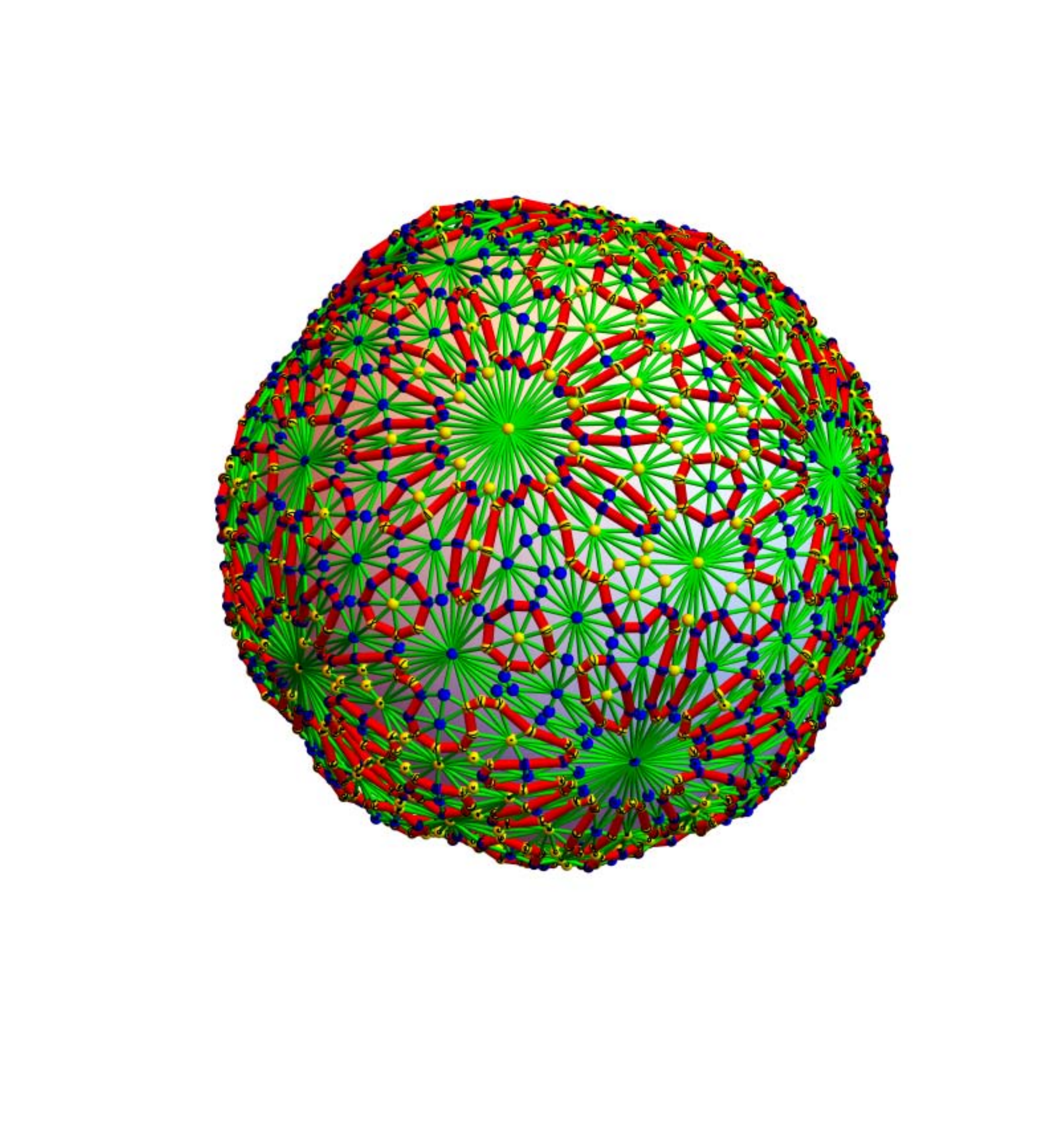}}
\scalebox{0.2}{\includegraphics{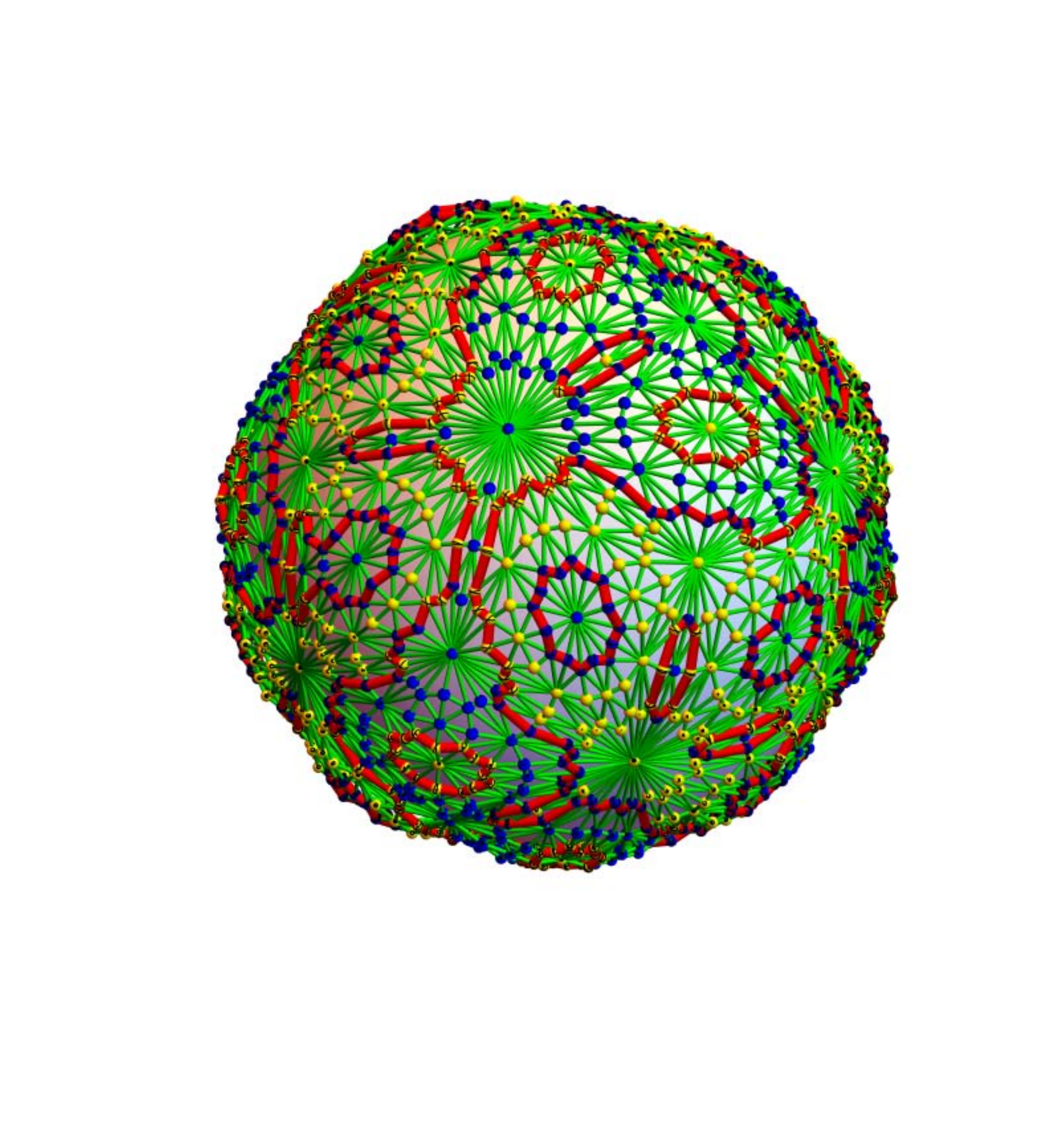}}
\scalebox{0.2}{\includegraphics{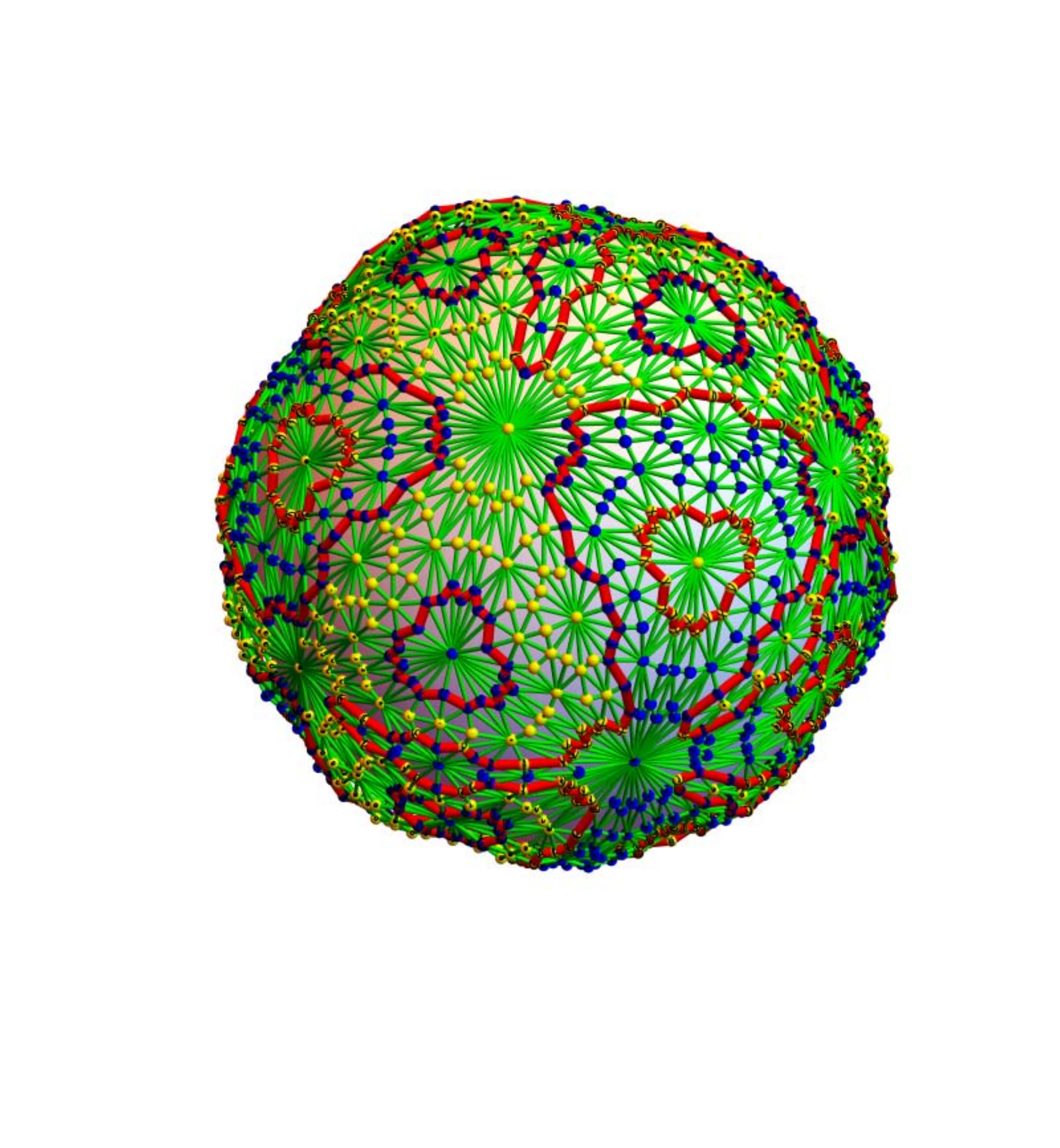}}
\scalebox{0.2}{\includegraphics{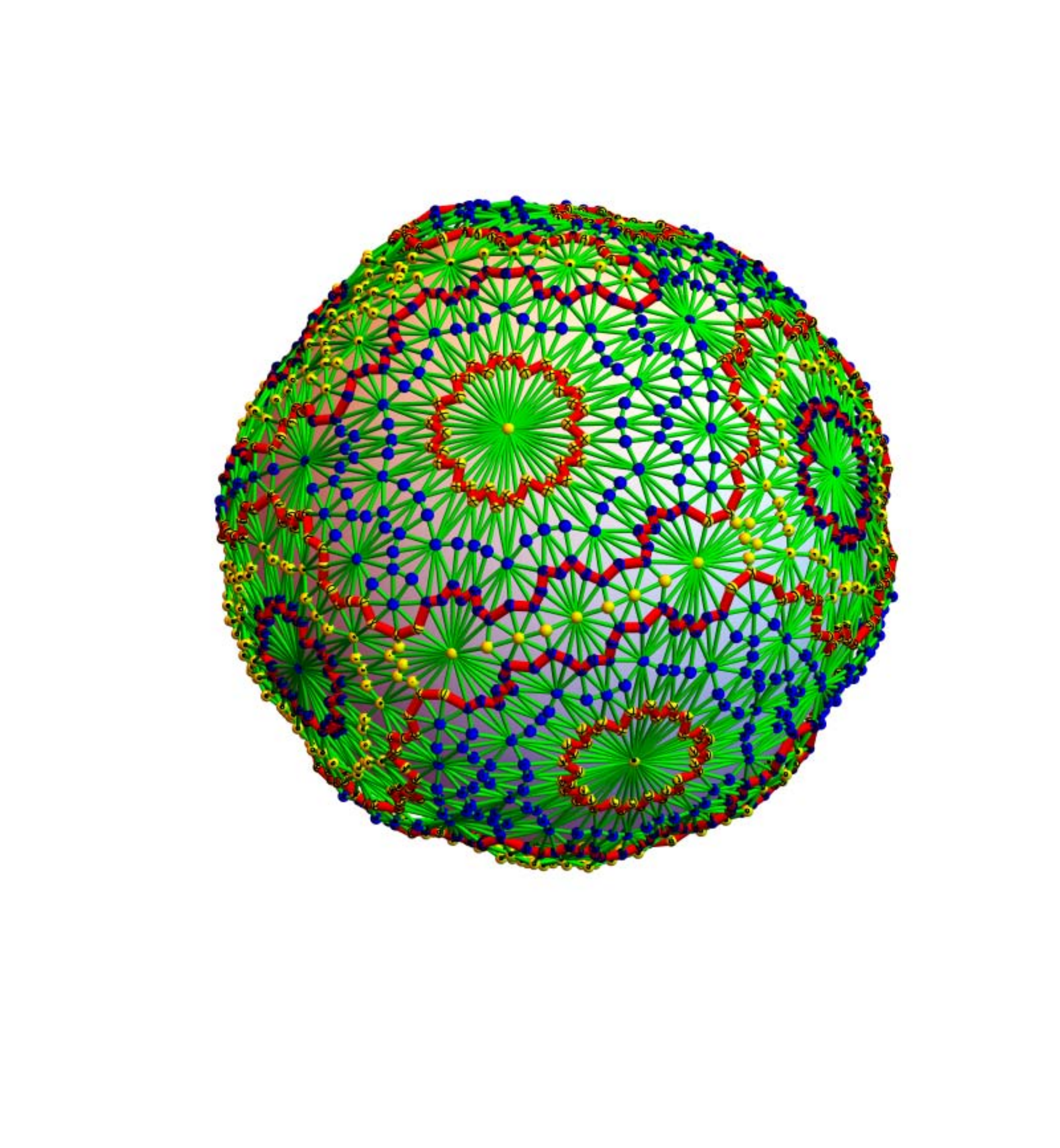}}
\scalebox{0.2}{\includegraphics{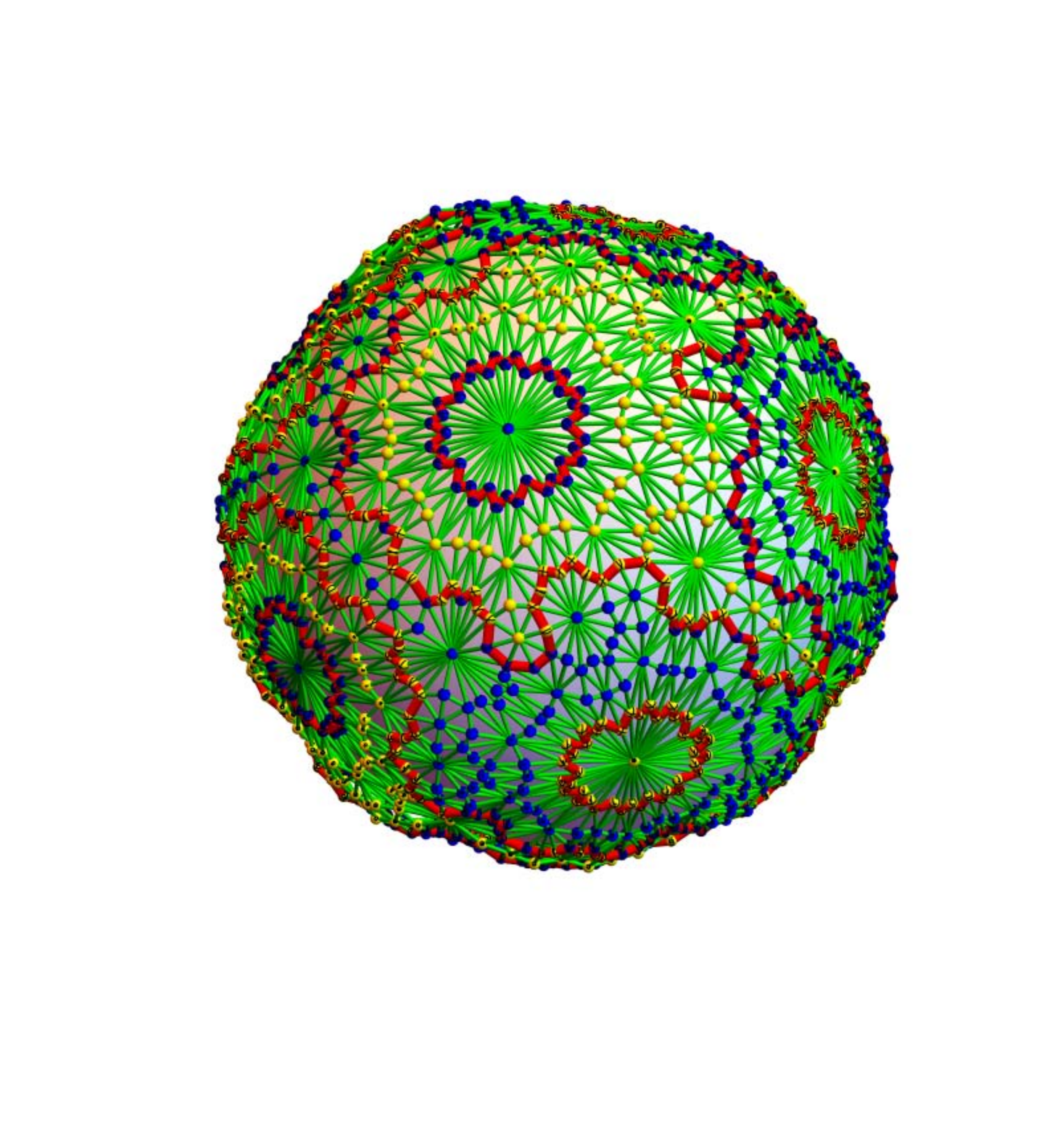}}
\scalebox{0.2}{\includegraphics{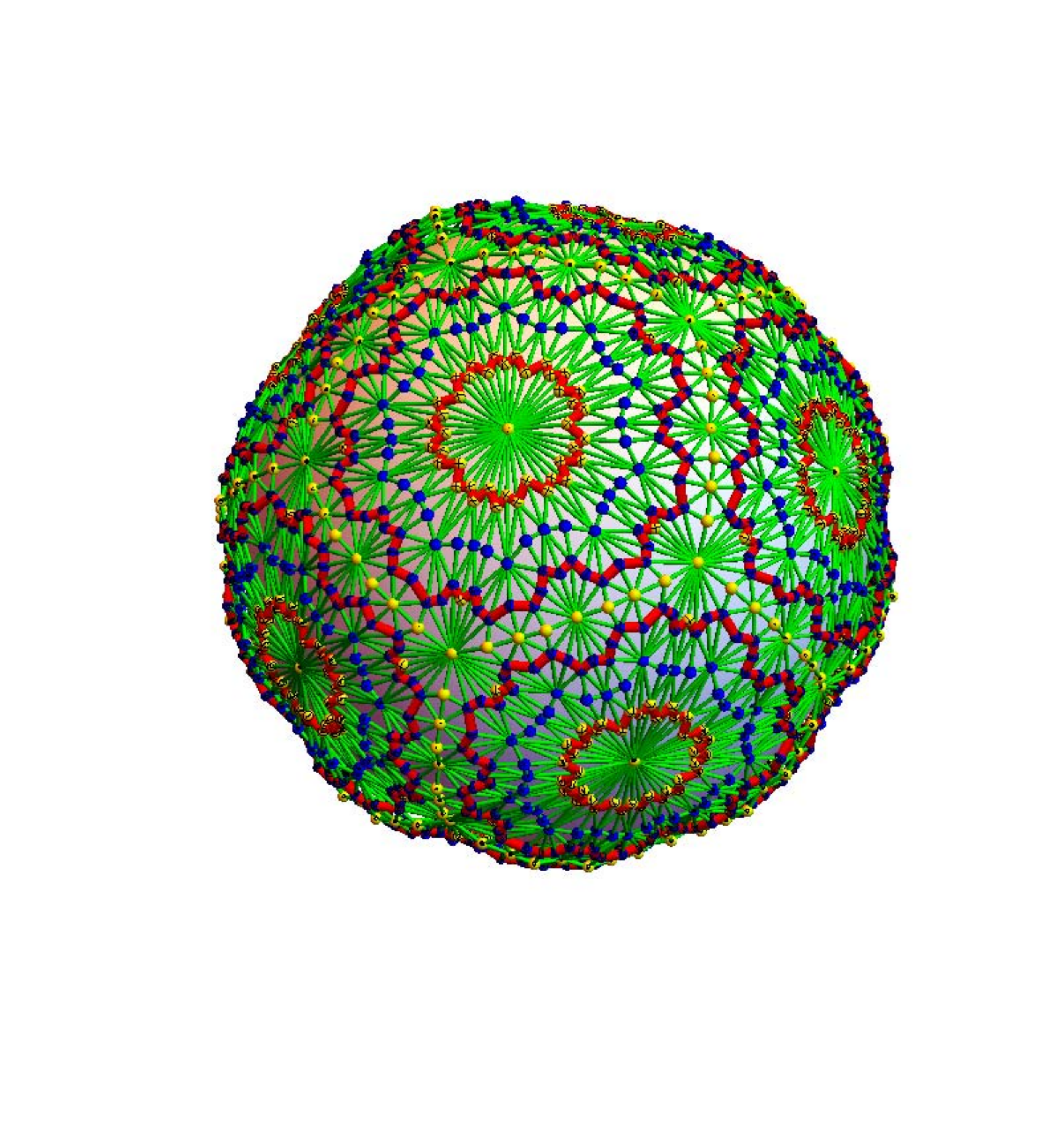}}
\caption{
Level surfaces of eigenfunctions of the Laplacian on $G_2$, where
$G$ is the icosahedron. Also this illustrates the Courant-Fielder theorem.
The first picture shows the second
eigenvector for which the surface $Z_2$ divides the graph in to two regions. 
The first one is constant and has only one region. 
}
\end{figure}

\section{Summary}

A $d$-graph is a finite simple graph for which every unit sphere is a $(d-1)$ sphere. 
Given $k$ locally injective real-valued functions $f_1,\dots,f_k$ on the
vertex set of a $d$-graph, we can define the zero locus $Z=\{\vec{F}=\vec{0}\}$ of $F$
in $G$ as the graph $(V,E)$ with vertex set $V$ consisting of complete subgraphs in $G$, on 
which all the functions $f_j$ change sign and for which two vertices are connected 
if one is a subgraph of the other. The graph $Z$ is a subgraph of the barycentric refinement $G_1$
but like varieties in the continuum, the graph might be singular and not be a $(d-k)$-graph.
We can define a discrete tangent bundle on $G_1$ so that the projection of $\nabla f$ 
in $T_xG = Z_2^d$ to $T_xH$ is zero, establishing 
the graph analogue of ``gradients are perpendicular to level surfaces". 
If every unit sphere $S(x)$ in $G_1$ has the property that 
$Z \cap S(x)$ is a $(d-k-1)$-sphere in the $(d-1)$-sphere $S(x)$, 
then $F=c$ is a $(d-k)$-graph $H$. The later plays the role of 
$\{f_1=c_1, \dots =f_k=c_k \}$ for $k$ differentiable functions $f$
on a smooth manifold $M$ for which the maximal rank condition
${\rm rank}(dF)(x)=d-k$ for $F=(f_1,\dots,f_k), x \in Z$ is satisfied.
Discrete Lagrange equations help to maximize or minimize $f$ under a constraint $g=c$: as in the continuum,
critical points need parallel gradients $\nabla f, \nabla g$. \\

If the {\bf ordered zero locus} is defined by looking at hypersurfaces on 
successive barycentric refinements, we never run into singularities if the 
functions are strongly locally injective in the sense that the 
function values of $f_j$ on the vertex set are 
rationally independent on each complete subgraph.
The {\bf Sard theorem} states then that the graph $\{f_1=c_1,\dots,f_{k}=c_{k} \}$ is a $(d-k)$-graph inside
the $d$-graph $G_k$ for all $\vec{c}$ not in a finite set. An application is the observation
that any $d$-dimensional projective algebraic set admits an approximation by manifolds with
a triangulation which is a minimally $(d+1)$-colorable graph,
using an explicitly constructed $d$-graph determined by the equations defining $V$. \\

The possibility to define level surfaces in discrete setups can be illustrated in the case of
eigenfunctions of the Laplacian. Pioneered by Chladni, the geometry of nodal surfaces is important
for the physics of networks. Of special interest is the {\bf ground state} $f_2$ of 
a compact Riemannian manifolds or finite graph. The nodal surface $f_2=0$ is a $(d-1)$-dimensional 
hypersurface and the double nodal surface $f_2=0,f_3=0$ is generically a codimension $2$ manifold. 
If the manifold or graph is a 3-sphere, the nodal surface is two dimensional and the double nodal surface 
is a collection of closed curves in $S^3$. We asked here whether $f_2=0$ have positive genus and whether
$f_2=0,f_3=0$ can be knotted. Graph theory allows to investigate this experimentally. The answers 
for Riemannian manifolds and graphs are expected to be the same. \\

Lets look at some {\bf multi-variable calculus terminology} in a $2$-graph
translated to graph theory: \\

\begin{tabular}{ll} \hline
Critical point $x$      &    $S^-_f(x)$ is not contractible \\
Discriminant $D$        &    Poincar\'e-Hopf index $i_f(x)=1-\chi(S^-_f(x))$ \\
$D>0,f_{xx}<0$          &    $i_f(x)=1$, $S^-_f(x) = S_f(x)$ \\
$D>0,f_{xx}>0$          &    $i_f(x)=1$, $S^-_f(x) = \emptyset$ \\
$D<0$                   &    $i_f(x)<0$                  \\
$D=0$                   &    not locally injective  \\
Level curve $f(x,y)=0$  &    zero locus $f=0$ in $G_1$. \\
$T_xM$                  &    maximal simplex $t$ in $G$ containing $x$ \\
Tangent vector $T_xM$   &    $\nabla f = {\rm sign}(f(y)-f(x)) \; | \; y \in t \}$  \\
Lagrange equations      &    $\nabla f = \lambda \nabla g$ or $\nabla g=(0,0)$\\ \hline
\end{tabular}

\begin{figure}[ph]
\scalebox{0.25}{\includegraphics{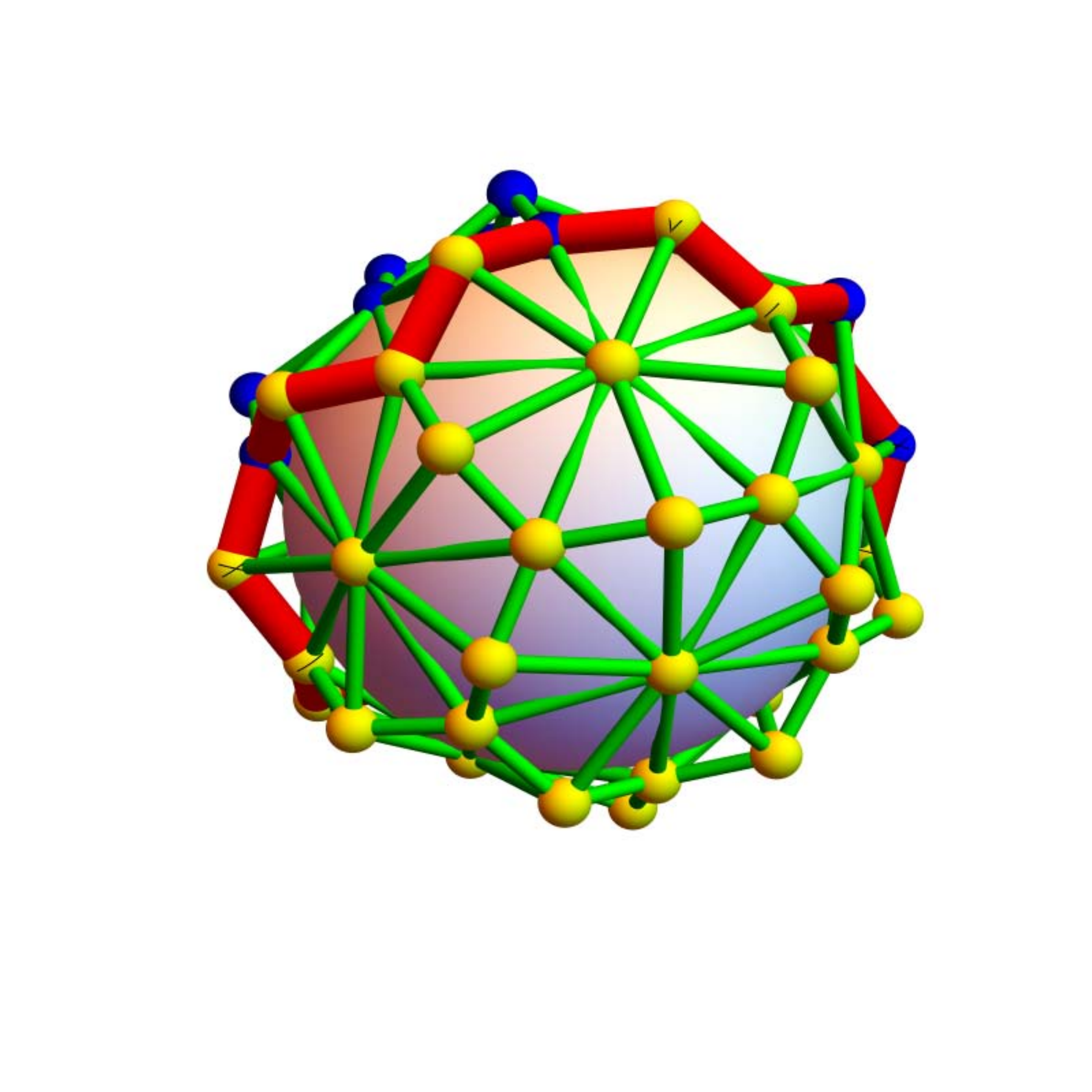}}
\caption{
The nodal hypersurface $\{ f=0 \}$ of the ground state. The length of the
surface is $20$, the two nodal domains have $30$ triangles each. The Cheeger number
(see \cite{Chung2007} for a discrete version), $c(f) = |C(f)|/{\rm min}(|A(f)| |B(f)|)$ 
satsifies the Cheeger inequality. 
%The ground state energy of the icosahedron graph $G$ is $2.763..$
%and of the refinement $G_1$, it is $0.78$. 
}
\end{figure}

\begin{figure}[ph]
\scalebox{0.25}{\includegraphics{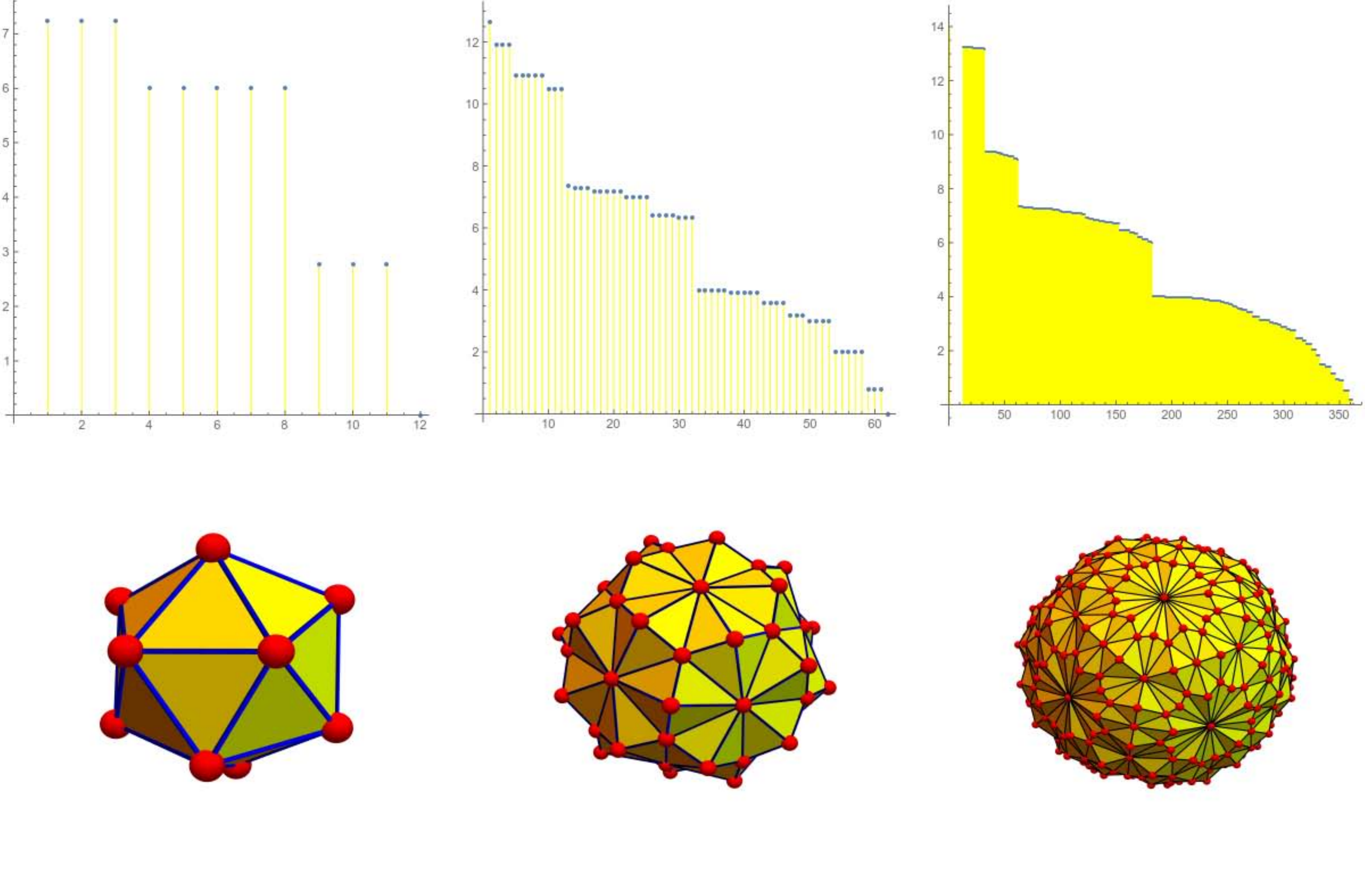}}
\caption{
The spectrum of refinements $G,G_1,G_2,G_3$  of the icosahedron graph $G$.
We noticed that the spectral integrated density of states
converges to a function which only depends on the dimension of the maximal
complete subgraph. While for graphs
without triangles, the convergence of $F_G(x) = \lambda_{[k x]}$
to the limiting function $4 \sin^2(\pi x)$ is immediate, in higher dimensions,
the limiting function appears to be non smooth. See \cite{KnillBarycentric}.
}
\end{figure}

\section{Questions}

A) In the commutative case, where $F=c$ is a subgraph of the barycentric refinement $G_1$,
the set of critical values can have positive measure.
It would be nice to find an upper bound on the measure of constants $c$ for which 
the graph $\{ F=c \} \subset G_1$ can be singular.  \\
B) In the context of graph colorings, the following question is analog to the
{\bf Nash embedding problem:}
Is every $d$-graph a subgraph of a barycentric refinement $G_n$ of some geometric $d$-graph? 
If the answer were yes, we would have a bound $d+1$ for the chromatic number of $G$.
A special case $d=3$ would prove the $4$-color theorem. We approached 
this problem by writing the sphere as embedded in a
$3$ sphere then making homotopy deformations to render the sphere Eulerian and so $4$ colorable
coloring in turn the embedded sphere.
While a Whitney embedding of a graph is possible if we aim for a homeomorphic image, we don't have
yet a sharp discrete analogue of the classical Whitney embedding theorem, fixing the dimension.
Realizing a graph as a subgraph of a product of linear
graphs is analogue to an isometric embedding of a Riemannian manifold in some Euclidean space
is a discrete Nash problem.  \\
C) If $f,g$ are two strongly locally injective function on some $d$-graph. Under which conditions
is it true that the level set $g=d$ in $f=c$ is topologically equivalent to the level set $g=d$ in $f=c$?
Random examples in a 3-sphere show that the answer is no in general. It might therefore be possible
that $f_2=0$ in $f_3=0$ is different that $f_3=0$ in $f_2=0$ but we have not yet an example, where
this difference takes place. \\
D) Assume $G$ is a $3$-sphere and the ground state $f_1$ (eigenvector to the smallest positive eigenvalue)
does not take the value $0$. Is the nodal hyper surface $\{f=0\}$ always a $2$-sphere? 
The case where $0$ is in the image of the ground state $f_1$ only appears in very rare cases. As long
as the $0$'s are isolated, we can randomly change the value on such places and not change
the topology of the nodal surface $\{ f = 0 \}$. 

\bibliographystyle{plain}

\end{document}